\newtheorem{theorem}{Theorem}
\newtheorem{lemma}{Lemma}
\newtheorem{remark}{Remark}
\newtheorem{definition}{Definition}
\newenvironment{proof}{{\sc Proof. }}{\hfill$\Box$\vspace{0.1in}}
\def\mcA{\mathcal{A}}
\title{Maximizing social welfare among EF1 allocations at the presence of two types of agents\footnote{A shorter version appears in {\em Proceedings of ISAAC 2025}.}}
\author{
Jiaxuan~Ma\thanks{Department of Mathematics, Hangzhou Dianzi University. Hangzhou, China.
	Emails: \texttt{\{221070019, chenyong, anzhang\} @hdu.edu.cn}}
\and
Yong~Chen$^\dagger$\thanks{Corresponding authors.}
\and
Guangting~Chen\thanks{Zhejiang University of Water Resources and Electric Power. Hangzhou, China.
	Email: \texttt{gtchen@hdu.edu.cn}}
\and
Mingyang~Gong\thanks{Department of Computing Science, University of Alberta.  Edmonton, Canada.
	Emails: \texttt{\{mgong4, guohui\} @ualberta.ca}}
\and
Guohui~Lin$^{\P,\ddagger}$
\and
An~Zhang$^\dagger$
}
\date{}
\begin{document}
\maketitle
\begin{abstract}
We study the fair allocation of indivisible items to $n$ agents to maximize the utilitarian social welfare,
where the fairness criterion is envy-free up to one item and there are only two different utility functions shared by the agents.
We present a $2$-approximation algorithm when the two utility functions are normalized,
improving the previous best ratio of $16 \sqrt{n}$ shown for general normalized utility functions;
thus this constant ratio approximation algorithm confirms the APX-completeness in this special case previously shown APX-hard.
When there are only three agents, i.e., $n = 3$, the previous best ratio is $3$ shown for general utility functions,
and we present an improved and tight $\frac 53$-approximation algorithm when the two utility functions are normalized,
and a best possible and tight $2$-approximation algorithm when the two utility functions are unnormalized.

\paragraph{Keywords:}
Fair allocation; utilitarian social welfare; envy-free up to one item; envy-cycle elimination; round robin; approximation algorithm 
\end{abstract}

\section*{Acknowledgments}
The research is supported by the NSERC Canada,
the NNSF of China (Grants 12471301, 12371316),
the PNSF of Zhejiang China (Grant LZ25A010001), and
the China Scholarship Council (Grant No. 202508330173).

\section{Introduction}
We study a special case of the fair allocation of indivisible items to a number of agents to maximize their utilitarian social welfare,
where the agents fall into two types and the fairness criterion is envy-free up to one item.
We present a set of approximation algorithms that each improves the previous best one when reduced to this special case, or is the best possible for this special case.

Throughout this paper, we denote by $\{a_1, a_2, \ldots, a_n\}$ the set of $n$ agents and by $M = \{g_1, g_2, \ldots, g_m\}$ the set of $m$ indivisible items or goods.
An {\em allocation} of items is an $n$-tuple $\mcA = (A_1, A_2, \ldots, A_n)$,
where $A_i$ is the pair-wise disjoint bundle/subset of items allocated/assigned to $a_i$.
The allocation is {\em complete} if $\cup_{i=1}^n A_i = M$, i.e., every item is allocated, or otherwise it is {\em partial}.
Each agent $a_i$ has a non-negative {\em additive} {\em utility} (or {\em valuation}) function $u_i: M \to \mathbb{R}_{\ge 0}$.
Extending to $2^M$, for any subset $S \subseteq M$, $u_i(S) = \sum_{g \in S} u_i(g)$.
If $u_i(M) = 1$ for all $i$, then the utility functions are said {\em normalized};
otherwise, they are {\em unnormalized}.
The {\em utilitarian social welfare} of the allocation $\mcA = (A_1, A_2, \ldots, A_n)$ is defined as $\sum_{i=1}^n u_i(A_i)$.

The allocation $\mcA = (A_1, A_2, \ldots, A_n)$ is {\em envy-free up to one item} (EF1), if for any $i \ne j$, 
$u_i(A_i) \ge u_i(A_j \setminus g)$ holds for some item $g \in A_j$.
Between two agents $a_i$ and $a_j$, if $u_i(A_i) \ge u_i(A_j)$, then $a_i$ does not envy $a_j$;
otherwise, $a_i$ envies $a_j$.
Similarly, if $u_i(A_i) \ge u_i(A_j \setminus g)$ for some $g \in A_j$, then $a_i$ does not {\em strongly} envy $a_j$;
otherwise, $a_i$ strongly envies $a_j$.

In this paper, we study the optimization problem to find an EF1 allocation with the maximum utilitarian social welfare, denoted as USW-EF1,
and we focus on the special case where there are only two types of agents, or equivalently speaking only two distinct utility functions shared by the agents.
W.l.o.g., we assume that the first $j$ agents $a_1, a_2, \ldots, a_j$ use the utility function $u_1$, referred to as the first type of agents,
and the last $n-j$ agents $a_{j+1}, a_{j+2}, \ldots, a_n$ use the second utility function $u_n$, for some $1 \le j < n$.
Let $\mcA$ be the allocation produced by an approximation algorithm and $\mcA^*$ be an optimal allocation.
We denote by $SOL = \sum_{i=1}^n u_i(A_i)$ and $OPT = \sum_{i=1}^n u_i(A^*_i)$ their objective values, respectively.
The algorithm is an $\alpha$-approximation if the ratio $OPT / SOL \le \alpha$ for all instances.
We remark that this two-types-of-agents special case is important for several reasons,
for example, most hardness and inapproximability results are proven for this special case,
and this special case also models some real application scenarios such as individual versus business investors and faculty members versus administration staff.

\subsection{Related work}
{\em Discrete fair division} aims to allocate a set of indivisible items to a set of agents so that
all the agents feel ``fair'' with respect to their heterogeneous preferences over the items.
Typical fairness criteria include {\em envy-freeness} (EF)~\cite{Var74} and {\em proportionality}~\cite{Ste49}.
By an EF allocation, each agent $a_i$ does not envy any other agent $a_j$.
Since an EF allocation does not always exist, one relaxation of EF is envy-freeness up to one item (EF1) proposed in \cite{LMM04,B11}.
Formally speaking, for each pair of agents $a_i, a_j$, agent $a_i$ does not envy $a_j$ after the removal of {\em some} item from the bundle allocated to $a_j$.
Replacing the quantifier ``for some'' by ``for any'' leads to another more restricted relaxation {\em envy-freeness up to any item} (EFX)~\cite{CKM19}.

A complete EF1 allocation can be computed in polynomial time by the {\em envy-cycle elimination} (ECE) algorithm proposed by Lipton et al.~\cite{LMM04} or
by the {\em round-robin} (RR) algorithm proposed by Caragiannis et al.~\cite{CKM19}.
In brief, in the ECE algorithm, an {\em envy digraph} is constructed on the agents where agent $a_i$ {\em points} to agent $a_j$ (i.e., the arc $(a_i, a_j)$ exists) if $a_i$ envies $a_j$;
whenever there is a cycle in the digraph, the bundles allocated to the agents on the cycle are rotated to eliminate the cycle;
afterwards the algorithm assigns an unallocated item to an agent not envied by anyone else, and the iteration goes on.
In the RR algorithm, the agents are ordered cyclically and the agent at the head of the order picks its most preferred item from
the pool of unassigned items and then lines up at the back of the order, until no item is left.

Besides the existence of EF1 allocations,
the price of such fair allocations is an interesting concern in economics and social science.
Given that the utilitarian social welfare measures the {\em efficiency} of an allocation,
the {\em price of EF1} is defined as the ratio between the maximum utilitarian social welfare across all allocations and the maximum utilitarian social welfare among only EF1 allocations.
The price of EF1 for normalized utility functions is $\Theta(\sqrt{n})$:
Barman et al.~\cite{BBS20} first showed that the price is at most $O(\sqrt{n})$,
and later Bei et al.~\cite{BLM21} showed that the price is at least $\Omega(\sqrt{n})$.
The price of EF1 for unnormalized utility functions is $n$:
Barman et al.~\cite{BBS20} showed that the price is $\Theta(n)$, and later Bu et al.~\cite{BLL25} constructed an instance to show that it is at least $n$.

For the optimization problem USW-EF1,
Barman et al.~\cite{BBS20} proved that it is already NP-hard for two agents even when the valuation function of an agent is a scaled version of that of the other,
and it is NP-hard to approximate within a factor of $m^{1-\epsilon}$ for any $\epsilon > 0$ for $n$ agents and $m$ items
when both $n$ and $m$ are part of the input.
When the utility functions are normalized, the problem remains NP-hard for $n$ agents when $n \ge 2$ is a fixed integer~\cite{AHM23,BLL25}.
Moreover, Bu et al.~\cite{BLL25} showed several hardness results when $n \ge 3$ that,
firstly the problem is NP-hard to approximate within the factor $\frac {4n}{3n+1}$ for normalized functions
when one agent uses a utility function and the other agents use a common second utility function;
secondly it is NP-hard to approximate within the factor $\frac {1 + \sqrt{4n-3}}2$ for unnormalized functions
when one agent uses a utility function, two other agents use a common second utility function, and all the other agents (if any) use a common third utility function;
and thirdly it is NP-hard to approximate within the factor $m^{\frac 12 - \epsilon}$ or within the factor $n^{\frac 13 - \epsilon}$, for any $\epsilon > 0$,
for normalized utility functions when both $n$ and $m$ are part of the input.

On the positive side, Barman et al.~\cite{BBS20} presented a $16\sqrt{n}$-approximation algorithm when $n$ is a fixed integer and the utility functions are normalized.
Aziz et al.~\cite{AHM23} proposed a pseudo-polynomial time exact algorithm for the same variant.
Bu et al.~\cite{BLL25} gave a {\em fully polynomial-time approximation scheme} (FPTAS) for two agents and
an $n$-approximation algorithm for $n \ge 3$ agents with unnormalized utility functions.

We remark that we reviewed in the above only those directly related work, but not the entire body of the literature on fair division of indivisible goods.
For example, the Nash social welfare (NSW) objective, the geometric mean of the agents' utilities, has been extensively studied,
and it is known that an allocation which maximizes NSW is EF1~\cite{CKM19} though approximate solutions are not necessarily EF1.
One may refer to \cite{ABF22,AAB23} for an excellent survey on recent progress and open questions.

\subsection{Our contribution and organization of the paper}
In this paper, we aim to design approximation algorithms for the special case of USW-EF1 where there are only two distinct utility functions shared by the agents,
i.e., there are two types of agents.
Note that when all agents share the same utility function, a complete EF1 allocation returned by the ECE or RR algorithm is optimal.
Also, given that the above two lower bounds on the approximation ratios~\cite{BLL25} are proven for two or three utility functions,
our study on the special case may shed lights on the general case.
In particular, we demonstrate the use of the {\em item preference order} in all our three approximation algorithms.

We first present in the next section a $2$-approximation algorithm for any number of agents with normalized utility functions.
Then, in Section 3, we present a tight $2$-approximation algorithm for three agents with unnormalized utility functions,
which is the {\em best possible} by the lower bound $\frac {1 + \sqrt{4n-3}}2$~\cite{BLL25}.
Section 4 contains an improved and tight $\frac 53$-approximation algorithm for three agents with normalized utility functions.
We conclude our paper in Section 5.

In all our three algorithms, the items are firstly sorted, and thus we assume w.l.o.g. that they are given, in the non-increasing order of their {\em preferences},
where the preference of an item $g$ is defined as $u_1(g)/u_n(g)$ (Definition~\ref{def01}) that measures the extent of preference of the first type of agents over the second type of agents.
Intuitively, in an efficient allocation, items at the front of this order should be allocated to the first type of agents,
while items at the back should be allocated to the second type of agents.
This is exactly the main design idea.

\section{A $2$-approximation for normalized functions}
Recall that agents $a_1, \ldots, a_j$ share the first utility function $u_1(\cdot)$ and agents $a_{j+1}, \ldots, a_n$ share the second utility function $u_n(\cdot)$.
For each item $g \in M$, we assume w.l.o.g. that its two values $u_1(g), u_n(g)$ are not both $0$.

\begin{definition}
\label{def01}
The {\em preference} of item $g$ is defined as $\rho(g) = u_1(g)/u_n(g)$, where if $u_n(g)=0$ then $\rho(g)=\infty$.

Extending to a non-empty set of items $S \subseteq M$, the preference of $S$ is defined as $\rho(S) = u_1(S) / u_n(S)$, where if $u_n(S) = 0$ then $\rho(S) = \infty$.

Specially, the preference of an empty set is $\pm\infty$, indicating that it is less than but also greater than any real value.
\end{definition}

We assume w.l.o.g. that $\rho(g_1) \ge \rho(g_2) \ge \ldots \ge \rho(g_m)$, i.e., the items are given in the {\em non-decreasing preference} order.
Our algorithm, denoted as {\sc Approx1}, uses two variables $k_1$ and $k_2$ to store the smallest and the largest indices of the unassigned items,
which are initialized to $1$ and $m$, respectively.
In each iteration, the algorithm finds the agent $a_s$ ($a_t$, resp.) having the minimum utility among the first (second, resp.) type of agents.
({\em Comment: These two agents $a_s$ and $a_t$ will be proven not to envy each other.})
If $a_s$ is not envied by $a_t$, then $a_s$ takes the item $g_{k_1}$ and $k_1$ is incremented by $1$;
otherwise, $a_t$ is not envied by $a_s$, $a_t$ takes $g_{k_2}$ and $k_2$ is decremented by $1$.
The algorithm terminates when all items are assigned (i.e., $k_1 > k_2$), of which a high-level description is depicted in Algorithm~\ref{Approx1}.

\begin{algorithm}
\caption{{\sc Approx1} for normalized utility functions}
\label{Approx1}
{\bf Input:} $n \ge 3$ agents of two types and a set of $m$ indivisible items $\rho(g_1) \ge \rho(g_2) \ge \ldots \ge \rho(g_m)$.

{\bf Output:} A complete EF1 allocation.
\begin{algorithmic}[1]
\State Initialize $k_1=1$, $k_2=m$, and $A_i = \emptyset$ for every agent $a_i$;

\While {($k_1 \le k_2$)}
\State find $s = \arg \min_{i=1}^j u_1(A_i)$ and $t = \arg \min_{i=j+1}^n u_n(A_i)$;

\If {($a_s$ is not envied by $a_t$)}
\State $A_s = A_s \cup \{g_{k_1}\}$ and $k_1 = k_1 + 1$;
\Else
\State $A_t = A_t \cup \{g_{k_2}\}$ and $k_2 = k_2 - 1$;
\EndIf
\EndWhile

\State return the final allocation.
\end{algorithmic}
\end{algorithm}

We next prove that inside each iteration, the found two agents $a_s$ and $a_t$ do not envy each other, and thus {\sc Approx1} produces a complete EF1 allocation.
To this purpose, we introduce the following definition.

\begin{definition}
\label{def02}
For two item sets $A, B \subseteq M$,
if $\min_{g \in A\setminus B} \rho(g) \ge \max_{g \in B \setminus A} \rho(g)$, then we say $A$ {\em precedes} $B$ and denote it as $A \prec B$.
That is, excluding the common items, every item of $A$ (if any) comes before any item of $B$ (if any) in the item preference order.

An allocation $\mcA$ is {\em good} if $A_s \prec A_t$ for every pair $(s, t)$ with $s \le j < t$.
({\em Comment: Since $A_s \cap A_t = \emptyset$, this implies $\rho(A_s) \ge \rho(A_t)$.})

Recall that $\rho(\emptyset) = \pm\infty$, if $A_s = \emptyset$ then both $A_s \prec A_t$ and $A_t \prec A_s$ hold.
\end{definition}

\begin{lemma}
\label{lemma03}
Given a good allocation $\mcA$, two agents of different types do not mutually envy each other.
\end{lemma}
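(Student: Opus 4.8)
The plan is to fix a pair of agents $a_s, a_t$ of different types, say $s \le j < t$, and show that $a_s$ does not envy $a_t$ \emph{or} $a_t$ does not envy $a_s$ (in fact the stronger claim that neither strongly envies the other follows, but for this lemma we only need mutual non-envy). The key quantities to control are $u_1(A_s), u_1(A_t), u_n(A_s), u_n(A_t)$, and the hypothesis to exploit is $A_s \prec A_t$ together with the normalization $u_1(M) = u_n(M) = 1$. First I would remove the common items: since envy comparisons involve $u_i(A_i)$ versus $u_i(A_j)$ and any item in $A_s \cap A_t$ contributes to neither $A_s \setminus A_t$ nor $A_t \setminus A_s$, but in a \emph{partition} (which $\mcA$ is, the bundles being pairwise disjoint) we actually have $A_s \cap A_t = \emptyset$, so $A_s \prec A_t$ just says every item of $A_s$ precedes every item of $A_t$ in the $\rho$-order, hence $\rho(A_s) \ge \rho(A_t)$, i.e. $u_1(A_s) \, u_n(A_t) \ge u_1(A_t) \, u_n(A_s)$ (handling the $\infty$ and empty-set cases separately as flagged in Definition~\ref{def02}).

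Next I would argue by contradiction: suppose $a_s$ envies $a_t$ \emph{and} $a_t$ envies $a_s$, i.e. $u_1(A_s) < u_1(A_t)$ and $u_n(A_t) < u_n(A_s)$. Multiplying these two strict inequalities (all four quantities are nonnegative, and the left sides are strictly smaller, so as long as $u_1(A_t) > 0$ and $u_n(A_s) > 0$ — which the strict inequalities force — the product inequality is strict) gives $u_1(A_s)\, u_n(A_t) < u_1(A_t)\, u_n(A_s)$, directly contradicting $\rho(A_s) \ge \rho(A_t)$ from the previous paragraph. The degenerate cases need a separate word: if $u_1(A_s) < u_1(A_t)$ then $A_t \ne \emptyset$ and $u_1(A_t) > 0$; if additionally $u_n(A_s) = 0$ then $a_t$ cannot envy $a_s$ since $u_n(A_s) = 0 \le u_n(A_t)$; and the $\rho(A_s) = \pm\infty$ case (when $A_s = \emptyset$) makes $a_s$ envy no one, so it is trivial.

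I expect the main subtlety — not really an obstacle, but the point requiring care — to be the bookkeeping around zero values and the $\pm\infty$ convention for the empty set: one must make sure that whenever a strict envy inequality is assumed, the corresponding bundle is nonempty and the relevant utility is strictly positive, so that the cross-multiplication of the two envy inequalities is legitimate and strict. Once that is handled, the contradiction with $\rho(A_s) \ge \rho(A_t)$ is immediate, and the lemma follows for every cross-type pair $(s,t)$, hence for the whole allocation.
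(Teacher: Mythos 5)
Your proposal is correct and follows essentially the same route as the paper: assume mutual envy $u_1(A_s) < u_1(A_t)$ and $u_n(A_t) < u_n(A_s)$, note both bundles must then be nonempty, and derive $\rho(A_s) < \rho(A_t)$, contradicting goodness (your cross-multiplication is just a slightly more careful version of the paper's direct comparison of ratios, and your invocation of normalization is harmless but unnecessary, since the lemma holds without it).
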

\begin{proof}
Assume to the contrary that for a pair $(s, t)$ with $s \le j < t$, $a_s$ and $a_t$ envy each other, that is, $u_1(A_s) < u_1(A_t)$ and $u_n(A_t) < u_n(A_s)$.
It follows that none of $A_s$ and $A_t$ is $\emptyset$ and $\rho(A_s) = u_1(A_s)/u_n(A_s) < u_1(A_t)/u_n(A_t) = \rho(A_t)$, which contradicts the definition of a good allocation.
\end{proof}

Note that there are exactly $m$ iterations inside the algorithm {\sc Approx1},
and we let $\mcA^0= (\emptyset, \ldots, \emptyset)$ and let $\mcA^i = (A_1^i, \ldots, A_n^i)$ denote the allocation at the {\em end} of the $i$-th iteration.
The final produced allocation is $\mcA = \mcA^m$.

\begin{lemma}
\label{lemma04}
The allocation $\mcA^i$ is good and EF1, for each $i = 0, 1, \ldots, m$.
\end{lemma}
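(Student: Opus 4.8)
The plan is to prove both properties simultaneously by induction on $i$. The base case $i=0$ is trivial: $\mcA^0 = (\emptyset, \ldots, \emptyset)$ is vacuously EF1, and it is good since $\rho(\emptyset) = \pm\infty$ makes $A_s \prec A_t$ hold for every pair. For the inductive step, I would assume $\mcA^{i-1}$ is good and EF1, and examine the $i$-th iteration, which picks $a_s = \arg\min_{\ell \le j} u_1(A_\ell^{i-1})$ and $a_t = \arg\min_{\ell > j} u_n(A_\ell^{i-1})$. By Lemma~\ref{lemma03} applied to the good allocation $\mcA^{i-1}$, $a_s$ and $a_t$ do not mutually envy each other, so at least one of the two branches in the \textbf{if}-statement is valid: either $a_t$ does not envy $a_s$ (so the algorithm may give $g_{k_1}$ to $a_s$), or $a_s$ does not envy $a_t$ (and then the \textbf{else} branch gives $g_{k_2}$ to $a_t$). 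By symmetry of the two branches, I would treat the first branch, $A_s^i = A_s^{i-1} \cup \{g_{k_1}\}$, in detail.

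For \emph{goodness} of $\mcA^i$: the only bundle that changed is $A_s$, which received $g_{k_1}$, the unassigned item of largest preference. I need $A_s^i \prec A_t^i$ for all $t > j$, and $A_{s'}^i \prec A_t^i = A_t^{i-1}$ for all $s' \le j$, $s' \ne s$ (the latter is unchanged from the inductive hypothesis, so nothing to do). For the pairs involving $s$: since $g_{k_1}$ has preference at least as large as every still-unassigned item, and $A_t^{i-1}$ (for $t > j$) consists only of items already assigned in earlier iterations together with items never yet touched — here I need to be slightly careful — the key point is that $g_{k_1}$ precedes in the preference order every item not in $A_s^{i-1}$ that could lie in $A_t^{i-1}$. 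Concretely, items assigned to second-type agents are always taken from the \emph{back} ($g_{k_2}$), so any item in $A_t^{i-1}$ has index $> k_1$ at the time it was assigned, hence preference $\le \rho(g_{k_1})$; combined with $A_s^{i-1} \prec A_t^{i-1}$ this gives $A_s^i \prec A_t^i$.

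For \emph{EF1} of $\mcA^i$: again only $a_s$'s bundle grew, so I only need to check (a) no agent strongly envies $a_s$ after the addition, and (b) $a_s$ does not newly violate EF1 toward anyone — but (b) is automatic since $a_s$'s own utility only increased and the other bundles are unchanged, so any EF1 condition $u_?(A_?) \ge u_?(A_s \setminus g)$ that held before still holds with the witness $g = g_{k_1}$. For (a): other first-type agents $a_{s'}$ satisfy $u_1(A_{s'}^{i-1}) \ge u_1(A_s^{i-1})$ by the minimality in the choice of $s$, hence $u_1(A_{s'}^i) = u_1(A_{s'}^{i-1}) \ge u_1(A_s^{i-1}) = u_1(A_s^i \setminus g_{k_1})$, so $a_{s'}$ does not strongly envy $a_s$; and by the analogous minimality argument every second-type agent $a_{t'}$ has $u_n(A_{t'}^{i-1}) \ge u_n(A_t^{i-1})$, while the branch condition gives $u_n(A_t^{i-1}) \ge u_n(A_s^{i-1}) = u_n(A_s^i \setminus g_{k_1})$, so no second-type agent strongly envies $a_s$ either. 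The main obstacle I anticipate is the bookkeeping in the goodness step — precisely justifying that every item currently in a second-type agent's bundle has preference no larger than $\rho(g_{k_1})$ and every item in a first-type agent's bundle has preference no smaller — which requires tracking the invariant that $k_1$ only increases, $k_2$ only decreases, and first-type (second-type) agents only ever receive front (back) items; once that invariant is spelled out, both $\prec$-checks reduce to transitivity with the inductive hypothesis, and the EF1 part is the routine minimality argument sketched above.
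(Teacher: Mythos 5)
Your proposal is correct and follows essentially the same route as the paper's proof: induction on the iterations, using the front/back invariant (first-type bundles lie in $\{g_1,\ldots,g_{k_1}\}$, second-type in $\{g_{k_2+1},\ldots,g_m\}$) for goodness, and the minimality of $s$ and $t$ together with the branch condition to show no one envies the agent receiving the new item, hence EF1 after removing it. Your explicit appeal to Lemma~\ref{lemma03} to justify the \textbf{else} branch is exactly what the paper leaves implicit in its ``the other case is symmetric'' remark, so the two arguments coincide.
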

\begin{proof}
We prove by induction.

The initial empty allocation $\mcA^0$ is clearly good and EF1.
Assume $\mcA^i$ is good and EF1 for some $i < m$, and let $a_s$ and $a_t$ be the agents found in the $(i+1)$-st iteration.

Consider the case where $a_s$ is not envied by $a_t$ (the other case is symmetric), in which the algorithm updates $A_s$ to $A_s \cup \{ g_{k_1} \}$.
By the description of {\sc Approx1}, $A_s \cup \{ g_{k_1} \}$ is a non-empty subset of $\{ g_1, \ldots, g_{k_1} \}$
and $A_i \subseteq \{ g_{k_2+1}, \ldots, g_m \}$ for every $i \ge j+1$.
By Definitions~\ref{def01} and~\ref{def02} and $k_1 \le k_2$, $A_s \cup \{ g_{k_1} \} \prec A_i$ for every $i \ge j+1$, and thus $\mcA^{i+1}$ is good.

Since at the beginning of the iteration $a_s$ is not envied by $a_t$, $u_n(A_t) \ge u_n(A_s)$.
Also, by the definition of $s$ and $t$, $u_1(A_i) \ge u_1(A_s)$ for every $i \le j$ and $u_n(A_i) \ge u_n(A_t) \ge u_n(A_s)$ for every $i \ge j+1$.
That is, no agent envies $a_s$ in the allocation $\mcA^i$, and thus does not strongly envy $a_s$ in the allocation $\mcA^{i+1}$ (by removing the item $g_{k_1}$).
Note that in this iteration only $a_s$ gets an item.
Therefore, $\mcA^{i+1}$ is EF1 as $\mcA^i$ is EF1.
\end{proof}

\begin{theorem}
\label{thm05}
{\sc Approx1} is a $2$-approximation algorithm.
\end{theorem}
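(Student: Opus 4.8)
The plan is to bound $OPT$ from above and $SOL$ from below in terms of the same quantities, namely the utilities $u_1(M)=u_n(M)=1$ and the structure of the item preference order. Since {\sc Approx1} produces a complete EF1 allocation by Lemmas~\ref{lemma03} and~\ref{lemma04}, it remains to show $OPT/SOL\le 2$. First I would observe that $SOL=\sum_{i=1}^j u_1(A_i)+\sum_{i=j+1}^n u_n(A_i)$; because the final allocation $\mcA$ is good, the first-type bundles all precede the second-type bundles in the item order, so there is a ``split index'' $p$ with $A_1\cup\cdots\cup A_j=\{g_1,\ldots,g_p\}$ and $A_{j+1}\cup\cdots\cup A_n=\{g_{p+1},\ldots,g_m\}$ (up to the ordering within each side). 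Hence $\sum_{i=1}^j u_1(A_i)=u_1(\{g_1,\ldots,g_p\})$ and $\sum_{i=j+1}^n u_n(A_i)=u_n(\{g_{p+1},\ldots,g_m\})$, so $SOL=u_1(\{g_1,\ldots,g_p\})+u_n(\{g_{p+1},\ldots,g_m\})$, a quantity that depends only on the prefix/suffix split, not on how items are distributed within each type.

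Next I would upper bound $OPT$. For \emph{any} allocation, and in particular the optimal one, each first-type agent contributes at most $u_1(M)=1$ worth of $u_1$-value in total across all of them is at most... — more carefully, $\sum_{i=1}^j u_1(A^*_i)\le u_1(M)=1$ and $\sum_{i=j+1}^n u_n(A^*_i)\le u_n(M)=1$, so trivially $OPT\le 2$. That alone is not enough; I need to compare against $SOL$. The key inequality I expect to need is that $SOL\ge 1$, i.e., the greedy prefix/suffix split already captures at least half of the trivial upper bound $2$. To see this, I would argue that for the split index $p$ produced by the algorithm, either $u_1(\{g_1,\ldots,g_p\})$ is ``large'' or $u_n(\{g_{p+1},\ldots,g_m\})$ is ``large,'' exploiting that items $g_1,\ldots,g_p$ have the highest preferences $\rho=u_1/u_n$ and items $g_{p+1},\ldots,g_m$ the lowest; in particular $u_1(\{g_1,\ldots,g_p\})+u_1(\{g_{p+1},\ldots,g_m\})=1$ and $u_n(\{g_1,\ldots,g_p\})+u_n(\{g_{p+1},\ldots,g_m\})=1$, and the preference ordering forces $u_1(\{g_1,\ldots,g_p\})\ge u_n(\{g_1,\ldots,g_p\})$ when $\rho(g_p)\ge 1$ and symmetrically on the suffix, so $SOL\ge u_1(\{g_1,\ldots,g_p\})+u_n(\{g_{p+1},\ldots,g_m\})\ge \max\{u_1(\text{prefix}),\tfrac12\}$ type estimates; combining the two sides should yield $SOL\ge 1$.

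The main obstacle I anticipate is that the split index $p$ is determined by the algorithm's dynamics (it depends on the min-utility agents $a_s,a_t$ and the envy checks at each step), so I cannot freely choose $p$ — I must show that whatever $p$ the algorithm ends up with still satisfies $SOL\ge 1$, and simultaneously that $SOL\ge OPT/2$ even when $OPT<2$. To handle the latter, rather than the crude bound $OPT\le 2$, I would likely compare $SOL$ directly against an optimal allocation by a charging/exchange argument: take $\mcA^*$, and show that redistributing its items to respect the preference order (moving high-$\rho$ items toward type-$1$ agents and low-$\rho$ items toward type-$2$ agents) does not decrease social welfare below $OPT$, reducing to the case where $\mcA^*$ is itself ``good''; then for good allocations the social welfare is a function of a split index alone, and I compare the algorithm's split to the optimal split, losing at most a factor of $2$ because swapping which side of the split an item lies on changes each of the two sums by at most the full mass $1$. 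I expect the delicate part to be making the ``at most a factor of $2$'' precise — specifically ruling out the bad scenario where the algorithm's split is heavily skewed in a way that loses more than half — and this is presumably where the EF1/goodness constraints (which prevent one agent from hoarding) and the choice of min-utility agents $a_s,a_t$ at each step get used to pin down the location of $p$.
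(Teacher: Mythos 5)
Your first two paragraphs are on the paper's track: the proof is exactly ``$OPT\le u_1(M)+u_n(M)=2$ trivially, plus $SOL\ge 1$ from the structure of the final good allocation.'' But you do not close the $SOL\ge 1$ step, and then you talk yourself out of the fact that it suffices. The finish is a two-case argument you half-state and leave as ``should yield'': since the allocation is good, $\rho\bigl(\cup_{i\le j}A_i\bigr)\ge\rho\bigl(\cup_{i>j}A_i\bigr)$, so either the first-type union has $\rho\ge 1$ or the second-type union has $\rho<1$. In the first case $u_1\bigl(\cup_{i\le j}A_i\bigr)\ge u_n\bigl(\cup_{i\le j}A_i\bigr)$, hence $SOL=u_1\bigl(\cup_{i\le j}A_i\bigr)+u_n\bigl(\cup_{i>j}A_i\bigr)\ge u_n(M)=1$; in the second case $u_n\bigl(\cup_{i>j}A_i\bigr)>u_1\bigl(\cup_{i>j}A_i\bigr)$, hence $SOL\ge u_1(M)=1$. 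Your intermediate claim ``$SOL\ge\max\{u_1(\text{prefix}),\tfrac12\}$ type estimates'' is not the right statement and is where the argument stalls; the dichotomy above is what actually does the work, and it needs only goodness of the final allocation (not EF1, not the min-utility choice of $a_s,a_t$, which enter only into Lemma~\ref{lemma04} for feasibility).

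The genuine error is in your last paragraph: you assert that you must ``simultaneously show $SOL\ge OPT/2$ even when $OPT<2$'' and that the crude bound $OPT\le 2$ is insufficient, and you then propose a charging/exchange argument comparing the algorithm's split index to an optimal split, whose key factor-of-$2$ step you admit you cannot make precise. None of that is needed: once $SOL\ge 1$ is established, $SOL\ge 1\ge OPT/2$ holds for every instance simply because $OPT\le 2$, regardless of how small $OPT$ actually is or where the split $p$ lands. So the proposal as written has a gap (the unfinished $SOL\ge 1$ case analysis) and a wrong turn (the unnecessary exchange argument, which you correctly sense you cannot complete); repairing it means finishing the dichotomy above and deleting the third paragraph.
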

\begin{proof}
For the optimal allocation $\mcA^*$, we have 
\[
OPT = \sum_{i=1}^n u_i(A^*_i) =  u_1(\cup_{i=1}^j A^*_i) +  u_n(\cup_{i=j+1}^n A^*_i) \le u_1(M) + u_n(M) = 2.
\]
Note from Lemma~\ref{lemma04} that the final allocation $\mcA$ produced by {\sc Approx1} is complete, good and EF1.
By Definition~\ref{def02}, we have $\rho(\cup_{i=1}^j A_i) \ge \rho(\cup_{i=j+1}^n A_i)$.
Therefore, either $\rho(\cup_{i=1}^j A_i) \ge 1$ or $\rho(\cup_{i=j+1}^n A_i) < 1$;
or equivalently, either $u_1(\cup_{i=1}^j A_i) \ge u_n(\cup_{i=1}^j A_i)$ or $u_1(\cup_{i=j+1}^n A_i) < u_n(\cup_{i=j+1}^n A_i)$.
In the first case, we have
\[
SOL = \sum_{i=1}^n u_i(A_i) =  u_1(\cup_{i=1}^j A_i) +  u_n(\cup_{i=j+1}^n A_i)
	=u_1(\cup_{i=1}^j A_i) - u_n(\cup_{i=1}^j A_i) + u_n(M) \ge 1.
\]
In the second case, we have
\[
SOL =  u_1(\cup_{i=1}^j A_i) +  u_n(\cup_{i=j+1}^n A_i)
	=u_1(M) - u_1(\cup_{i=j+1}^n A_i) + u_n(\cup_{i=j+1}^n A_i) > 1.
\]
Therefore, $SOL \ge \frac 12 OPT$ always holds.
\end{proof}

\section{A $2$-approximation for three agents with unnormalized functions}
We continue to assume w.l.o.g. that the items are given in their preference order, that is, $\rho(g_1) \ge \rho(g_2) \ge \ldots \ge \rho(g_m)$,
and use the definitions of {\em precedence} and {\em good allocation} in Definition~\ref{def02}.
Since there are only three agents categorized into two types, we assume w.l.o.g. that $j = 1$,
i.e., agent $a_1$ is of the first type and agents $a_2$ and $a_3$ are of the second type.

For ease of presentation we partition the items into two sets based on their preference:
\begin{equation}
\label{eq01}
X = \{ g \in M: u_1(g) \ge u_3(g) \}, \ \ 
Y = \{ g \in M: u_1(g) < u_3(g) \}.
\end{equation}
We first examine the structure of a fixed optimal EF1 allocation, denoted as $(A^*_1, A^*_2, A^*_3)$.
Denote the item $g^* = \arg \max_{g \in A^*_1} u_3(g)$, i.e., the most valuable to $a_2$ and $a_3$ among those items allocated to $a_1$.
The following lemma establishes an important upper bound on $u_3(A^*_1 \setminus g^*)$.

\begin{lemma}
\label{lemma06}
$u_3(A^*_1 \setminus g^*) \le \frac 13 u_3(M \setminus g^*)$.
\end{lemma}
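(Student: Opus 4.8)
The plan is to exploit the EF1 condition from the perspective of $a_2$ and $a_3$ toward $a_1$, together with the fact that after deleting the single item $g^*$ from $A^*_1$, both $a_2$ and $a_3$ must be ``happy enough'' relative to $A^*_1 \setminus g^*$. Concretely, since the optimal allocation is EF1 and $g^*$ is by definition the item of $A^*_1$ that maximizes $u_3$, the item whose removal witnesses EF1 for $a_2$ (resp. $a_3$) toward $a_1$ has $u_3$-value at most $u_3(g^*)$; hence $u_3(A^*_2) \ge u_3(A^*_1 \setminus g^*)$ and $u_3(A^*_3) \ge u_3(A^*_1 \setminus g^*)$. (Here I am using that agents $a_2,a_3$ share the utility function $u_3=u_n$.) This is the key structural observation, and it does not even require looking at $u_1$ at all.

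Next I would add these two inequalities to the trivial identity $u_3(A^*_1 \setminus g^*) = u_3(A^*_1 \setminus g^*)$ and use completeness of the optimal allocation. Since $A^*_1, A^*_2, A^*_3$ partition $M$, we have
\[
u_3(M \setminus g^*) = u_3(A^*_1 \setminus g^*) + u_3(A^*_2) + u_3(A^*_3) \ge 3\, u_3(A^*_1 \setminus g^*),
\]
which rearranges to exactly the claimed bound $u_3(A^*_1 \setminus g^*) \le \tfrac13 u_3(M \setminus g^*)$.

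The only point requiring a little care is the EF1 inequality itself: EF1 says that for the pair $(a_2, a_1)$ there exists some $g \in A^*_1$ with $u_3(A^*_2) \ge u_3(A^*_1 \setminus g)$, and I need this to imply $u_3(A^*_2) \ge u_3(A^*_1 \setminus g^*)$. This holds because $u_3(A^*_1 \setminus g^*) = u_3(A^*_1) - u_3(g^*) \le u_3(A^*_1) - u_3(g) = u_3(A^*_1 \setminus g)$ for every $g \in A^*_1$, by maximality of $u_3(g^*)$; the degenerate case $A^*_1 = \emptyset$ (or $A^*_1 = \{g^*\}$) makes the left side $0$ and the claim trivial. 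So there is no real obstacle here — the proof is short — and the ``hard part,'' if any, is simply recognizing that the two second-type agents give two copies of the same lower bound, which is precisely where the ``three agents, two types'' hypothesis is used.
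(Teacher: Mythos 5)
Your proof is correct and is essentially the paper's argument: both rest on the observation that EF1 together with the maximality of $u_3(g^*)$ forces $u_3(A^*_i) \ge u_3(A^*_1 \setminus g^*)$ for $i = 2,3$, combined with the partition identity $u_3(M \setminus g^*) = u_3(A^*_1 \setminus g^*) + u_3(A^*_2) + u_3(A^*_3)$. The paper merely phrases this contrapositively (assuming the bound fails and exhibiting a strongly envying agent), which is only a cosmetic difference from your direct summation.
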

\begin{proof}
Suppose to the contrary that $u_3(A^*_1 \setminus g^*) > \frac 13 u_3(M \setminus g^*)$.
Then we have
\[
u_3(A^*_2 \cup A^*_3) = u_3(M) - u_3(A_1^*) = u_3(M \setminus g^*) - u_3(A_1^* \setminus g^*) < 2 u_3(A_1^* \setminus g^*).
\]
It follows that at least one of $u_3(A^*_2)$ and $u_3(A^*_3)$ is less than $u_3(A^*_1 \setminus g^*)$.
W.l.o.g., we assume $u_3(A^*_2) < u_3(A^*_1 \setminus g^*)$, and thus by the definition of $g^*$ agent $a_2$ strongly envies $a_1$, a contradiction to EF1.
\end{proof}

We next present an upper bound on the optimal total utility $OPT$.
To this purpose, we define what a {\em critical set} is below.
We remark that although the definition relies on the fixed optimal EF1 allocation $\mcA^*$, we can actually compute a critical set for any item,
by the procedure {\sc Algo2} presented in Subsection~3.1.

\begin{definition}
\label{def07}
An item set $K$ is {\em critical for an item $g$} if the following three conditions are satisfied:
\begin{enumerate}
\parskip=0pt
\item
	$g \in K$ and $K \setminus g \subseteq X$;
\item
	$K \setminus g$ $\prec$ $A^*_1 \setminus g$;
\item
	$u_3(K \setminus g) \ge \min \{ u_3(X \setminus g), \frac 13 u_3(M \setminus g) \}$.
\end{enumerate}
\end{definition}

\begin{lemma}
\label{lemma08}
For any critical set $K$ for $g^*$, we have $OPT \le u_1(K)+u_3(M \setminus K)$.
\end{lemma}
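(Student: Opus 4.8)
The plan is to convert the claimed bound into a single inequality between the sets $K$ and $A_1^*$ and then prove that by an exchange argument along the item preference order. First, since $\mcA^*$ is complete, $u_3(A_2^*)+u_3(A_3^*) = u_3(M) - u_3(A_1^*)$, so $OPT = u_1(A_1^*) + u_3(M) - u_3(A_1^*)$, while $u_1(K) + u_3(M\setminus K) = u_1(K) + u_3(M) - u_3(K)$; hence the lemma is equivalent to $u_1(A_1^*) - u_3(A_1^*)\le u_1(K) - u_3(K)$. Writing $f(S) := u_1(S) - u_3(S) = \sum_{g\in S}\big(u_1(g) - u_3(g)\big)$, which is additive, the goal becomes $f(A_1^*)\le f(K)$. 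Since $g^*$ lies in both $A_1^*$ and $K$, this reduces to $f(A_1^*\setminus g^*)\le f(K\setminus g^*)$; and because every item $g\in Y$ satisfies $u_1(g) < u_3(g)$, i.e.\ $f(g) < 0$, discarding the $Y$-part of $A_1^*\setminus g^*$ only lowers the left side, so it suffices to prove $f(P)\le f(Q)$ for $P := (A_1^*\setminus g^*)\cap X$ and $Q := K\setminus g^*$ (both subsets of $X$, the latter by condition~1 of Definition~\ref{def07}).

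Next I would extract the two facts about $P$ and $Q$ that make the comparison work. From condition~2, $Q\prec A_1^*\setminus g^*$; since $Q\subseteq X$ is disjoint from the discarded $Y$-items, $Q\setminus P$ equals $Q\setminus(A_1^*\setminus g^*)$ and $P\setminus Q\subseteq(A_1^*\setminus g^*)\setminus Q$, so the defining inequality of $\prec$ passes down and gives $Q\prec P$. For the value comparison, Lemma~\ref{lemma06} gives $u_3(P)\le u_3(A_1^*\setminus g^*)\le\tfrac13 u_3(M\setminus g^*)$, and trivially $u_3(P)\le u_3(X\setminus g^*)$ since $P\subseteq X\setminus g^*$; hence $u_3(P)\le\min\{u_3(X\setminus g^*), \tfrac13 u_3(M\setminus g^*)\}\le u_3(Q)$ by condition~3.

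The core step is then the following exchange inequality on $X$: if $P, Q\subseteq X$ with $Q\prec P$ and $u_3(Q)\ge u_3(P)$, then $f(P)\le f(Q)$. After cancelling the items common to $P$ and $Q$ (which preserves all three hypotheses and the conclusion) and disposing of the trivial case $P = \emptyset$ (where $f(P) = 0\le f(Q)$ since $f\ge 0$ on $X$), I would use that for $g\in X$ with $u_3(g) > 0$ one has $u_1(g) = \rho(g)\,u_3(g)$, so $f(g) = (\rho(g) - 1)\,u_3(g)$ with $\rho(g)\ge 1$; this yields $f(P)\le\big(\max_{g\in P}\rho(g) - 1\big)\,u_3(P)$ and $f(Q)\ge\big(\min_{g\in Q}\rho(g) - 1\big)\,u_3(Q)$, and since $Q\prec P$ forces $\min_{g\in Q}\rho(g)\ge\max_{g\in P}\rho(g)\ge 1$ while $u_3(Q)\ge u_3(P)\ge 0$, these two bounds chain to $f(P)\le f(Q)$.

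The hard part will be handling items with $u_3(g) = 0$ (equivalently $\rho(g) = +\infty$) in this last step: for such items $f(g) = u_1(g)$ is not captured by the formula $(\rho(g) - 1)\,u_3(g)$, so the argument must split $P$ and $Q$ into their $\rho = \infty$ and finite-$\rho$ parts, observe that $Q\prec P$ keeps the finite part of $Q$ dominating the finite part of $P$ (so the bound above applies there), while an infinite-$\rho$ item of $P\setminus Q$ forces every item of $Q\setminus P$ to have $\rho = \infty$, hence $u_3 = 0$, and then extract the remaining comparison from the equalities this imposes through condition~3 together with Lemma~\ref{lemma06} (and, if necessary, from $g^*$ being the $u_3$-heaviest item of $A_1^*$). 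I expect this bookkeeping, rather than any conceptual difficulty, to be the only real obstacle.
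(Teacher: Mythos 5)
Your main line coincides with the paper's own proof. The paper reduces the claim in the same way to $u_1(A_1^*\setminus g^*)-u_3(A_1^*\setminus g^*)\le u_1(K\setminus g^*)-u_3(K\setminus g^*)$, discards the $Y$-part $B_2=(A_1^*\setminus g^*)\cap Y$ via $u_1(B_2)\le u_3(B_2)$, cancels the common part $C=(K\setminus g^*)\cap B_1$ (your ``cancelling the items common to $P$ and $Q$''), and then runs exactly your exchange step: $\rho(K\setminus\{g^*\cup C\})\ge\max\{\rho(B_1\setminus C),1\}$ from condition~2, $u_3(K\setminus\{g^*\cup C\})\ge u_3(B_1\setminus C)$ from condition~3 together with Lemma~\ref{lemma06}, and the identity $u_1(S)-u_3(S)=u_3(S)(\rho(S)-1)$. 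Your item-wise $\max/\min$ formulation of the exchange is an equivalent packaging of the paper's set-level $\rho$ comparison, and in the regime where all relevant items have $u_3(g)>0$ your argument is complete and correct.

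The genuine gap is precisely the case you defer, and it is not mere bookkeeping: your sketched patch cannot be completed from conditions~1--3 and Lemma~\ref{lemma06} alone. In the degenerate branch all items of $P\setminus Q$ and $Q\setminus P$ have $u_3=0$, so condition~3 and Lemma~\ref{lemma06} (which only constrain $u_3$-mass) become vacuous and give no lower bound on $u_1(Q\setminus P)$; in particular $Q\setminus P$ may be empty while $P\setminus Q$ contains a $u_3$-null item of large $u_1$, and then $f(P)\le f(Q)$ fails even though $Q\prec P$ and $u_3(Q)\ge u_3(P)$ hold. Concretely, take $X=\{g_1,g_2\}$ with $u_3(g_1)=u_3(g_2)=0$, $u_1(g_1)$ tiny, $u_1(g_2)$ large, and $Y=\{g_3\}$ with $u_1(g_3)=0$, $u_3(g_3)=1$; then $A_1^*=\{g_1,g_2\}$, the tie allows $g^*=g_1$, and $K=\{g_1\}$ satisfies all three conditions of Definition~\ref{def07}, yet $u_1(K)+u_3(M\setminus K)<u_1(A_1^*)+u_3(M\setminus A_1^*)=OPT$. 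So no amount of case analysis will recover the inequality there without an additional hypothesis (positivity of $u_3$ on the relevant items, or a strengthened notion of critical set that controls $u_1$ on the $\rho=\infty$ items). In fairness, the paper's proof has the same blind spot --- the manipulation $u_1(S)-u_3(S)=u_3(S)(\rho(S)-1)$ and the ensuing product comparison are invalid when $u_3(S)=0<u_1(S)$ --- so you have not overlooked an idea the paper supplies; but as written your proof is incomplete exactly at the step you flag, and that step is where the real difficulty lies rather than routine bookkeeping.
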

\begin{proof}
We partition the items of $A^*_1 \setminus g^*$ into two sets according to Eq.~(\ref{eq01}):
\[
B_1 = (A_1^* \setminus g^*) \cap X, \ \ B_2 = (A^*_1 \setminus g^*) \cap Y.
\]
One sees from $B_2 \subseteq Y$ that $u_1(B_2) \le u_3(B_2)$.
Since $K \setminus g^* \prec A^*_1 \setminus g^*$ and $B_1 \subseteq A^*_1 \setminus g^*$, $K \setminus g^* \prec B_1$ too.
Let $C = (K \setminus g^*) \cap B_1$;
then $\rho(K \setminus \{g^* \cup C\}) \ge \rho(B_1 \setminus C)$.
Since $K \setminus g^* \subseteq X$, $\rho(K \setminus g^*) \ge 1$.
In summary, we have 
%
\[
\rho(K \setminus \{g^* \cup C\}) \ge \max \{ \rho(B_1 \setminus C), 1 \}.
\]
%
By the third condition in Definition~\ref{def07} and Lemma~\ref{lemma06}, we have
\begin{eqnarray*}
u_3(K \setminus \{g^* \cup C\}) &\ge &\min \{ u_3(X \setminus g^*), \frac 13 u_3(M \setminus g^*) \} - u_3(C) \nonumber\\
	&\ge &\min \{ u_3(X \setminus g^*), u_3(A^*_1 \setminus g^*) \} - u_3(C) \nonumber\\
	&\ge &u_3(B_1 \setminus C).
\end{eqnarray*}
%
The above inequalities together give rise to
\begin{eqnarray}
\label{eq2}
u_1(A_1^* \setminus g^*) - u_3(A_1^* \setminus g^*) & \le & u_1(B_1) - u_3(B_1) \nonumber\\
& = & u_1(C)-u_3(C) + u_1(B_1 \setminus C) - u_3(B_1 \setminus C) \nonumber\\
& = & u_1(C)-u_3(C) + u_3(B_1 \setminus C) ( \rho(B_1 \setminus C) - 1) \nonumber\\
& \le & u_1(C)-u_3(C) + u_3(K \setminus \{g^* \cup C\}) ( \rho(K \setminus \{g^* \cup C\}) - 1) \nonumber \\
& = & u_1(K \setminus g^*) - u_3(K \setminus g^*).
\end{eqnarray}
It follows from the above Eq.~(\ref{eq2}) that
\begin{eqnarray*}
OPT & = & u_1(g^*) + u_1(A_1^* \setminus g^*) + u_3 (A_2^* \cup A_3^*) \\
	& \le & u_1(g^*) + u_3(A_1^* \setminus g^*) + u_1(K \setminus g^*) - u_3(K \setminus g^*) + u_3 (A_2^* \cup A_3^*) \\
	& = & u_1(K)+u_3(M \setminus K).
\end{eqnarray*}
This proves the lemma.
\end{proof}

The next lemma states an important property for any allocation in which agent $a_1$ is envied,
by the fact that $a_2$ and $a_3$ share the same utility function $u_3$.

\begin{lemma}
\label{lemma09}
Suppose $\mcA$ is an allocation in which agent $a_1$ is envied. 
If $a_2$ ($a_3$, resp.) is not envied by $a_3$ ($a_2$, resp.), then $a_2$ envies $a_1$, i.e., $u_3(A_2) < u_3(A_1)$ ($a_3$ envies $a_1$, i.e., $u_3(A_3) < u_3(A_1)$, resp.).
\end{lemma}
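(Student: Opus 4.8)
\textbf{Proof plan for Lemma~\ref{lemma09}.}
The plan is to exploit the fact that $a_2$ and $a_3$ share the utility function $u_3$, together with a simple averaging argument over the three bundles $A_1, A_2, A_3$. First I would consider the case in which $a_2$ is not envied by $a_3$, i.e. $u_3(A_3) \ge u_3(A_2)$; the other case is symmetric by swapping the roles of $a_2$ and $a_3$. The hypothesis that $a_1$ is envied means that some agent envies $a_1$; since $a_2$ and $a_3$ have the same utility function $u_3$, ``$a_1$ is envied'' is equivalent to ``$\max\{u_3(A_2), u_3(A_3)\} < u_3(A_1)$'' unless it is $a_1$ itself who is not the object of envy — so I would first argue that the agent envying $a_1$ must be $a_2$ or $a_3$ (it cannot be $a_1$ envying itself), hence $u_3(A_i) < u_3(A_1)$ for some $i \in \{2,3\}$.

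Next I would combine this with the assumption $u_3(A_3) \ge u_3(A_2)$. If $a_3$ is the agent that envies $a_1$, then $u_3(A_3) < u_3(A_1)$, and since $u_3(A_2) \le u_3(A_3)$ we immediately get $u_3(A_2) < u_3(A_1)$, i.e. $a_2$ envies $a_1$, as claimed. If instead $a_2$ is the agent that envies $a_1$, then $u_3(A_2) < u_3(A_1)$ directly, which is exactly the desired conclusion. Either way $u_3(A_2) < u_3(A_1)$, so $a_2$ envies $a_1$. The symmetric case — $a_3$ is not envied by $a_2$, i.e. $u_3(A_2) \ge u_3(A_3)$ — is handled identically with $2$ and $3$ interchanged, yielding $u_3(A_3) < u_3(A_1)$.

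I do not anticipate a serious obstacle here; the only subtlety is the bookkeeping step of translating ``$a_1$ is envied'' into a statement about $u_3$-values of $A_2$ and $A_3$, which relies on the two second-type agents being interchangeable in utility. One should be mildly careful that ``envied'' in the statement refers to envy by \emph{some} agent, and to note explicitly that $a_1$ does not envy itself, so the envier is necessarily of the second type; once that is pinned down, the rest is the two-line inequality chase sketched above.
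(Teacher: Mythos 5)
Your proof is correct and follows essentially the same argument as the paper: both rest on observing that the agent envying $a_1$ must be $a_2$ or $a_3$, and then using the shared utility function $u_3$ together with the hypothesis $u_3(A_3) \ge u_3(A_2)$ to conclude $u_3(A_2) < u_3(A_1)$. The only cosmetic difference is that you argue directly by cases on the envier, while the paper phrases the same inequality chain as a short proof by contradiction.
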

\begin{proof}
We prove the lemma for $a_2$ (for $a_3$ it is symmetric), that is, agent $a_2$ is not envied by $a_3$, but $a_1$ is envied by at least one of $a_2$ and $a_3$.

If $a_2$ does not envy $a_1$, then $a_3$ envies $a_1$, i.e., $u_3(A_2) \ge u_3(A_1) > u_3(A_3)$, a contradiction to $a_2$ not envied by $a_3$.
\end{proof}

\subsection{Producing a critical set for every item}
Recall that the items are given in the preference order $\rho(g_1) \ge \rho(g_2) \ge \ldots \ge \rho(g_m)$;
and we see from Eq.~(\ref{eq01}) that $X \prec Y$.
For any ordered item set, if the order inherits the given preference order, then it is {\em regular}.

Note from Lemma~\ref{lemma08} that a critical set for the item $g^*$ plays an important role, but we have no knowledge of $A^*_1$ or $g^*$.
Below we construct a critical set for every item $g_i \in M$, in the procedure {\sc Algo2} depicted in Algorithm~\ref{Algo2}.
The key idea is, excluding $g_i$, the critical set matches a prefix of the item preference order.

\begin{algorithm}
\caption{{\sc Algo2} for computing a critical set}
\label{Algo2}
{\bf Input:} The item preference order $S = \langle g_1, \ldots, g_m\rangle$ and an item $g_i \in M$.

{\bf Output:} A critical set $K$ for $g_i$.
\begin{algorithmic}[1]
\State Find the smallest $t$ such that $\sum_{j=1}^t u_3(g_j) \ge \min \{ u_3(X \setminus g_i), \frac 13 u_3(M \setminus g_i) \}$;

\Comment{If $X \setminus g_i = \emptyset$, then $t=0$.}

\If{($t < i$)}
\State output $K = \{g_1, \ldots, g_t, g_i\}$;
\Else
\State find the smallest $t'$ such that $\sum_{j=1}^{t'} u_3(g_j) \ge \min \{ u_3(X), \frac 13 u_3(M \setminus g_i) + u_3(g_i) \}$;
\State output $K = \{g_1, \ldots, g_{t'}\}$.
\EndIf
\end{algorithmic}
\end{algorithm}

\begin{lemma}
\label{lemma10}
{\sc Algo2} produces a critical set $K$ for the item $g_i$.
\end{lemma}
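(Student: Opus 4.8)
The plan is to verify directly the three conditions of Definition~\ref{def07} for the set $K$ returned by {\sc Algo2}, handling the two output branches (Lines~2--3 and Lines~4--6) separately. Write $\theta_1 = \min \{ u_3(X \setminus g_i), \frac 13 u_3(M \setminus g_i) \}$ for the threshold appearing in Line~1 and $\theta_2 = \min \{ u_3(X), \frac 13 u_3(M \setminus g_i) + u_3(g_i) \}$ for the one in Line~5; note $\theta_2 \ge \theta_1$ because $u_3(X) \ge u_3(X \setminus g_i)$ and $u_3(g_i) \ge 0$. Two facts are used throughout: (i) since the items are listed in non-increasing preference order and $X \prec Y$, the set $X$ is exactly the prefix $\{ g_1, \ldots, g_{|X|} \}$; and (ii) any prefix $\{ g_1, \ldots, g_r \}$ precedes every item set $B$ in the sense of Definition~\ref{def02}, since each retained item has preference at least $\rho(g_r)$ and each item of $B$ outside the prefix has preference at most $\rho(g_{r+1}) \le \rho(g_r)$, with the empty-set conventions covering the degenerate cases.

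In the first branch $t < i$, so $K = \{ g_1, \ldots, g_t, g_i \}$ and $K \setminus g_i = \{ g_1, \ldots, g_t \}$ is a prefix. Condition~3 is precisely the stopping rule defining $t$, and Condition~2 (that $K \setminus g_i \prec A_1^* \setminus g_i$) is immediate from fact~(ii). For Condition~1 it suffices to show $t \le |X|$: if $g_i \in X$ then $i \le |X|$, so $t < i \le |X|$; if $g_i \notin X$ then $X \setminus g_i = X$, so $\theta_1 \le u_3(X) = \sum_{j \le |X|} u_3(g_j)$, which forces $t \le |X|$. In either case $K \setminus g_i \subseteq X$.

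In the second branch $t \ge i$, so $K = \{ g_1, \ldots, g_{t'} \}$. I first claim $g_i \in X$: otherwise $X \setminus g_i = X$ and, exactly as above, $\theta_1 \le u_3(X) = \sum_{j \le |X|} u_3(g_j)$ would give $t \le |X| < i$, contradicting $t \ge i$; hence $u_3(X \setminus g_i) = u_3(X) - u_3(g_i)$. I next claim $g_i \in K$, i.e.\ $t' \ge i$: since $t$ is the least index with $\sum_{j \le t} u_3(g_j) \ge \theta_1$ and $t \ge i$, the partial sum $\sum_{j \le i-1} u_3(g_j)$ is strictly below $\theta_1 \le \theta_2$, so no index $\le i-1$ can reach $\theta_2$ and therefore $t' \ge i$. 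Then Condition~1 holds because $\theta_2 \le u_3(X) = \sum_{j \le |X|} u_3(g_j)$ forces $t' \le |X|$, so $K \subseteq X$; and Condition~3 holds because $u_3(K \setminus g_i) = \sum_{j \le t'} u_3(g_j) - u_3(g_i) \ge \theta_2 - u_3(g_i) = \min \{ u_3(X) - u_3(g_i), \frac 13 u_3(M \setminus g_i) \} = \theta_1$.

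The step I expect to need the most care is Condition~2 in the second branch, since there $K \setminus g_i = \{ g_1, \ldots, g_{t'} \} \setminus \{ g_i \}$ is no longer a prefix of the preference order. I would argue straight from Definition~\ref{def02} with $A = K \setminus g_i$ and $B = A_1^* \setminus g_i$: every item of $A \setminus B$ lies in $\{ g_1, \ldots, g_{t'} \}$ and hence has preference at least $\rho(g_{t'})$, whereas — using $g_i \in K$, which makes $B \setminus A = A_1^* \setminus \{ g_1, \ldots, g_{t'} \}$ — every item of $B \setminus A$ has index exceeding $t'$ and hence preference at most $\rho(g_{t'+1}) \le \rho(g_{t'})$; the subcases where $A \setminus B$ or $B \setminus A$ is empty are absorbed by the $\pm\infty$ convention of Definition~\ref{def02}. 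This yields $A \prec B$, completing the verification that $K$ is critical for $g_i$.
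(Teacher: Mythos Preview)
Your proof is correct and follows the same two-branch verification of the three conditions in Definition~\ref{def07} that the paper uses; you are simply more explicit than the paper (which asserts without argument that $t \le |X|$, that $|X| \ge t' \ge t \ge i$, and that $K \setminus g_i \prec A_1^* \setminus g_i$ in the second branch), and your added justifications are all sound.
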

\begin{proof}
Consider the index $t$ in Line 1 of the procedure~{\sc Algo2}.
Clearly, $t \le |X|$.
If $t < i$, then $g_i \in K$ and $K \setminus g_i = \{g_1, \ldots, g_t\}$.
Therefore, $K \setminus g_i \subseteq X$ and $K \setminus g_i \prec A_1^* \setminus g_i$.
Lastly, $u_3(K\setminus g_i) = \sum_{j=1}^t u_3(g_j) \ge \min \{ u_3(X \setminus g_i), \frac 13 u_3(M \setminus g_i) \}$.
All the three conditions in Definition~\ref{def07} are satisfied and thus $K$ is a critical set for $g_i$.

If $t \ge i$, then $t'$ in Line 5 exists and $|X| \ge t' \ge t \ge i$.
Therefore, $g_i \in K$, $K \subseteq X$ and $K \setminus g_i \prec A_1^* \setminus g_i$.
Lastly, $u_3(K \setminus g_i) = \sum_{j=1}^{t'} u_3(g_j) - u_3(g_i) \ge \min \{ u_3(X \setminus g_i), \frac 13 u_3(M \setminus g_i) \}$.
All the three conditions in Definition~\ref{def07} are satisfied and thus $K$ is a critical set for $g_i$.
\end{proof}

Given that the procedure {\sc Algo2} is able to produce a critical set for any item,
by enumerating the items we will have a critical set $K$ for the item $g^*$, and thus by Lemma~\ref{lemma08} to bound the $OPT$.
From $K$, we will also design algorithms to return a complete EF1 allocation of total utility at least half of this bound and thus at least $\frac 12 \cdot OPT$.

In the next three subsections, we deal with the critical set $K$ produced by {\sc Algo2} for an item $g_i$, separately for three possible cases.
Let $k = |K|$.
When $\max_{g \in K} u_1(g) < \frac 12 u_1(K)$, $K$ contains at least three items, i.e., $k \ge 3$;
if the index of the item $g_i$ is $i \ge k$, then we modify the item order $S$ by moving the item $g_i$ to the $(k-1)$-st position,
and denote the new order as $S' = \langle g'_1, \ldots, g'_m \rangle$.
Note that the net effect is, if $i \ge k$, then the items $g_{k-1}, \ldots, g_{i-1}$ are moved one position forward to become $g'_k, \ldots, g'_i$;
otherwise $S'$ is identical to $S$.
In either case, $S' \setminus g'_{k-1}$ is regular and $K$ contains the first $k$ items in the order $S'$, summarized in Remark~\ref{remark11}.

The three distinguished cases are:
\begin{enumerate}
\parskip=0pt
\item
	$\max_{g \in K} u_1(g) \ge \frac 12 u_1(K)$;
\item
	$\max_{g \in K} u_1(g) < \frac 12 u_1(K)$ and $u_3(g'_k) > u_3(M \setminus K)$;
\item
	$\max_{g \in K} u_1(g) < \frac 12 u_1(K)$ and $u_3(g'_k) \le u_3(M \setminus K)$.
\end{enumerate}

\begin{remark}
\label{remark11}
In Cases 2 and 3, i.e., $\max_{g \in K} u_1(g) < \frac 12 u_1(K)$ and thus $k = |K| \ge 3$,
for the new item order $S'$, $g'_k \ne g_i$, $S' \setminus g'_{k-1}$ is regular, $K = \{ g'_1, g'_2, \ldots, g'_k \}$, 
and 
$\sum_{j=1}^{k-1} u_3(g'_j) < \frac 13 u_3(M \setminus g_i) + u_3(g_i)$.
\end{remark}
\begin{proof}
We only need to prove the last inequality $\sum_{j=1}^{k-1} u_3(g'_j) < \frac 13 u_3(M \setminus g_i) + u_3(g_i)$.

If $K$ is outputted in Line 3 of the algorithm {\sc Algo2},
then $k = t+1$ and $\sum_{j=1}^{k-1} u_3(g'_j) = \sum_{j=1}^{t-1} u_3(g_j) + u_3(g_i) < \frac 13 u_3(M \setminus g_i) + u_3(g_i)$, i.e., the inequality holds.
If $K$ is outputted in Line 6 of the algorithm {\sc Algo2} and $i < t' = k$,
then $\sum_{j=1}^{k-1} u_3(g'_j) = \sum_{j=1}^{t'-1} u_3(g_j) < \frac 13 u_3(M \setminus g_i) + u_3(g_i)$, i.e., the inequality holds.
If $K$ is outputted in Line 6 of the algorithm {\sc Algo2} and $i = t = t' = k$,
then $\sum_{j=1}^{k-1} u_3(g'_j) \le \sum_{j=1}^{t-1} u_3(g_j) + u_3(g_i) < \frac 13 u_3(M \setminus g_i) + u_3(g_i)$, i.e., the inequality holds.
This proves the last inequality.
\end{proof}

\subsection{Case 1: $\max_{g \in K} u_1(g) \ge \frac 12 u_1(K)$}
Recall that $K$ is the critical set for the item $g_i$ produced by the procedure {\sc Algo2}.
This case where $\max_{g \in K} u_1(g) \ge \frac 12 u_1(K)$ is considered easy, and we present an algorithm {\sc Approx3} to produce a complete EF1 allocation.

The algorithm starts with the allocation $\mcA = (\{g\}, \emptyset, \emptyset)$, where the item $g$ is one such that $u_1(g) \ge \frac 12 u_1(K)$.
$\mcA$ is trivially EF1 and $u_1(A_1) \ge \frac 12 u_1(K)$.
In fact, $\mcA$ is {\em well-defined} with permutation $(1, 2, 3)$, formally defined below.

\begin{definition}
\label{def12}
An allocation $\mcA$ is {\em well-defined} with permutation $(i_1, i_2, i_3)$ if 
\begin{enumerate}
\parskip=0pt
\item
	$\mcA$ is EF1;
\item
	$u_1(A_1) \ge \frac 12 u_1(K)$ and $u_3(A_1) \le u_3(K)$;
\item
	$(i_1, i_2, i_3) \in \{(1, 2, 3), (3, 1, 2)\}$ such that agent $a_{i_1}$ does not envy $a_{i_2}$ or $a_{i_3}$, and $a_{i_2}$ does not envy $a_{i_3}$.
\end{enumerate}
\end{definition}

Given the well-defined allocation $\mcA = (\{g\}, \emptyset, \emptyset)$ with permutation $(1, 2, 3)$, let $U = M \setminus (A_1 \cup A_2 \cup A_3)$ be the unassigned item set.
The algorithm {\sc Approx3} fixes the agent order $\langle a_3, a_2, a_1 \rangle$ to apply the Round-Robin (RR) algorithm to distribute the items of $U$.
A high-level description of the algorithm is depicted in Algorithm~\ref{Approx3}, which accepts a general well-defined allocation.

\begin{algorithm}
\caption{{\sc Approx3} for a complete EF1 allocation}
\label{Approx3}
{\bf Input:} A well-defined allocation $\mcA = (A_1, A_2, A_3)$ with permutation $(i_1, i_2, i_3)$.

{\bf Output:} A complete EF1 allocation.
\begin{algorithmic}[1]
\State Fix the agent order $\langle a_{i_3}, a_{i_2}, a_{i_1} \rangle$;
\State apply the RR algorithm to allocate the unassigned items, denoted as $(U_1, U_2, U_3)$;
\State output the final allocation $(A_1 \cup U_1, A_2 \cup U_2, A_3 \cup U_3)$.
\end{algorithmic}
\end{algorithm}

\begin{lemma}
\label{lemma13}
Given a well-defined allocation $\mcA = (A_1, A_2, A_3)$ with permutation $(i_1, i_2, i_3)$,
{\sc Approx3} produces a complete EF1 allocation with total utility at least $\frac 12 u_1(K)+\frac 12 u_3(M \setminus K)$.
\end{lemma}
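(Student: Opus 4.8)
The plan is to analyze the Round-Robin phase carefully, tracking two things in parallel: that the combined allocation remains EF1, and that its total utility is at least $\frac12 u_1(K) + \frac12 u_3(M\setminus K)$. For the EF1 part, I would first observe that the initial allocation $\mcA$ is EF1 and, by the definition of well-defined allocation, the agent order $\langle a_{i_3}, a_{i_2}, a_{i_1}\rangle$ is consistent with the envy structure: $a_{i_1}$ envies nobody, $a_{i_2}$ does not envy $a_{i_3}$. The standard fact about RR starting from scratch is that it yields an EF1 allocation, but here we are seeding it with a nonempty partial allocation, so I would invoke the ``prioritized'' RR argument: if the agents who have received extra items early (here $a_{i_1}$, who holds $\{g\}$) are placed \emph{last} in the pick order, then after the first full round each later-picking agent can ``ignore'' one item of each earlier agent, and the usual telescoping argument shows no agent strongly envies any other at the end. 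I would state this as the first step, possibly as a small sub-claim, and verify it respects the three roles in the permutation.

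Next I would bound the total utility. Write the final allocation as $(A_1\cup U_1, A_2\cup U_2, A_3\cup U_3)$ and split $SOL = u_1(A_1\cup U_1) + u_3(A_2\cup U_2) + u_3(A_3\cup U_3)$. For the two type-2 agents, since the RR order puts $a_3$ and $a_2$ ahead of $a_1$, between them they pick at least $|U| - (\text{one item for } a_1)$ of the unassigned items, and more importantly each picks its $u_3$-most valuable available item at its turn, so $u_3(U_2) + u_3(U_3) \ge$ (roughly) the $u_3$-mass of $U$ minus the single item $a_1$ grabs. The cleanest route: the unassigned set is $U = M \setminus A_1 \supseteq M\setminus K$ up to the at most one extra item $a_1$ will take; and $u_1(A_1) \ge \frac12 u_1(K)$ directly gives the type-1 contribution. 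So I would aim to show $u_1(A_1\cup U_1) \ge u_1(A_1) \ge \frac12 u_1(K)$ for the first term, and $u_3(U_2\cup U_3) \ge u_3(M\setminus K) - (\text{correction})$ where the correction is absorbed because $a_1$ picks \emph{last} and there are at least as many items in $U$ as needed, or because any item $a_1$ takes from $U$ has small $u_3$-value relative to what $a_2, a_3$ already took (by the greedy RR choice). Combining yields $SOL \ge \frac12 u_1(K) + \frac12 u_3(M\setminus K)$; in fact I expect the bound to come out even better than claimed, with the $\frac12$ being slack we simply don't need to fight for.

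The main obstacle will be handling the ``one extra item'' bookkeeping cleanly: $a_1$ starts one item ahead, so in the RR rounds it effectively picks one fewer time than a fair share, and I need to make sure the $u_3$-value that $a_1$ diverts into $U_1$ (which only contributes to the $u_1$ objective, not $u_3$) does not eat into the $u_3(M\setminus K)$ term by more than a factor of $2$. The trick is that because $a_3$ and $a_2$ each pick before $a_1$ in every round, and they pick greedily by $u_3$, whatever $a_1$ takes is no more valuable (in $u_3$) than what $a_2$ and $a_3$ took in that same round; a pairing/charging argument then shows $u_3(U_1) \le \frac12 u_3(U_2 \cup U_3)$ or something in that spirit, which is more than enough. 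I would also need the inequality $u_3(A_1) \le u_3(K)$ from the well-definedness, together with $u_3(M) = u_3(K) + u_3(M\setminus K)$, to close the accounting; a short case split on whether $|U|$ is divisible by $3$ may be needed but is routine. I would relegate that arithmetic to a single displayed chain of inequalities rather than belaboring it.
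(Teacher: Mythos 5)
Your overall skeleton matches the paper's (seeded round-robin with the non-envied agent picking last; utility via $u_1(A_1) \ge \frac 12 u_1(K)$, a round-robin comparison, and $u_3(A_1) \le u_3(K)$), but the welfare half of your argument rests on premises that fail in the generality the lemma requires. First, you analyze only the configuration of Case~1, where the seed allocation is $(\{g\}, \emptyset, \emptyset)$ with permutation $(1,2,3)$; the lemma is also invoked (in {\sc Approx9}, Case~3) on well-defined allocations in which all three bundles are nonempty and the permutation may be $(3,1,2)$. In that second permutation the round-robin order is $\langle a_2, a_1, a_3\rangle$, so it is simply false that ``$a_3$ and $a_2$ each pick before $a_1$ in every round,'' and your proposed charging bound $u_3(U_1) \le \frac 12 u_3(U_2 \cup U_3)$ can fail: e.g.\ with $U=\{x,y,z\}$, $u_3(x)=u_3(y)=10$, $u_3(z)=1$, $u_1(y)=5$, $u_1(x)=u_1(z)=0$, agent $a_2$ takes $x$, $a_1$ takes $y$, $a_3$ takes $z$, giving $u_3(U_1)=10 > \frac 12\,u_3(U_2\cup U_3)$. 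Second, your ``one extra item'' bookkeeping misreads the algorithm: {\sc Approx3} runs plain round-robin on the unassigned items, so $a_1$ does not pick one fewer time, and the unassigned set is $M\setminus(A_1\cup A_2\cup A_3)$, not $M\setminus A_1$.

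The fix is both weaker and simpler than what you attempt: in \emph{both} admissible permutations agent index $1$ precedes index $2$, so $a_2$ picks before $a_1$ in every round and greedily by $u_3$, which yields only $u_3(U_2) \ge u_3(U_1)$ --- and that alone suffices, since then $u_3(A_2\cup A_3\cup U_2\cup U_3) \ge \frac 12 u_3(M\setminus A_1) \ge \frac 12 u_3(M\setminus K)$ using $u_3(A_1)\le u_3(K)$; no factor-$2$ charging against $U_2\cup U_3$ and no divisibility case split are needed. On the EF1 side your sketch is in the right direction but underspecified for the general seed: you must argue both directions for each pair, namely that an agent $a_s$ earlier in the permutation (later in the pick order) combines its initial non-envy of $a_t$ with the standard ``ignore $a_t$'s first round-robin pick'' bound, while the agent $a_t$ earlier in the pick order combines the initial EF1 guarantee $u_t(A_t) \ge u_t(A_s)-u_t(g)$ for some $g\in A_s$ (a bundle that may contain many items, not just the single seed you consider) with $u_t(U_t)\ge u_t(U_s)$.
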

\begin{proof}
Note from the RR algorithm that the returned allocation $(A_1 \cup U_1, A_2 \cup U_2, A_3 \cup U_3)$ is complete.

The initial allocation $\mcA = (A_1, A_2, A_3)$ is EF1 by Definition~\ref{def12}.
Consider any two agents $a_s$ and $a_t$, where $s$ precedes $t$ in the permutation $(i_1, i_2, i_3)$.
That is, $a_s$ does not envy $a_t$, and thus $u_s(A_s) \ge u_s(A_t)$.
Since the RR algorithm uses the agent order $\langle a_{i_3}, a_{i_2}, a_{i_1} \rangle$,
we have $u_s(U_s) \ge u_s(U_t) - u_s(g)$, where $g$ is the first item picked by agent $a_t$;
and thus $u_s(A_s \cup U_s) \ge u_s(A_t \cup U_t) - u_s(g)$ for the same $g \in U_t$.
That is, $a_s$ does not strongly envy $a_t$.
Similarly, since $\mcA$ is EF1, $u_t(A_t) \ge u_t(A_s) - u_t(g)$ for some $g \in A_s$;
we have $u_t(U_t) \ge u_t(U_s)$ by the RR algorithm.
Thus $u_t(A_t \cup U_t) \ge u_t(A_s \cup U_s) - u_t(g)$ for the same $g \in A_s$, that is, $a_t$ does not strongly envy $a_s$.
Therefore, the returned allocation is EF1.

We estimate the total utility of the returned allocation as follows.
Since the agent index $1$ precedes $2$ in the permutation $(i_1, i_2, i_3) \in \{(1, 2, 3), (3, 1, 2)\}$, we have $u_3(U_2) \ge u_3(U_1)$ from the RR algorithm;
we also have $u_1(A_1) \ge \frac 12 u_1(K)$ and $u_3(A_1) \le u_3(K)$ from Definition~\ref{def12} of well-defined allocation,
the total utility of the returned allocation is $u_1(A_1 \cup U_1) + u_3(A_2 \cup A_3 \cup U_2 \cup U_3)$, which is at least
\[
\frac 12 u_1(K) + \frac 12 u_3(M \setminus A_1)
	= \frac 12 u_1(K) + \frac 12 u_3(M) - \frac 12 u_3(A_1)
	\ge \frac 12 u_1(K) + \frac 12 u_3(M \setminus K).
\]
This proves the lemma.
\end{proof}

\begin{theorem}
\label{thm14}
Let $K$ be the critical set of the item $g^*$ produced by the procedure {\sc Algo2}.
If $\max_{g \in K} u_1(g) \ge \frac 12 u_1(K)$, then {\sc Approx3} returns a complete EF1 allocation with its total utility at least $\frac 12 OPT$.
\end{theorem}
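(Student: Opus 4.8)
The plan is to chain the structural bound of Lemma~\ref{lemma08} with the algorithmic guarantee of Lemma~\ref{lemma13}, after exhibiting a concrete well-defined allocation to feed into {\sc Approx3}. First I would let $g \in K$ be an item attaining $u_1(g) = \max_{g' \in K} u_1(g') \ge \frac12 u_1(K)$, which exists by the hypothesis of the theorem, and consider the starting allocation $\mcA = (\{g\}, \emptyset, \emptyset)$. I claim $\mcA$ is well-defined with permutation $(1,2,3)$ in the sense of Definition~\ref{def12}: it is trivially EF1 since only $a_1$ holds an item; condition~2 holds because $u_1(A_1) = u_1(g) \ge \frac12 u_1(K)$ and, since $g \in K$ and $u_3$ is nonnegative and additive, $u_3(A_1) = u_3(g) \le u_3(K)$; and condition~3 holds with $(i_1, i_2, i_3) = (1,2,3)$ because $a_1$ envies no one (the other bundles are empty) and $a_2$ does not envy $a_3$ (both bundles are empty).

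Next I would invoke Lemma~\ref{lemma13} on this well-defined allocation: {\sc Approx3} fixes the agent order $\langle a_3, a_2, a_1\rangle$, runs Round-Robin on the unassigned items, and returns a complete EF1 allocation whose total utility is at least $\frac12 u_1(K) + \frac12 u_3(M \setminus K)$. Finally, since $K$ is the critical set of $g^*$, Lemma~\ref{lemma08} gives $OPT \le u_1(K) + u_3(M \setminus K)$, so the returned allocation has total utility at least $\frac12\big(u_1(K) + u_3(M \setminus K)\big) \ge \frac12 OPT$, which is the claimed bound.

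There is essentially no hard obstacle here: the theorem is a bookkeeping composition of two already-proven lemmas, and the only point requiring care is verifying that $\mcA = (\{g\}, \emptyset, \emptyset)$ satisfies all three clauses of Definition~\ref{def12}, in particular that $g$ indeed lies in $K$ (so that $u_3(g) \le u_3(K)$), which is immediate from the choice of $g$ as a maximizer of $u_1$ over $K$. For completeness I would also remark that, although the statement refers to the unknown item $g^*$ and its critical set, the overall algorithm never needs to identify $g^*$: it runs {\sc Algo2} and the case-specific subroutine for every item $g_i$ and keeps the best allocation found, so the bound $\frac12 OPT$ is certified by the iteration corresponding to the (unknown) index with $g_i = g^*$.
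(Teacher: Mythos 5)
Your proposal is correct and follows essentially the same route as the paper: exhibit $(\{g\},\emptyset,\emptyset)$ with $g=\arg\max_{g\in K}u_1(g)$ as a well-defined allocation with permutation $(1,2,3)$, then combine Lemma~\ref{lemma13} with Lemma~\ref{lemma08}. Your additional verification of the three clauses of Definition~\ref{def12} and the remark on enumerating all items to cover the unknown $g^*$ are just spelled-out details of the same argument.
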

\begin{proof}
For this set $K$, the allocation $\mcA = (\{g\}, \emptyset, \emptyset)$ is well-defined with the permutation $(1, 2, 3)$, where $g = \arg\max_{g \in K} u_1(g)$.
The theorem follows from Lemmas~\ref{lemma13} and~\ref{lemma08}.
\end{proof}

\subsection{Case 2: $\max_{g \in K} u_1(g) < \frac 12 u_1(K)$ and $u_3(g'_k) > u_3(M \setminus K)$}
Recall that $K$ is the critical set of an item $g_i$ produced by the procedure {\sc Algo2}.
In this case, $k = |K| \ge 3$ and we work with the new item order $S'$ such that $S' \setminus g'_{k-1}$ is regular, and $g'_k \ne g_i$.
Adding the last inequality $\sum_{j=1}^{k-1} u_3(g'_j) < \frac 13 u_3(M \setminus g_i) + u_3(g_i)$ from Remark~\ref{remark11} and $u_3(g'_k) > u_3(M \setminus K)$ together gives
\begin{equation}
\label{eq03}
u_3(g'_k) > \frac 13 u_3(M \setminus g_i) \ge u_3(K \setminus \{g'_k, g_i\}), \mbox{ and thus } u_3(g'_k) > \frac 12 u_3(K \setminus g_i).
\end{equation}
One sees from the above three lower bounds on $u_3(g'_k)$ that the item $g'_k$ is very valuable to agents $a_2$ and $a_3$.
We design another algorithm called {\sc Approx7} for this case to construct a complete EF1 allocation, in which $g'_k$ is the only item assigned to $a_3$.

Inside the algorithm {\sc Approx7}, the first procedure {\sc Algo4} produces a partial EF1 and good allocation.
It starts with the allocation $\mcA = (\{g'_1\}, \emptyset, \{g'_k\})$, which is EF1 and good (see Definition~\ref{def02}),
and uses the order $S'$ to allocate the items in the two sets $K \setminus \{g'_1 \cup g'_k\}$ and $M \setminus K$ to the agents $a_1$ and $a_2$, respectively,
till exactly one of them becomes empty.
A detailed description of the procedure is depicted in Algorithm~\ref{Algo4}.

\begin{algorithm}
\caption{{\sc Algo4} for a partial EF1 and good allocation}
\label{Algo4}
{\bf Input:} The critical set $K$ for item $g_i$ satisfying $\max_{g \in K} u_1(g) < \frac 12 u_1(K)$,
	the new item order $S'$ in which $g'_k \ne g_i$, and $u_3(g'_k) > u_3(M \setminus K)$.

{\bf Output:} A partial EF1 allocation $\mcA$.
\begin{algorithmic}[1]
\State Initialize $\mcA=(\{g'_1\}, \emptyset, \{g'_k\})$ and $U = M \setminus \{g'_1, g'_k\}$;
\State initialize $k_1=2$ and $k_2=m$;
\While{($k_1 < k < k_2$)}
\If{($a_1$ is not envied by $a_2$)}
\State $A_1 = A_1 \cup \{g'_{k_1}\}$, $U = U \setminus g'_{k_1}$, and $k_1=k_1+1$;
\Else
\State $A_2 = A_2 \cup \{g'_{k_2}\}$, $U = U \setminus g'_{k_2}$, and $k_2=k_2-1$;
\EndIf
\EndWhile
\State return $\mcA$.
\end{algorithmic}
\end{algorithm}

\begin{lemma}
\label{lemma15}
Given the critical set $K$ for item $g_i$ produced by the procedure {\sc Algo2} satisfying $\max_{g \in K} u_1(g) < \frac 12 u_1(K)$, $k = |K|$,
the new item order $S'$ in which $g'_k \ne g_i$, and $u_3(g'_k) > u_3(M \setminus K)$,
{\sc Algo4} produces a partial EF1 and good allocation.
\end{lemma}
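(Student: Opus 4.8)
The plan is to prove that \textsc{Algo4} maintains two invariants throughout its execution: the current allocation $\mcA$ is (partial) EF1, and it is good in the sense of Definition~\ref{def02}. I would proceed by induction on the number of iterations of the while loop. For the base case, the initial allocation $\mcA = (\{g'_1\}, \emptyset, \{g'_k\})$ must be checked: it is trivially EF1 since $a_2$ holds nothing and $a_1, a_3$ each hold a single item; and it is good because $A_2 = \emptyset$ (so both $A_1 \prec A_2$ and $A_2 \prec A_3$ hold vacuously by the $\rho(\emptyset) = \pm\infty$ convention), while $A_1 = \{g'_1\} \prec \{g'_k\} = A_3$ since $g'_1$ comes before $g'_k$ in the order $S'$ (as $1 < k$, and $S' \setminus g'_{k-1}$ is regular so positions before $k-1$ are in preference order, and $g'_{k-1}$ itself, being in $K$, also precedes $g'_k$ when $g'_k \neq g_i$; here one uses that $g'_k$ is the last element of $K$).

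The inductive step splits into the two branches of the \texttt{if}. In the branch where $a_1$ is not envied by $a_2$, agent $a_1$ receives $g'_{k_1}$ with $k_1 < k$, so $g'_{k_1} \in K$ and $g'_{k_1}$ sits at a position strictly before both $g'_{k-1}$ and $g'_k$. The goodness invariant is preserved: $A_1 \cup \{g'_{k_1}\}$ remains a subset of the first $k$ items of $S'$ while $A_3 = \{g'_k\}$ and $A_2 \subseteq M \setminus K$ consists of items at positions $> k$, so the precedence $A_1 \prec A_2$ and $A_1 \prec A_3$ still hold (and $A_2 \prec A_3$ unchanged). For EF1: before the update $a_1$ is not envied by $a_2$, i.e.\ $u_1(A_2) \le u_1(A_1)$ — wait, envy is measured in each agent's own utility, so ``$a_1$ is not envied by $a_2$'' means $u_2(A_2) \ge u_2(A_1)$, i.e.\ $u_3(A_2) \ge u_3(A_1)$; combined with $a_3$'s fixed bundle and the fact that only $a_1$ gains an item, no one strongly envies $a_1$ after the update (removing $g'_{k_1}$ restores the pre-update bundle of $a_1$, which no one envied), so EF1 is preserved. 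The symmetric branch where $a_2$ receives $g'_{k_2}$ with $k_2 > k$ is handled dually: $g'_{k_2} \in M \setminus K$ lies after all of $K$, so $A_2$ grows while staying ``after'' $A_1$ and $A_3$ in preference order, keeping goodness; and since $a_2$ currently envies $a_1$ (the \texttt{else} branch), hence $u_3(A_2) < u_3(A_1)$, and $a_3$'s single item $g'_k$ is extremely valuable by Eq.~(\ref{eq03}) so $a_3$ does not envy $a_2$, one checks no one strongly envies $a_2$ after it receives one more item.

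The step I expect to be the main obstacle is verifying the EF1 invariant in the \texttt{else} branch, specifically ensuring that after $a_2$ receives $g'_{k_2}$, agent $a_3$ does not strongly envy $a_2$, and that $a_1$ does not strongly envy $a_2$. For $a_3$ versus $a_2$: $a_3$ holds only $g'_k$, and by Eq.~(\ref{eq03}) we have $u_3(g'_k) > u_3(M \setminus K) \ge u_3(A_2)$ at every step (since $A_2 \subseteq M \setminus K$ always), so in fact $a_3$ never even weakly envies $a_2$ — this is exactly the point of Case~2's hypothesis. For $a_1$ versus $a_2$: entering the \texttt{else} branch means $a_1$ \emph{is} envied by $a_2$, which tells us $u_3(A_2) < u_3(A_1)$, but we need an EF1 statement about $u_1(A_2 \cup \{g'_{k_2}\})$ versus $u_1(A_1)$; here one argues that $g'_{k_2} \in M \setminus K$ has small $\rho$, and uses the goodness/precedence structure together with the previously-maintained EF1 bound, removing the single item $g'_{k_2}$ (or whichever item witnesses EF1 from the previous step) to conclude. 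I would organize this by first listing exactly which ordered pairs of agents need checking at each step, observing that the pair $(a_3, a_2)$ is permanently settled by Eq.~(\ref{eq03}), the pairs involving only the static bundle $\{g'_k\}$ of $a_3$ against $a_1$ are easy, and the only genuinely dynamic pair is $(a_1, a_2)$ — which is precisely the pair the \texttt{if}/\texttt{else} test is designed to control, mirroring the argument in Lemma~\ref{lemma04}.
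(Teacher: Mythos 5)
Your overall strategy is the same as the paper's: induct over the while-loop maintaining ``good $+$ EF1,'' dispatch every pair involving $a_3$ via the Case~2 hypothesis $u_3(g'_k) > u_3(M\setminus K) \ge u_3(A_2)$ together with Eq.~(\ref{eq03}), and control the dynamic pair $(a_1,a_2)$ through the \texttt{if}/\texttt{else} test. The base case, the \texttt{if}-branch, and the pair $(a_3,a_2)$ are handled as in the paper. (You never note that the output is \emph{partial}, but that is immediate: the loop stops with either $k_1=k<k_2$ or $k_1<k=k_2$, so some item is left unassigned.)

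The genuine gap is exactly at the step you flag as ``the main obstacle'': showing that after $a_2$ receives $g'_{k_2}$ in the \texttt{else} branch, $a_1$ does not strongly envy $a_2$. The mechanism you sketch --- ``uses \ldots the previously-maintained EF1 bound, removing the single item $g'_{k_2}$ (or whichever item witnesses EF1 from the previous step)'' --- does not close it. If $h\in A_2$ is the old EF1 witness, then after the update $u_1\bigl((A_2\cup\{g'_{k_2}\})\setminus h\bigr) = u_1(A_2\setminus h)+u_1(g'_{k_2})$ can exceed $u_1(A_1)$, and removing $g'_{k_2}$ instead requires the \emph{full} non-envy statement $u_1(A_1)\ge u_1(A_2)$ before the update, which plain EF1 does not give; nor does ``$g'_{k_2}$ has small $\rho$'' by itself. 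What the paper uses here is Lemma~\ref{lemma03}: at the start of the iteration the allocation is good ($A_1\prec A_2$), so $a_1$ and $a_2$ cannot mutually envy; since the \texttt{else} condition says $a_2$ envies $a_1$, it follows that $a_1$ does not envy $a_2$, i.e.\ $u_1(A_1)\ge u_1(A_2)$, and then removing the just-added $g'_{k_2}$ witnesses EF1 (while $a_2$'s bundle only grew, so its non-strong-envy is preserved). You name ``goodness/precedence structure'' as an ingredient but never draw this one-line consequence, so as written the inductive step is incomplete; inserting the Lemma~\ref{lemma03} argument makes your proof coincide with the paper's. One further small inaccuracy: in the \texttt{if}-branch you assert $g'_{k_1}$ ``sits at a position strictly before both $g'_{k-1}$ and $g'_k$,'' which fails when $k_1=k-1$ (and indeed $g'_{k-1}$ may be the relocated item $g_i$); this is harmless for the way the lemma is used, since that assignment ends the loop with $k_1=k$, but it should not be stated as holding for every iteration.
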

\begin{proof}
Note from the description that when {\sc Algo4} terminates,
either $k_1 = k < k_2$ or $k_1 < k = k_2$, i.e., at least one item remains unassigned, and thus the returned allocation $\mcA$ is incomplete.

We remark that since the procedure starts with the allocation $(\{g'_1\}, \emptyset, \{g'_k\})$,
and allocates the items in the two sets $K \setminus \{g'_1 \cup g'_k\}$ and $M \setminus K$ to the agents $a_1$ and $a_2$, respectively,
the bundle $A_1 \prec A_2$ and $A_1 \prec A_3$ throughout the procedure, and thus the allocation maintains good (see Definition~\ref{def02}).
Also throughout the procedure, since $A_3 = \{g'_k\}$ contains a single item, no one strongly envies $a_3$;
since $u_3(g'_k) > u_3(M \setminus K)$ and Eq.~(\ref{eq03}), agent $a_3$ does not strongly envy any of $a_1$ or $a_2$.

We show next that throughout the procedure, between $a_1$ and $a_2$, no one strongly envies the other, and thus the final allocation is EF1.
This is true for the starting allocation $(\{g'_1\}, \emptyset, \{g'_k\})$.
Assume this is true for the allocation at the beginning of an iteration of the while-loop;
since the allocation is good, by Lemma~\ref{lemma03} $a_1$ and $a_2$ do not envy each other.
If $a_1$ is not envied by $a_2$, then the item $g'_{k_1} \in K \setminus g'_k$ is assigned to $a_1$,
and thus $a_2$ does not strongly envy $a_1$ due to $g'_{k_1}$ at the end of the iteration;
$a_1$ remains not strongly envy $a_2$, also due to $g'_{k_1}$ at the end of the iteration.
If $a_2$ is not envied by $a_1$, then the item $g'_{k_2} \in M \setminus K$ is assigned to $a_2$,
and thus $a_1$ does not strongly envy $a_2$ due to $g'_{k_2}$ at the end of the iteration;
$a_2$ remains not strongly envy $a_1$, also due to $g'_{k_2}$ at the end of the iteration.
\end{proof}

If the procedure {\sc Algo4} terminates at $k_1 < k = k_2$, then the returned allocation is $\mcA = (\{g'_1, \ldots, g'_{k_1-1}\}, M \setminus K, \{g'_k\})$
and the unassigned item set is $U = \{g'_{k_1}, \ldots, g'_{k-1}\}$.
The next procedure {\sc Algo5} continues on to assign the items of $U$ to produce a complete EF1 allocation, by re-setting $k_2 = k-1$.
A detailed description of {\sc Algo5} is depicted in Algorithm~\ref{Algo5}.
Basically, if at least one of $a_1$ and $a_2$ is not envied by any of the other two agents,
then the procedure allocates an unassigned item to such a non-envied agent;
otherwise, both $a_1$ and $a_2$ are envied by some agent and the procedure allocates all the remaining unassigned items to $a_1$.
That is, either $g'_{k_1}$ is assigned to $a_1$, or $g'_{k_2}$ is assigned to $a_2$, or all of the items of $U = \{g'_{k_1}, \ldots, g'_{k_2}\}$ are assigned to $a_1$, respectively.
One sees that the allocation $\mcA$ remains good before the last item is assigned
(in fact, the allocation loses being good only if the last item is $g'_{k-1} = g_i$ and it is assigned to $a_1$) in the procedure~{\sc Algo5},
and consequently $a_1$ and $a_2$ do not envy each other by Lemma~\ref{lemma03} before the last item is assigned.

\begin{algorithm}
\caption{{\sc Algo5} for a complete EF1 allocation}
\label{Algo5}
{\bf Input:} The critical set $K$ for item $g_i$ satisfying $\max_{g \in K} u_1(g) < \frac 12 u_1(K)$,
	the new item order $S'$ in which $g'_{k-1} = g_i$, and $u_3(g'_k) > u_3(M \setminus K)$;
	the EF1 and good allocation $\mcA = (\{ g'_1, \ldots, g'_{k_1-1}\}, M \setminus K, \{g'_k\})$ returned by {\sc Algo4}, where $k_1 < k$.

{\bf Output:} A complete EF1 allocation $\mcA$.
\begin{algorithmic}[1]
\State Set $k_2 = k-1$ and the unassigned item set $U = \{g'_{k_1}, \ldots, g'_{k_2}\}$;

\While{($k_1 \le k_2$)}
\If{($a_1$ is not envied)}
\State $A_1 = A_1 \cup \{g'_{k_1}\}$, $U = U \setminus g'_{k_1}$, and $k_1 = k_1+1$;
\ElsIf{($a_2$ is not envied)}
\State $A_2 = A_2 \cup \{g'_{k_2}\}$, $U = U \setminus g'_{k_2}$, and $k_2 = k_2-1$;
\Else
\State $A_1 = A_1 \cup U$, and $k_1 = k_2 + 1$;
\EndIf
\EndWhile

\State return $\mcA$.
\end{algorithmic}
\end{algorithm}

\begin{lemma}
\label{lemma16}
{\sc Algo5} produces a complete EF1 allocation.
\end{lemma}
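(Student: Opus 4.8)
The plan is to dispose of completeness first and then to prove the EF1 property by induction on the iterations of the while-loop, the heart of which is to show that the else-branch --- the one that dumps all of $U$ on $a_1$ --- is in fact never taken. Completeness is immediate: {\sc Algo5} starts from the allocation returned by {\sc Algo4} in the case $k_1<k=k_2$, namely $(\{g'_1,\ldots,g'_{k_1-1}\},\, M\setminus K,\, \{g'_k\})$, whose only unassigned items form $U=\{g'_{k_1},\ldots,g'_{k-1}\}$, the remainder of the critical set $K$, and each pass of the loop hands out at least one item of $U$ (either $g'_{k_1}$, or $g'_{k_2}$, or all of $\{g'_{k_1},\ldots,g'_{k_2}\}$), so the loop halts with $k_1>k_2$ and a complete allocation. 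Since $A_3=\{g'_k\}$ is never modified, no agent ever strongly envies $a_3$; moreover $a_3$ never strongly envies $a_1$, because $A_1\subseteq K\setminus g'_k$ gives $A_1\setminus g_i\subseteq K\setminus\{g'_k,g_i\}$ and hence, by Eq.~(\ref{eq03}), $u_3(A_1\setminus g_i)\le u_3(K\setminus\{g'_k,g_i\})<u_3(g'_k)=u_3(A_3)$; and $a_3$ never strongly envies $a_2$, because $a_2$ accepts an item in {\sc Algo5} only while it is not envied and $u_3(A_2)$ starts the procedure at $u_3(M\setminus K)<u_3(g'_k)$, so deleting the last item received by $a_2$ (if any) leaves a bundle of value at most $u_3(g'_k)$.

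The main obstacle is showing the else-branch is dead. As noted before the statement, the running allocation stays good until the assignment that puts $g_i=g'_{k-1}$ into $A_1$, and such an assignment can only happen in the $a_1$-step with $k_1=k-1$, which immediately ends the loop; hence the allocation is good at the start of every iteration that actually runs. Suppose, for contradiction, that at such an iteration both $a_1$ and $a_2$ are envied. Then $A_1\subseteq\{g'_1,\ldots,g'_{k_1-1}\}\subseteq K\setminus\{g'_k,g_i\}$, so $u_3(A_1)<u_3(g'_k)=u_3(A_3)$ and $a_3$ does not envy $a_1$, and therefore ``$a_1$ envied'' forces $a_2$ to envy $a_1$. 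By Lemma~\ref{lemma09} applied with $a_1$ envied, if $a_2$ did not envy $a_3$ we would obtain $u_3(A_3)<u_3(A_1)$, a contradiction; hence $a_2$ envies $a_3$, so $u_3(A_2)<u_3(g'_k)=u_3(A_3)$, i.e.\ $a_3$ does not envy $a_2$, and ``$a_2$ envied'' then forces $a_1$ to envy $a_2$. But now $a_1$ and $a_2$, which are of different types, mutually envy each other in a good allocation, contradicting Lemma~\ref{lemma03}. So the else-branch never executes, the loop makes progress through the $a_1$- and $a_2$-steps only, and the running allocation is good throughout except possibly after the very last assignment.

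Finally, I would establish EF1 by induction on the iterations, with Lemma~\ref{lemma15} giving the base case. In an iteration that hands an item to a non-envied agent ($a_1$ or $a_2$), that agent was not envied beforehand, so afterwards nobody strongly envies it (delete the new item); the agent whose bundle grew cannot begin to envy anyone; and the third pair of bundles is unchanged; so, together with the facts about $a_3$ above, EF1 is preserved. The one delicate iteration is the final one, in which the $a_1$-step places $g_i$ into $A_1$ with $k_1=k-1$, so that $A_1=K\setminus g'_k$ and $A_2=M\setminus K$: since $a_1$ was not envied before this step, it is not strongly envied after; $a_2$ does not strongly envy $a_1$ because ``$a_1$ not envied before'' yields $u_3(A_2)\ge u_3(K\setminus\{g'_k,g_i\})=u_3((K\setminus g'_k)\setminus g_i)$; $a_3$ does not strongly envy $a_1$ by the bound established above; and all remaining relations hold because $|A_3|=1$ and $u_3(A_2)=u_3(M\setminus K)<u_3(g'_k)$. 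This closes the induction, so {\sc Algo5} returns a complete EF1 allocation.
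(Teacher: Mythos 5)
Your proof is correct under the hypotheses stated in Algorithm~\ref{Algo5}'s input, but it takes a genuinely different route from the paper at the crucial step. The paper accepts that the else-branch (both $a_1$ and $a_2$ envied, all of $U$ dumped on $a_1$) may execute, and proves the resulting allocation $(A_1\cup U, A_2,\{g'_k\})$ is still EF1: it first shows the claim that $a_3$ must envy $a_2$ at that moment (otherwise $a_1$ would be unenvied, using goodness and Lemma~\ref{lemma03}), and then uses Eq.~(\ref{eq03}) together with $u_3(A_2)>u_3(g'_k)$ to conclude that neither $a_2$ nor $a_3$ strongly envies $a_1$ after the dump. You instead argue that the else-branch is dead code: since $g_i=g'_{k-1}$ can only enter $A_1$ in a loop-terminating step, at the start of every iteration $A_1\subseteq K\setminus\{g'_k,g_i\}$, so Eq.~(\ref{eq03}) gives $u_3(A_1)<u_3(g'_k)$, $a_3$ never envies $a_1$, and the assumption that both $a_1$ and $a_2$ are envied collapses into mutual envy between $a_1$ and $a_2$ in a good allocation, contradicting Lemma~\ref{lemma03} (your detour through Lemma~\ref{lemma09} is valid but unnecessary, since $u_3(A_2)<u_3(A_1)<u_3(A_3)$ already shows $a_3$ does not envy $a_2$). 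What each approach buys: yours yields the sharper structural fact that, with $A_3=\{g'_k\}$, $u_3(g'_k)>u_3(M\setminus K)$ and $g_i$ sitting at position $k-1$, the fallback branch is vacuous, so only the two standard ``give to an unenvied agent'' steps occur and EF1 follows by the usual induction; the paper's argument is more robust, because it nowhere needs $g_i\notin A_1$ before the last step and hence also covers invocations of {\sc Algo5} where $g_i$ may occupy an earlier position of $S'$ (e.g.\ when $S'=S$ and $i<k-1$, which the call from {\sc Approx7} does not exclude) as well as the reuse inside Lemma~\ref{lemma21}; in those settings your dead-branch claim is not guaranteed and the else-branch would need exactly the paper's analysis. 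So your proof stands for the lemma as stated, but be aware that its reach is narrower than the paper's.
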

\begin{proof}
Since there will not be any unassigned item at the end of the procedure, the returned allocation is complete.
We remind the readers that during the execution of {\sc Algo5}, the allocation $\mcA$ remains good before the last item is assigned.

The starting allocation $\mcA = (\{ g'_1, \ldots, g'_{k_1-1}\}, M \setminus K, \{g'_k\})$, which is returned by {\sc Algo4}, is EF1 and good.
At the beginning of an iteration of the while-loop,
if $a_1$ is not envied by any other agent (the case where $a_2$ is not envied by any other agent is symmetric),
then after $g'_{k_1}$ is allocated to $a_1$ no agent will strongly envy $a_1$.
That is, the updated allocation is EF1 too.

Consider the last case of the while-loop, i.e., at the beginning of the last iteration $\mcA = (A_1, A_2, \{g'_k\})$ is the EF1 and good allocation
in which both $a_1$ and $a_2$ are envied by some agent.

We claim that $a_3$ envies $a_2$.
If not, i.e., $u_3(g'_k) \ge u_3(A_2)$, then $a_2$ is envied by $a_1$, and thus by Lemma~\ref{lemma03}, $a_2$ does not envy $a_1$, i.e., $u_3(A_2) \ge u_3(A_1)$.
It follows that $a_3$ does not envy $a_1$.
That is, $a_1$ is not envied by any agent, a contradiction.

In the final allocation $\mcA' = (A_1 \cup U, A_2, \{g'_k\})$, $a_1$ does not strongly envy any of $a_2$ or $a_3$ for sure, and none of $a_2$  and $a_3$ strongly envies the other.
From Eq.~(\ref{eq03}) $a_3$ does not strongly envy $a_1$.
Since $a_3$ envies $a_2$, $a_2$ does not strongly envy $a_1$ either.

In summary, no agent strongly envies another agent in the final allocation $\mcA'$.
\end{proof}

The allocation returned by the procedure~{\sc Algo5} is complete and EF1.
The next procedure~{\sc Algo6} takes in a complete EF1 allocation $\mcA = (A_1, A_2, A_3)$,
and if in which agent $a_1$ envies $a_2$, i.e., $u_1(A_1) < u_1(A_2)$, then outputs the allocation between $\mcA$ and $(A_2, A_1, A_3)$ with a larger total utility.
In general, for a complete EF1 allocation $\mcA = (A_1, A_2, A_3)$, the allocation $(A_2, A_1, A_3)$ after bundle swapping might not be EF1.
We nevertheless show in the next lemma that the allocation returned by the procedure~{\sc Algo6} is EF1.

\begin{algorithm}
\caption{{\sc Algo6} for a larger total utility}
\label{Algo6}
{\bf Input:} The complete EF1 allocation $\mcA = (A_1, A_2, A_3)$ returned by {\sc Algo5}.

{\bf Output:} A complete EF1 allocation.
\begin{algorithmic}[1]
\If{($u_1(A_1) \ge u_1(A_2)$)}
\State output $\mcA$;
\Else
\State output the allocation between $(A_1, A_2, A_3)$ and $(A_2, A_1, A_3)$ with a larger total utility.
\EndIf
\end{algorithmic}
\end{algorithm}

\begin{lemma}
\label{lemma17}
{\sc Algo6} produces a complete EF1 allocation.
\end{lemma}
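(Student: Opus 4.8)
The plan is to verify that the output of {\sc Algo6} is always EF1, treating the two branches separately. If $u_1(A_1) \ge u_1(A_2)$, the algorithm returns $\mcA$ unchanged, which is EF1 by the hypothesis (it is the output of {\sc Algo5}, shown EF1 in Lemma~\ref{lemma16}); nothing to do. So the real content is the branch $u_1(A_1) < u_1(A_2)$, i.e. $a_1$ envies $a_2$ in $\mcA$. Here the algorithm outputs whichever of $\mcA$ and $\mcA' := (A_2, A_1, A_3)$ has larger total utility. Since $\mcA$ itself is EF1, it suffices to show that $\mcA'$ is also EF1; then whichever one is picked, the output is EF1.

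To show $\mcA'$ is EF1, I would check all ordered pairs of agents after the swap. The third bundle $A_3$ is untouched, and recall from the structure produced by {\sc Algo4}/{\sc Algo5} that $A_3 = \{g'_k\}$ is a singleton, so no agent strongly envies $a_3$ in $\mcA'$; and the relations involving $a_3$ as the envier are exactly as in $\mcA$ up to the roles of the first two bundles both being of the second type's valuation $u_3$ — concretely, in $\mcA'$ agent $a_3$ compares $u_3(A_3)$ against $u_3(A_2)$ and $u_3(A_1)$, which is the same multiset of comparisons $a_3$ faced in $\mcA$, so $a_3$ does not strongly envy anyone in $\mcA'$ because it did not in $\mcA$. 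It remains to handle the pairs $(a_1, a_2)$ and $(a_2, a_1)$ in $\mcA'$, and the pairs where $a_1$ or $a_2$ envies $a_3$. For $a_1$ in $\mcA'$ holding bundle $A_2$: against $A_3=\{g'_k\}$ it cannot strongly envy (singleton); against $a_2$ now holding $A_1$, we need $u_1(A_2) \ge u_1(A_1 \setminus g)$ for some $g \in A_1$ — but we are in the case $u_1(A_2) > u_1(A_1)$, so this holds trivially (with any $g$). For $a_2$ in $\mcA'$ holding bundle $A_1$: against $A_3$ it cannot strongly envy (singleton); against $a_1$ holding $A_2$, we need $u_3(A_1) \ge u_3(A_2 \setminus g)$ for some $g \in A_2$ — this is precisely the statement that $a_2$ did not strongly envy $a_1$ in the \emph{original} allocation $\mcA$ (same two bundles, same valuation $u_3$ for $a_2$), which holds since $\mcA$ is EF1. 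Collecting these, $\mcA'$ is EF1, completing the branch and hence the lemma.

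The main obstacle is making the ``relations involving $a_3$, and $a_2$'s comparison to $a_1$, are inherited from $\mcA$'' argument airtight: one must be careful that $a_2$ in $\mcA$ uses valuation $u_3$ (it is of the second type), so swapping $A_1$ and $A_2$ between $a_1$ (valuation $u_1$) and $a_2$ (valuation $u_3$) changes which valuation is applied to which bundle. The key observations that rescue this are (i) $a_2$ and $a_3$ share the valuation $u_3$, so after the swap the pair of bundles $\{A_1, A_2\}$ is still evaluated under $u_3$ by the agent now holding $A_1$ (namely $a_2$) — so $a_2$-in-$\mcA'$ facing $a_1$ is the mirror of $a_3$-type comparisons, and specifically ``$a_2$ does not strongly envy $a_1$ under $u_3$'' is exactly an EF1 condition already guaranteed by $\mcA$; and (ii) $a_1$ under $u_1$ holding the \emph{larger} $u_1$-bundle $A_2$ obviously does not envy $a_1$'s old bundle. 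Once these two points are nailed down, the remaining singleton-$A_3$ observations are immediate and the proof is short.
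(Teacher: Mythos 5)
Your overall decomposition is the same as the paper's (the branch $u_1(A_1)\ge u_1(A_2)$ is trivial; in the other branch it suffices to show the swapped allocation $(A_2, A_1, A_3)$ is EF1; the singleton $A_3=\{g'_k\}$ handles all pairs involving $a_3$; and $u_1(A_2)>u_1(A_1)$ handles $a_1$ versus $a_2$). However, there is a genuine gap at the one pair that actually requires work: $a_2$, now holding $A_1$, versus $a_1$, now holding $A_2$. You claim the needed condition $u_3(A_1)\ge u_3(A_2\setminus g)$ for some $g\in A_2$ ``is precisely the statement that $a_2$ did not strongly envy $a_1$ in the original allocation.'' It is not: in $\mcA$, agent $a_2$ holds $A_2$ and $a_1$ holds $A_1$, so EF1 there gives $u_3(A_2)\ge u_3(A_1\setminus g)$ for some $g\in A_1$ --- the roles of the two bundles are reversed relative to what you need after the swap. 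No EF1 condition of $\mcA$ (nor of any allocation in which $a_1$ holds $A_1$) asserts $u_3(A_1)\ge u_3(A_2\setminus g)$, and in general an EF1 allocation can fail this inequality badly, e.g.\ when $u_3(A_2\setminus g)>u_3(A_1)$ for every $g\in A_2$.

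The paper closes exactly this gap using the algorithmic history rather than the EF1 property of the final allocation: let $g$ be the last item placed into $A_2$, by {\sc Algo4} or {\sc Algo5}. If it was placed by {\sc Algo4}, the branch condition says $a_2$ envied $a_1$ at that moment; if by {\sc Algo5}, then $a_1$ was envied and $a_2$ was not, so Lemma~\ref{lemma09} yields that $a_2$ envied $a_1$. Either way $u_3(A_2\setminus g) < u_3(\{g'_1,\ldots,g'_{k_1-1}\}) \le u_3(A_1)$, since $a_1$'s bundle only grows afterwards, which is exactly the inequality your argument asserts for free. Your treatment of the remaining pairs ($a_3$'s comparisons are unchanged as a multiset; nobody strongly envies the singleton $A_3$; $a_1$ does not envy $a_2$ because $u_1(A_2)>u_1(A_1)$) is correct and matches the paper, but without the history-based argument for the pair $(a_2,a_1)$ the proof does not go through.
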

\begin{proof}
Note that $\mcA = (A_2, A_1, \{g'_k\})$ can be returned only if $u_1(A_1) < u_1(A_2)$.
We show that in such a case, none of $a_1$ and $a_2$ strongly envies the other in $(A_2, A_1, \{g'_k\})$
(noting that $a_3$ does not strongly envy $a_1$ or $a_2$, the same as in $\mcA$, and none of $a_1$ and $a_2$ strongly envies $a_3$ who is allocated with the single item $g'_k$).
Because $u_1(A_2) > u_1(A_1)$, $a_1$ does not envy $a_2$ in the allocation $(A_2, A_1, \{g'_k\})$.

Let $g$ be the last item allocated to $A_2$, done either by {\sc Algo4} or by {\sc Algo5}.
If this is done by {\sc Algo4}, then ($g$ is $g'_{k+1}$ and) by Lines 6--7 in the description at that moment $a_2$ envies $a_1$,
and thus $u_3(A_2 \setminus g) < u_3(\{g'_1, \ldots, g'_{k_1-1}\} \le u_3(A_1)$.
If $g$ is allocated by {\sc Algo5}, then by Lines 5--6 in the description at that moment $a_1$ is envied by some agent and $a_2$ is not envied by any agent,
and thus by Lemma~\ref{lemma09} $u_3(A_2 \setminus g) < u_3(\{g'_1, \ldots, g'_{k_1-1}\} \le u_3(A_1)$.
Therefore, we always have $u_3(A_1) > u_3(A_2 \setminus g)$, i.e., $a_2$ does not strongly envy $a_1$ in the allocation $(A_2, A_1, \{g'_k\})$.
This proves the lemma.
\end{proof}

The algorithm, denoted as {\sc Approx7}, for producing a complete EF1 allocation for Case 2, is depicted in Algorithm~\ref{Approx7},
which utilizes the three procedures introduced in the above.
When {\sc Algo4} terminates at $k_1 = k < k_2$, the returned partial allocation is $\mcA = (K \setminus g'_k, A_2, \{g'_k\})$ where $A_2 \subset M \setminus K$,
which is completed by calling the Envy Cycle Elimination (ECE) algorithm to assign the rest of the items, i.e., $g'_{k+1}, \ldots, g'_{k_2}$.

\begin{algorithm}
\caption{{\sc Approx7} for a complete EF1 allocation}
\label{Approx7}
{\bf Input:} The critical set $K$ for item $g_i$ satisfying $\max_{g \in K} u_1(g) < \frac 12 u_1(K)$,
	the new item order $S'$ in which $g'_k \ne g_i$, and $u_3(g'_k) > u_3(M \setminus K)$.

{\bf Output:} A complete EF1 allocation.
\begin{algorithmic}[1]
\State Call {\sc Algo4} to produce an EF1 partial allocation $\mcA = (A_1, A_2, \{g'_k\})$, and $k_1, k_2$;
\State set $U = M \setminus (A_1 \cup A_2 \cup \{g'_k\})$;

\If{($k_1 = k < k_2$)}
\State call the ECE algorithm on $\mcA$ to continue to assign the items in $U$;
\State output the final allocation;
\Else
\State re-set $k_2 = k - 1$;
\State call {\sc Algo5} on $\mcA$ to continue to assign the items in $U$;
\State call {\sc Algo6} on $\mcA$ to output the final allocation. 
\EndIf
\end{algorithmic}
\end{algorithm}

\begin{lemma}
\label{lemma18}
{\sc Approx7} produces a complete EF1 allocation.
\end{lemma}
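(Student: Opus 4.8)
The plan is to dispatch the two branches of the if-statement in {\sc Approx7} separately, leaning entirely on the lemmas already proved for its subroutines. By Lemma~\ref{lemma15}, the call to {\sc Algo4} in Line~1 returns a partial EF1 (and good) allocation $\mcA = (A_1, A_2, \{g'_k\})$, and, as observed right after its proof, {\sc Algo4} stops with either $k_1 = k < k_2$ or $k_1 < k = k_2$; these are precisely the two branches. So it suffices to handle each branch.

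Consider first the branch $k_1 = k < k_2$ (Lines~3--5). Here, as already recorded, $A_1 = K \setminus g'_k = \{g'_1, \ldots, g'_{k-1}\}$, $A_2 \subseteq M \setminus K$, $A_3 = \{g'_k\}$, and the unassigned set is $U = \{g'_{k+1}, \ldots, g'_{k_2}\}$. Since $\mcA$ is EF1 by Lemma~\ref{lemma15}, I would invoke the standard property of the ECE algorithm (Lipton et al.~\cite{LMM04}): starting from \emph{any} partial EF1 allocation, each round first rotates bundles along an envy cycle in the envy digraph and then, once the digraph is acyclic, hands an unassigned item to a source. Rotating bundles along a cycle replaces each participating agent's bundle by one it strictly preferred, and it leaves the multiset of bundles unchanged, so no agent can acquire new strong envy; and the source was envied by nobody, so after receiving a single item nobody strongly envies it. Hence the EF1 invariant is maintained round by round, and when all items of $U$ have been placed the allocation output in Line~5 is complete and EF1.

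For the branch $k_1 < k = k_2$ (Lines~7--9) the argument is immediate: after re-setting $k_2 = k-1$, Line~8 runs {\sc Algo5}, which by Lemma~\ref{lemma16} produces a complete EF1 allocation, and Line~9 runs {\sc Algo6} on it, which by Lemma~\ref{lemma17} again produces a complete EF1 allocation. Combining the two branches, {\sc Approx7} always outputs a complete EF1 allocation, which is the claim.

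The only delicate point is the first branch. One must be sure that the allocation handed to ECE is genuinely EF1 — this is exactly what Lemma~\ref{lemma15} supplies — and that the classical ``ECE preserves EF1'' argument does not tacitly assume one starts from the empty allocation; it does not, since the invariant is re-established locally in each round. I would therefore write out the cycle-rotation and source-assignment steps explicitly (as sketched above) rather than merely citing them, and note that the ``good'' property of $\mcA$ plays no role here, only EF1 being needed by ECE.
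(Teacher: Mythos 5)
Your proposal is correct and follows essentially the same route as the paper's proof: split on the two termination conditions of {\sc Algo4}, invoke Lemma~\ref{lemma15} plus the standard EF1-preservation property of the ECE algorithm of Lipton et al.~\cite{LMM04} in the first branch, and chain Lemmas~\ref{lemma16} and~\ref{lemma17} in the second. The only difference is that you spell out the cycle-rotation/source-assignment argument that the paper simply cites, which is a harmless (and arguably welcome) elaboration.
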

\begin{proof}
If the procedure {\sc Algo4} terminates at $k_1 = k < k_2$,
then since the returned partial allocation is EF1 by Lemma~\ref{lemma15}, the ECE algorithm continues from there to produce a complete EF1 allocation~\cite{LMM04}.

If the procedure {\sc Algo4} terminates at $k_1 < k = k_2$,
then the final allocation is returned by {\sc Algo6} and by Lemma~\ref{lemma17} is complete and EF1.
\end{proof}

\begin{theorem}
\label{thm19}
Given the critical set $K$ for the item $g^*$ produced by the procedure {\sc Algo2} satisfying $\max_{g \in K} u_1(g) < \frac 12 u_1(K)$, $k = |K|$,
the new item order $S'$ in which $g'_k \ne g^*$, and $u_3(g'_k) > u_3(M \setminus K)$,
{\sc Approx7} constructs a complete EF1 allocation with its total utility at least $\frac 12 OPT$.
\end{theorem}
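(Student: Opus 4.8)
The plan is as follows. By Lemma~\ref{lemma18}, {\sc Approx7} already returns a complete EF1 allocation, so only its total utility $SOL$ needs a lower bound; and by Lemma~\ref{lemma08}, $OPT \le u_1(K)+u_3(M\setminus K)$, so it suffices to prove $SOL \ge \frac 12\left(u_1(K)+u_3(M\setminus K)\right)$. Two facts will be used throughout: first, the Case~2 hypothesis gives $u_1(g'_k)\le\max_{g\in K}u_1(g)<\frac 12 u_1(K)$, hence $u_1(K\setminus g'_k)>\frac 12 u_1(K)$; second, the bounds in Eq.~(\ref{eq03}), in particular $u_3(g'_k)>u_3(M\setminus K)$ and $u_3(g'_k)>u_3(K\setminus\{g'_k,g^*\})$, say that the one item $g'_k$, which {\sc Approx7} leaves with $a_3$ forever, carries more $u_3$-value than either of the leftover blocks.

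I would split on how {\sc Algo4} terminates. If it stops with $k_1=k<k_2$, then at that point $a_1$ owns $K\setminus g'_k$ and $a_3$ owns $g'_k$, so the current total utility is at least $u_1(K\setminus g'_k)+u_3(g'_k)>\frac 12 u_1(K)+u_3(M\setminus K)$. {\sc Approx7} then only runs ECE, whose cycle-rotations strictly increase, and whose handouts of items to non-envied agents never decrease, the total utility; hence $SOL>\frac 12 u_1(K)+u_3(M\setminus K)\ge\frac 12\left(u_1(K)+u_3(M\setminus K)\right)$.

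The remaining case is $k_1<k=k_2$, where {\sc Algo5} and then {\sc Algo6} run and $a_3$ ends with exactly $\{g'_k\}$. I would first record the structure of the allocation $(B_1,B_2,\{g'_k\})$ handed to {\sc Algo6}: $B_1=\{g'_1,\dots,g'_{p-1}\}$ is a prefix of $K\setminus g'_k$, $B_2=\{g'_p,\dots,g'_{k-1}\}\cup(M\setminus K)$, $B_1\sqcup B_2=M\setminus g'_k$, and $M\setminus K\subseteq B_2$. If $a_2$ receives no item of $K$ during {\sc Algo5}, i.e.\ $B_1=K\setminus g'_k$, then $a_1$'s bundle alone has $u_1$-value $u_1(K\setminus g'_k)>\frac 12 u_1(K)$ and {\sc Algo6} cannot lower the total, so $SOL\ge u_1(K\setminus g'_k)+u_3(M\setminus K)+u_3(g'_k)>\frac 12\left(u_1(K)+u_3(M\setminus K)\right)$, settling this subcase as in the ECE case. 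Otherwise $g^*$ (the first item {\sc Algo5} offers $a_2$) is assigned to $a_2$, so the allocation stays good throughout {\sc Algo5}, whence $B_1\subseteq X$, $B_1\prec B_2$, and $\rho(g)\ge\rho(g'_k)$ for every $g\in B_1$. Writing $\hat B_1$ for whichever of $B_1,B_2$ has the larger $u_1$-value and $\hat B_2$ for the other, the definition of {\sc Algo6} yields in every case $SOL\ge u_1(\hat B_1)+u_3(\hat B_2)+u_3(g'_k)$, with $u_1(\hat B_1)\ge\frac 12 u_1(M\setminus g'_k)\ge\frac 12\left(u_1(K)-u_1(g'_k)\right)$. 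From here I would finish by combining this with $u_3(\hat B_2)\ge u_3(M\setminus K)$ when $\hat B_2=B_2$, with the $\rho$-monotonicity on $B_1$ (which forces $u_1(B_1)$ to be large exactly when $\rho(g'_k)$, hence $u_1(g'_k)$, is large) when $\hat B_2=B_1$, and with the Eq.~(\ref{eq03}) bounds controlling the $u_3$-mass outside $\{g'_k,g^*\}$, to reach $SOL\ge\frac 12\left(u_1(K)+u_3(M\setminus K)\right)\ge\frac 12 OPT$.

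The main obstacle is this last subcase. Bounding $u_1(\hat B_1)$ merely by $\frac 12\left(u_1(K)-u_1(g'_k)\right)$ is not enough: since $g'_k\in X$ we have $u_1(g'_k)\ge u_3(g'_k)$, and when the preference $\rho(g'_k)$ is large the $u_1$-value ``wasted'' by parking $g'_k$ with $a_3$ swamps the $\frac 12 u_3(M\setminus K)$ slack, so crude item-value accounting fails. One genuinely has to use that in that regime the entire prefix $B_1$ held by, or swapped to, $a_1$ consists of items at least as preferred as $g'_k$, so its $u_1$-value is amplified accordingly, together with the fact that $u_3(g'_k)$ dominates both $u_3(K\setminus\{g'_k,g^*\})$ and $u_3(M\setminus K)$ -- a combination that is available only because $K$ was produced by {\sc Algo2}. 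Verifying that the several sub-cases ($u_1(B_1)$ versus $u_1(B_2)$; which of $B_1,B_2$ carries $M\setminus K$; whether the final placement of {\sc Algo5} keeps the allocation good) each net out to exactly a factor of $2$ is the delicate bookkeeping.
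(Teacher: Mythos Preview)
Your case split mirrors the paper's: termination of {\sc Algo4} at $k_1=k$ versus $k_1<k$, and within the latter, whether $g'_{k-1}$ ends in $A_1$ (your ``$B_1=K\setminus g'_k$'') or in $A_2$. The first branch and the $B_1=K\setminus g'_k$ subcase are handled correctly and match the paper.

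The gap is the last subcase ($g'_{k-1}\in A_2$). You commit from the start to proving $SOL\ge\frac12\bigl(u_1(K)+u_3(M\setminus K)\bigr)$ via Lemma~\ref{lemma08}, and then struggle precisely because that target can be strictly stronger than $\frac12 OPT$ here: as you yourself note, the $u_1$-mass of $g'_k$ parked with $a_3$ may exceed the slack $\frac12 u_3(M\setminus K)$, and your $\rho$-monotonicity idea is left as unverified ``delicate bookkeeping''. The paper sidesteps this entirely by \emph{abandoning} Lemma~\ref{lemma08} in this subcase and comparing $SOL$ directly with $OPT=u_1(A^*_1)+u_3(A^*_2\cup A^*_3)$. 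The missing observation is that $g'_k\notin A^*_1$: since $g'_k\ne g^*$ and $u_3(g'_k)>\frac13 u_3(M\setminus g^*)$ by Eq.~(\ref{eq03}) while $u_3(A^*_1\setminus g^*)\le\frac13 u_3(M\setminus g^*)$ by Lemma~\ref{lemma06}, the item $g'_k$ cannot lie in $A^*_1$. This yields $u_1(A^*_1)\le u_1(M\setminus g'_k)$ and, using $g^*\in A^*_1$, $u_3(A^*_2\cup A^*_3)\le u_3(M\setminus g^*)$. Then the straightforward bounds $SOL\ge\frac12 u_1(M\setminus g'_k)+\frac12 u_3(M)$ (when {\sc Algo6} may swap) and $SOL>\frac12 u_1(M\setminus g'_k)+\frac23 u_3(M\setminus g^*)$ (when it does not, using Remark~\ref{remark11} to get $u_3(A_2)+u_3(A_3)>\frac23 u_3(M\setminus g^*)$) each dominate $\frac12 OPT$ directly, with no preference-ratio bookkeeping needed.

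A minor side issue: you write ``$g^*$ (the first item {\sc Algo5} offers $a_2$)'', but that first item is $g'_{k-1}$, which equals $g^*$ only when the original index of $g^*$ is at least $k$; in general they differ. The paper's argument does not need this identification --- it only uses $g'_{k-1}\in A_2$ together with the Remark~\ref{remark11} bound on $\sum_{j=1}^{k-1}u_3(g'_j)$.
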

\begin{proof}
By Lemma~\ref{lemma18} the final allocation is complete and EF1.

If the procedure {\sc Algo4} terminates at $k_1 = k < k_2$, then $SOL \ge u_1(K \setminus g'_k) + u_3(g'_k)$.
Since $u_1(g'_k) < \frac 12 u_1(K)$ and $u_3(g'_k) > u_3(M \setminus K)$,
the total utility is greater than $\frac 12 u_1(K) + u_3(M \setminus K) \ge \frac 12 OPT$ where the last inequality holds by Lemma~\ref{lemma08}.

If the procedure {\sc Algo4} terminates at $k_1 < k = k_2$, then let $\mcA = (A_1, A_2, \{g'_k\})$ denote the allocation returned by the procedure {\sc Algo5},
in which $A_1 \subseteq K \setminus g'_k$ and $A_2 \subset M \setminus g'_k$.
We distinguish to whom $g'_{k-1}$ is allocated.

If $g'_{k-1} \in A_1$, then $A_1 = K \setminus g'_k$, and the above argument for the first termination condition applies too, so that $SOL > \frac 12 OPT$,
by noting that the procedure~{\sc Algo6} never reduces the total utility.

If $g'_{k-1} \in A_2$, then $(M \setminus K) \cup \{g'_{k-1}, g'_k\} \subseteq A_2 \cup A_3$.
By $\sum_{j=1}^{k-1} u_3(g'_j) < \frac 13 u_3(M \setminus g^*) + u_3(g^*)$ from Remark~\ref{remark11}, $u_3(A_2) + u_3(A_3) > \frac 23 u_3(M \setminus g^*)$.
When $u_1(A_1) < u_1(A_2)$, the procedure {\sc Algo6} outputs $\mcA$ or $(A_2, A_1, A_3)$ with a larger total utility, which is
\[
SOL \ge \frac 12 u_1(A_1 \cup A_2) + \frac 12 u_3(A_1 \cup A_2) + u_3(A_3) \ge \frac 12 u_1(M \setminus g'_k) + \frac 12 u_3(M).
\]
When $u_1(A_1) \ge u_1(A_2)$, $\mcA$ is the final allocation with its total utility
\[
SOL \ge \frac 12 u_1(A_1 \cup A_2) + u_3(A_2) + u_3(A_3) > \frac 12 u_1(M \setminus g'_k) + \frac 23 u_3(M \setminus g^*).
\]
Noting from $g'_k \ne g^*$, $u_3(g'_k) > \frac 13 u_3(M \setminus g^*)$ in Eq.~(\ref{eq03}), and $u_3(A^*_1 \setminus g^*) \le \frac 13 u_3(M \setminus g^*)$ in Lemma~\ref{lemma06},
we have $g'_k \notin A^*_1$.
Therefore, $u_1(A^*_1) \le u_1(M \setminus g'_k)$ and then since $g^* \in A^*_1$,
\[
SOL \ge \frac 12 u_1(M \setminus g'_k) + \min\{\frac 12 u_3(M), \frac 23 u_3(M \setminus g^*)\} \ge \frac 12 u_1(A^*_1) + \frac 12 u_3(A^*_2 \cup A^*_3) = \frac 12 OPT.
\]
This proves the theorem.
\end{proof}

\subsection{Case 3: $\max_{g \in K} u_1(g) < \frac 12 u_1(K)$ and $u_3(g'_k) \le u_3(M \setminus K)$}
Recall that $K$ is the critical set of item $g_i$ produced by the procedure {\sc Algo2}.
The same as in Case 2, here we also have $k = |K| \ge 3$ and we work with the new item order $S'$ such that $K = \{g'_1, \ldots, g'_k\}$,
$S' \setminus g'_{k-1}$ is regular, and $g'_k \ne g_i$.
Note the difference from Case 2 that, here we have $u_3(g'_k) \le u_3(M \setminus K)$.
We design an algorithm denoted {\sc Approx9} for this case to construct a complete EF1 allocation.

Inside the algorithm {\sc Approx9}, the procedure {\sc Algo8} produces a partial EF1 allocation.
Different from {\sc Algo4} for Case 2, {\sc Algo8} starts with the allocation $\mcA = (\{g'_1\}, \{g'_k\}, \emptyset)$, which is EF1 and good (see Definition~\ref{def02}),
and uses the order $S'$ to allocate the items in the set $K \setminus \{g'_1 \cup g'_k\}$ to the agents $a_1$ and $a_2$, and the items in the set $M \setminus K$ to agent $a_3$,
respectively, till exactly one of the two sets becomes empty.
A detailed description of the procedure is depicted in Algorithm~\ref{Algo8}.

\begin{algorithm}
\caption{{\sc Algo8} for a partial EF1 allocation}
\label{Algo8}
{\bf Input:} The critical set $K$ for item $g_i$ satisfying $\max_{g \in K} u_1(g) < \frac 12 u_1(K)$,
	the new item order $S'$ in which $g'_k \ne g_i$, and $u_3(g'_k) \le u_3(M \setminus K)$.

{\bf Output:} A partial EF1 allocation $\mcA$.
\begin{algorithmic}[1]
\State Initialize $\mcA = (\{g'_1\}, \{g'_k\}, \emptyset)$ and $U = M \setminus \{g'_1, g'_k\}$;
\State initialize $k_1 = 2$, $k_2 = k-1$, and $k_3 = m$;
\While{($k_1 \le k_2$ and $k_3 \ge k+1$)}
	\If{($a_1$ is not envied)}
	\State $A_1 = A_1 \cup \{g'_{k_1}\}$, $U = U \setminus g'_{k_1}$, and $k_1 = k_1 + 1$;
	\ElsIf{($a_2$ is not envied)}
	\State $A_2 = A_2 \cup \{g'_{k_2}\}$, $U = U \setminus g'_{k_2}$, and $k_2 = k_2 - 1$;
	\Else \Comment{in this case $a_3$ is not envied}
	\State $A_3 = A_3 \cup \{g'_{k_3}\}$, $U = U \setminus g'_{k_3}$, and $k_3 = k_3 - 1$;
	\EndIf
\EndWhile
\State return $\mcA$.
\end{algorithmic}
\end{algorithm}

\begin{lemma}
\label{lemma20}
Given the critical set $K$ for item $g_i$ produced by the procedure {\sc Algo2} satisfying $\max_{g \in K} u_1(g) < \frac 12 u_1(K)$, $k = |K|$,
the new item order $S'$ in which $g'_k \ne g_i$, and $u_3(g'_k) \le u_3(M \setminus K)$,
{\sc Algo8} produces a partial EF1 allocation.
The allocation is good unless item $g'_{k-1}$ is allocated to agent $a_1$, and this happens in the last iteration of the while-loop.
\end{lemma}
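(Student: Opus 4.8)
The plan is to run an induction over the iterations of the while-loop, carrying two invariants --- that the current allocation is EF1, and that it is good as long as $g'_{k-1}$ has not yet been handed to $a_1$ --- while in parallel reading off the combinatorial shape of the three bundles directly from the pseudocode. First I would record the obvious invariant: at every moment $A_1=\{g'_1,\ldots,g'_{k_1-1}\}$ is a prefix of $S'$, $A_2=\{g'_k\}\cup\{g'_{k_2+1},\ldots,g'_{k-1}\}$ is a suffix of $K$, and $A_3=\{g'_{k_3+1},\ldots,g'_m\}$ is a suffix of $M\setminus K$. From this, $a_1$ can receive $g'_{k-1}$ only when $k_1=k-1$; but the loop guard $k_1\le k_2\le k-1$ then forces $k_2=k-1$, so after incrementing $k_1$ we have $k_1=k>k_2$ and the loop halts --- hence $g'_{k-1}$ is given to $a_1$ at most once, and only in the last iteration. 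Incompleteness of the returned allocation follows because each pass shrinks exactly one of the two ``gaps'' $\{g'_{k_1},\ldots,g'_{k_2}\}$ and $\{g'_{k+1},\ldots,g'_{k_3}\}$, so when the loop stops precisely one of them has been emptied while the other still contains an item (and if $M\setminus K=\emptyset$ the loop never executes, leaving $\{g'_2,\ldots,g'_{k-1}\}$, nonempty since $k\ge 3$, unassigned).

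For goodness I would argue that, as long as $g'_{k-1}\notin A_1$, the positions occupied by $A_1$ are all $\le k-2$, those of $A_2$ lie in $\{k_1,\ldots,k-1\}\cup\{k\}$, and those of $A_3$ are all $\ge k+1$. Since $S'\setminus g'_{k-1}$ is regular and $\rho(g'_{k-2})\ge\rho(g'_{k-1})$ by the construction of $S'$ (the item placed at position $k-1$ was, if anything, moved there from a later place in the preference order), every item at a position $\le k-2$ has preference at least that of every item at a position $\ge k-1$; therefore $A_1\prec A_2$ and $A_1\prec A_3$, i.e., $\mcA$ is good. Together with the first paragraph, this shows $\mcA$ is good at the start of every iteration, and can cease to be good only at the very end and only when $g'_{k-1}$ has just been placed in $A_1$.

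The EF1 invariant I would prove by induction. The starting allocation $(\{g'_1\},\{g'_k\},\emptyset)$ is EF1 by a one-line check. For the inductive step, only the receiving agent's bundle changes, and the rule always feeds an item to an agent currently not envied by anyone; for such an agent, deleting the newly added item restores its former bundle, which nobody envied, so afterwards nobody strongly envies it and all other envy relations are untouched --- EF1 persists. This is immediate when $a_1$ or $a_2$ is the chosen agent; the delicate case is the \texttt{else} branch, where $a_3$ gets an item even though the algorithm never tested whether $a_3$ is envied. Here I would use that $\mcA$ is good at that point, together with Lemmas~\ref{lemma03} and~\ref{lemma09} and the fact that $a_2$ and $a_3$ share $u_3$: since both $a_1$ and $a_2$ are envied, if $a_2$ were not envied by $a_3$ then $a_2$ would have to be envied by $a_1$, so $a_1$ envies $a_2$, whereas Lemma~\ref{lemma09} simultaneously forces $a_2$ to envy $a_1$, contradicting Lemma~\ref{lemma03}; hence $u_3(A_3)<u_3(A_2)$, so $a_3$ is not envied by $a_2$; a second application of Lemma~\ref{lemma09} (now to $a_3$) gives $u_3(A_3)<u_3(A_1)$, and Lemma~\ref{lemma03} then rules out $a_1$ envying $a_3$; thus $a_3$ is not envied and the generic step applies. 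Assembling the three parts gives the lemma, and the main obstacle is precisely this last argument --- that the hidden \texttt{else} branch always delivers its item to an unenvied agent.
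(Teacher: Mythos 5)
Your proposal is correct and follows essentially the same route as the paper: goodness from the prefix/suffix structure of $S'$ (with the observation that $g'_{k-1}$ can reach $a_1$ only in a final, loop-terminating step), incompleteness from the termination condition, and EF1 by induction whose only delicate point is that the \texttt{else} branch hands its item to an unenvied $a_3$, established via goodness, the shared function $u_3$, and Lemmas~\ref{lemma03} and~\ref{lemma09}. The only difference is cosmetic: you derive ``$a_3$ is unenvied'' directly by two applications of Lemma~\ref{lemma09}, whereas the paper obtains it through two short contradiction arguments; your treatment of the order $S'$ and of the last-iteration claim is, if anything, slightly more explicit than the paper's.
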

\begin{proof}
Note from when {\sc Algo8} terminates, either $k_1 \le k_2$ or $k_3 \ge k+1$, i.e., at least one item remains unassigned, and thus the returned allocation $\mcA$ is incomplete.

We remark that since the procedure starts with the allocation $(\{g'_1\}, \{g'_k\}, \emptyset)$,
and allocates the items in the set $K \setminus \{g'_1 \cup g'_k\}$ to the agents $a_1$ and $a_2$, and the items in the set $M \setminus K$ to agent $a_3$, respectively,
the bundle $A_1 \prec A_2$ and $A_1 \prec A_3$ (i.e., the allocation is good) throughout the procedure unless the item $g'_{k-1}$ is allocated to $a_1$
(in fact, only if $g'_{k-1} = g_i$) and in such a scenario {\sc Algo2} terminates right afterwards.

We show that throughout the procedure, the allocation is EF1.
This is true for the starting allocation $(\{g'_1\}, \{g'_k\}, \emptyset)$.
Assume this is true for the allocation $\mcA = (A_1, A_2, A_3)$ at the beginning of an iteration of the while-loop.
If $a_1$ is not envied, then the item $g'_{k_1}$ is assigned to $a_1$, and thus none of $a_2$ and $a_3$ strongly envies $a_1$ due to $g'_{k_1}$ at the end of the iteration.
If $a_2$ is not envied, then the item $g'_{k_2}$ is assigned to $a_2$, and thus none of $a_1$ and $a_3$ strongly envies $a_2$ due to $g'_{k_2}$ at the end of the iteration.
Otherwise, the item $g'_{k_3}$ is assigned to $a_3$.

We claim that $a_3$ is not envied.
Assume $a_3$ is envied by $a_1$, i.e., $u_1(A_1) < u_1(A_3)$.
Since $\mcA$ is good, by Lemma~\ref{lemma03} $a_1$ is not envied by $a_3$ and thus $a_1$ is envied by $a_2$, i.e., $u_3(A_2) < u_3(A_1) \le u_3(A_3)$.
Therefore, $a_2$ is not envied by $a_3$, and thus $a_2$ is envied by $a_1$, a contradiction to $\mcA$ being good.
Assume $a_3$ is envied by $a_2$, i.e., $u_3(A_2) < u_3(A_3)$.
Then $a_2$ is not envied by $a_3$ and thus $a_2$ is envied by $a_1$, i.e., $u_1(A_1) < u_1(A_2)$.
Since $a_1$ is envied, from either $u_3(A_2) < u_3(A_1)$ or $u_3(A_3) < u_3(A_1)$, we have $u_3(A_2) < u_3(A_1)$, i.e., $a_1$ is envied by $a_2$, a contradiction to $\mcA$ being good.

Now since $a_3$ is not envied, none of $a_1$ or $a_2$ strongly envies $a_3$ due to $g'_{k_3}$ at the end of the iteration.
Therefore, the final allocation returned by {\sc Algo8} is EF1.
\end{proof}

There are two possible termination conditions, $k_3 = k$ or $k_1 > k_2$.
If $k_3 = k$, then the allocation returned by {\sc Algo8} is $\mcA = (A_1, A_2, M\setminus K)$, which is good and in which $A_1 \cup A_2 \subset K$,
and the unassigned item set is $U = \{g_{k_1}, \ldots, g_{k_2}\}$ with $k_1 \le k_2 \le k-1$.
One sees that if the item $g'_{k-1}$ has been allocated, then it must be allocated to agent $a_2$.
Therefore, $u_3((A_1 \cup U) \setminus g_i) \le \sum_{j=1}^{k-1} u_3(g'_j) - u_3(g_i) < \frac 13 u_3(M \setminus g_i)$ by Remark~\ref{remark11}.
Also by $\sum_{j=1}^{k-1} u_3(g'_j) < \frac 13 u_3(M \setminus g_i) + u_3(g_i)$ from Remark~\ref{remark11} and $u_3(g'_k) \le u_3(M \setminus K)$, we have
\begin{equation}
\label{eq04}
u_3(M \setminus K) > \frac 13 u_3(M \setminus g_i) > u_3((A_1 \cup U) \setminus g_i).
\end{equation}
Similar to the algorithm~{\sc Approx7} for Case 2,
our algorithm {\sc Approx9} first calls the procedure {\sc Algo5} to continue from the partial EF1 and good allocation $\mcA$ to assign the items of $U$ to the agents $a_1$ and $a_2$,
and then calls {\sc Algo6} to output the final allocation.

If {\sc Algo8} terminates at $k_1 > k_2$, then the allocation returned by {\sc Algo8} is $\mcA = (A_1, A_2, A_3)$, where $A_1 \cup A_2 = K$ and $A_3 \subset M \setminus K$,
and the unassigned item set is $U = \{g'_{k+1}, \ldots, g'_{k_3}\} \subset M \setminus K$.
From $\mcA$, we show that the three bundles can be shuffled to become a well-defined allocation with a permutation either $(1, 2, 3)$ or $(3, 1, 2)$ (see Definition~\ref{def12}).
Lastly, the algorithm~{\sc Approx3} is called on the achieved well-defined allocation with the permutation to produce the final allocation.
A high-level description of the complete algorithm {\sc Approx9} is depicted in Algorithm~\ref{Approx9}.

\begin{algorithm}
\caption{{\sc Approx9} for a complete EF1 allocation}
\label{Approx9}
{\bf Input:} The critical set $K$ for item $g_i$ satisfying $\max_{g \in K} u_1(g) < \frac 12 u_1(K)$,
	the new item order $S'$ in which $g'_k \ne g_i$, and $u_3(g'_k) \le u_3(M \setminus K)$.

{\bf Output:} A complete EF1 allocation.
\begin{algorithmic}[1]
\State Call {\sc Algo8} to produce an EF1 partial allocation $\mcA = (A_1, A_2, A_3)$, and $k_1, k_2, k_3$;
\State set $U = M \setminus (A_1 \cup A_2 \cup A_3)$;
\If{($k_1 > k_2$)}
	\State convert $\mcA$ into a well-defined allocation with permutation $(i_1, i_2, i_3)$;
	\State call {\sc Approx3} on $\mcA$ and $(i_1, i_2, i_3)$ to assign the items in $U$;
	\State output the final allocation $\mcA$;
\Else
	\State call {\sc Algo5} on $\mcA$ to continue to assign the items in $U = \{g'_{k_1}, \ldots, g'_{k_2}\}$;
	\State call {\sc Algo6} on $\mcA$ to output the final allocation $\mcA$.
\EndIf
\end{algorithmic}
\end{algorithm}

\begin{lemma}
\label{lemma21}
{\sc Approx9} produces a complete EF1 allocation.
\end{lemma}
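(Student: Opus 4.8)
The plan is to split along the two termination conditions of \textsc{Algo8}, exactly mirroring the case analysis already carried out for \textsc{Approx7} in Case~2, and to invoke the correctness lemmas for the subroutines that \textsc{Approx9} calls. First, note that by Lemma~\ref{lemma20}, \textsc{Algo8} always returns a partial EF1 allocation $\mcA=(A_1,A_2,A_3)$, and this allocation is good unless the single item $g'_{k-1}=g_i$ has just been handed to $a_1$ in the final iteration. Completeness of the output is immediate in both branches, since \textsc{Approx3} (via the RR algorithm) and \textsc{Algo5}/\textsc{Algo6} all assign every remaining item; so the entire content of the lemma is the EF1 property of the final allocation.

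For the branch $k_1>k_2$: here \textsc{Algo8} has allocated all of $K$ to $a_1\cup a_2$ and a proper subset of $M\setminus K$ to $a_3$, and crucially $g'_{k-1}$ (if it is $g_i$) went to $a_2$, so $\mcA$ is good. I would first argue that $\mcA$ can be reshuffled into a well-defined allocation (Definition~\ref{def12}) with permutation $(1,2,3)$ or $(3,1,2)$: condition~2 of Definition~\ref{def12} ($u_1(A_1)\ge\frac12 u_1(K)$ after possibly swapping $A_1,A_2$, and $u_3(A_1)\le u_3(K)$) holds because $A_1\cup A_2=K$ and the larger-$u_1$ half is at least $\frac12 u_1(K)$, while $A_1\subseteq K$ forces $u_3(A_1)\le u_3(K)$; for condition~3 I would use Lemma~\ref{lemma09} together with goodness of $\mcA$ (hence Lemma~\ref{lemma03}: $a_1$ and $a_2$ don't mutually envy, $a_1$ and $a_3$ don't mutually envy) to check that one of the two listed permutations orders the agents by non-envy. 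Then Lemma~\ref{lemma13} says \textsc{Approx3} completes this to an EF1 allocation. The main subtlety is verifying which of the two permutations applies and that \emph{some} valid one always exists — this is where $a_2$ and $a_3$ sharing $u_3$ is used, via Lemma~\ref{lemma09}, to rule out the bad configuration in which $a_1$ is envied while neither $a_2$ nor $a_3$ can be placed first.

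For the branch $k_3=k$: \textsc{Algo8} returns the good partial allocation $\mcA=(A_1,A_2,M\setminus K)$ with $A_1\cup A_2\subset K$ and unassigned set $U=\{g'_{k_1},\dots,g'_{k_2}\}$ with $k_2\le k-1$, and if $g'_{k-1}$ has been allocated it went to $a_2$. Then \textsc{Approx9} calls \textsc{Algo5} and \textsc{Algo6}. The correctness of \textsc{Algo5} was proven (Lemma~\ref{lemma16}) under the Case~2 hypothesis $u_3(g'_k)>u_3(M\setminus K)$; here I would instead invoke Eq.~(\ref{eq04}), $u_3(M\setminus K)>u_3((A_1\cup U)\setminus g_i)$, which is precisely the inequality that the Case~2 proof used $u_3(g'_k)>u_3(M\setminus K)$ and Eq.~(\ref{eq03}) to obtain — namely that $a_3$, holding $M\setminus K$, never strongly envies $a_1$ even after $A_1$ absorbs all of $U$. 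With that replacement, the argument of Lemma~\ref{lemma16} goes through verbatim: the allocation stays good before the last item, $a_1$–$a_2$ never mutually envy by Lemma~\ref{lemma03}, a non-envied agent receiving an item is not strongly envied, and in the dump-everything-to-$a_1$ case one shows $a_3$ envies $a_2$ (using $a_1$ envied, $a_2$ not envied, and Lemma~\ref{lemma09}) so $a_2$ does not strongly envy $a_1$; finally \textsc{Algo6} preserves EF1 by Lemma~\ref{lemma17} (whose proof again rests on Eq.~(\ref{eq04})-type reasoning and Lemma~\ref{lemma09}).

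I expect the main obstacle to be bookkeeping rather than a genuine difficulty: one must state explicitly that Lemmas~\ref{lemma16} and~\ref{lemma17}, though phrased for the Case~2 input, only ever use $u_3(g'_k)>u_3(M\setminus K)$ through consequences that Eq.~(\ref{eq04}) supplies equally well in Case~3, and that the roles of $A_3=M\setminus K$ versus $A_3=\{g'_k\}$ are interchangeable for the strong-envy checks against $a_3$. Once that observation is recorded, both branches reduce to already-proven lemmas and the proof is short.
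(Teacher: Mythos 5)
Your split along the two termination conditions of \textsc{Algo8}, and your treatment of the $k_3=k$ branch (run \textsc{Algo5} then \textsc{Algo6}, with Eq.~(\ref{eq04}) supplying the bound that the Case-2 hypothesis supplied in Lemmas~\ref{lemma16} and~\ref{lemma17}), match the paper's route. However, in the $k_1>k_2$ branch there is a genuine gap: you never establish condition~1 of Definition~\ref{def12}, i.e.\ that the \emph{reshuffled} allocation is still EF1. Shuffling bundles does not preserve EF1 in general (the paper notes exactly this just before \textsc{Algo6}), and the central technical content of the paper's proof in this branch is the inequality $u_3(A_i\setminus g)\le u_3(A_1)$ for $i\in\{2,3\}$, where $g$ is the last item \textsc{Algo8} gave to $a_i$: it is obtained by applying Lemma~\ref{lemma09} at the moment of that assignment (when $a_1$ was envied and $a_i$ was not), and, together with $a_2$ and $a_3$ sharing $u_3$, it is precisely what makes every bundle shuffle used in the case analysis EF1. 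Your proposal invokes Lemma~\ref{lemma09} only to decide which permutation satisfies condition~3, so the EF1 requirement of the well-defined allocation is left unproven.

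A second, related error: your claim that in this branch $\mathcal{A}$ is good because ``$g'_{k-1}$ (if it is $g_i$) went to $a_2$'' is false. When \textsc{Algo8} terminates with $k_1>k_2$, the last item of the while-loop can be $g'_{k-1}=g_i$ assigned to $a_1$, which is exactly the exception in Lemma~\ref{lemma20}; only the allocation \emph{before} that last assignment is guaranteed good. Consequently $a_1$ and $a_3$ may mutually envy in the final $\mathcal{A}$, a configuration your goodness-plus-Lemma~\ref{lemma03} argument would wrongly exclude and which the paper handles as a separate subcase (giving $a_1$ the bundle $A_3$). Without the swap inequality above and without this subcase, the construction of a well-defined allocation with permutation $(1,2,3)$ or $(3,1,2)$ is incomplete.
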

\begin{proof}
When {\sc Algo8} terminates with $k_3 = k$, we have seen that the returned allocation is partial, EF1 and good,
and thus the procedure~{\sc Algo5} is applicable to continue to assign the items.
Afterwards, {\sc Algo6} is called to return the final allocation which is complete and EF1.

We consider next the case where {\sc Algo8} terminates with $k_1 > k_2$ and let $\mcA = (A_1, A_2, A_3)$ denote the allocation returned by {\sc Algo8}.
From the procedure description of {\sc Algo8} we know that in the last iteration of the while-loop, $k_1 = k_2$ and the item $g'_{k_1}$ is either assigned to $a_1$ or to $a_2$;
note from Lemma~\ref{lemma20} that before this last item $g'_{k_1}$ is assigned, the allocation was good.
We show that the bundles in $\mcA$ can be shuffled to obtain a well-defined allocation with permutation $(i_1, i_2, i_3) \in \{(1, 2, 3), (3, 1, 2)\}$.

From the procedure description of {\sc Algo8}, during the iteration of the while-loop the last item $g \in A_i$ is allocated to $a_i$, for any $i = 2, 3$,
$a_1$ was envied but $a_i$ was not envied, and thus by Lemma~\ref{lemma09} that $a_i$ envied $a_1$, i.e., $u_3(A_i \setminus g) < u_3(A_1)$.
If no item is allocated to $a_i$ during the while-loop of {\sc Algo8},
then we have $u_3(A_i \setminus g) = u_3(\emptyset) = 0$ where $g$ is either $g'_k$ for $a_2$ or void for $a_3$, and thus $u_3(A_i \setminus g) \le u_3(A_1)$.
Therefore, either way we have $u_3(A_i \setminus g) \le u_3(A_1)$.
This inequality essentially tells that if $A_1$ and $A_i$ are swapped, then $a_i$ does not strongly envy any agent.
Also, since $a_2$ and $a_3$ use the same utility function, their bundles can be swapped without violating EF1 property.
In other words, any shuffling of the three bundles gives an EF1 allocation.
We assume $u_1(A_1) \ge u_1(A_2)$, i.e., $a_1$ does not envy $a_2$, or otherwise we swap $A_1$ and $A_2$ to achieve this property.
It follows that $u_1(A_1) \ge \frac 12 u_1(K)$.
We distinguish three cases.

In the first case, $a_1$ envies $a_3$ and $a_3$ envies $a_1$, i.e., $u_1(A_1) < u_1(A_3)$ and $u_3(A_3) < u_3(A_1)$.
Consider the allocation $\mcA' = (A_3, A_2, A_1)$, in which $a_1$ does not envy $a_2$ or $a_3$, $u_1(A_3) > u_1(A_1) \ge \frac 12 u_1(K)$ and $u_3(A_3) < u_3(A_1) \le u_3(K)$.
Next, if $u_3(A_2) \ge u_3(A_1)$, then $\mcA'$ is well-defined with permutation $(1, 2, 3)$ per Definition~\ref{def12};
or if $u_3(A_2) < u_3(A_1)$, then $(A_3, A_1, A_2)$ is well-defined with permutation $(1, 2, 3)$.

In the second case, $a_1$ does not envy $a_3$.
Consider the EF1 allocation $\mcA = (A_1, A_2, A_3)$, in which $a_1$ does not envy $a_2$ or $a_3$, $u_1(A_1) \ge \frac 12 u_1(K)$ and $u_3(A_1) \le u_3(K)$.
Next, if $u_3(A_2) \ge u_3(A_3)$, then $\mcA$ is well-defined with permutation $(1, 2, 3)$;
or if $u_3(A_2) < u_3(A_3)$, then $(A_1, A_3, A_2)$ is well-defined with permutation $(1, 2, 3)$.

In the third case, $a_1$ envies $a_3$ but $a_3$ does not envy $a_1$, i.e., $u_1(A_1) < u_1(A_3)$ and $u_3(A_3) \ge u_3(A_1)$.
If $a_2$ does not envy $a_3$, then $a_2$ does not envy $a_1$ either, i.e., $u_3(A_2) \ge u_3(A_3)$ and $u_3(A_2) \ge u_3(A_1)$,
then the EF1 allocation $\mcA' = (A_3, A_2, A_1)$ satisfies $u_1(A_3) > u_1(A_1) \ge \frac 12 u_1(K)$ and $u_3(A_3) \le u_3(A_2) \le u_3(K)$,
and thus is well-defined with permutation $(1, 2, 3)$;
if $a_2$ envies $a_3$, i.e., $u_3(A_2) < u_3(A_3)$,
then the EF1 allocation $\mcA' = (A_1, A_2, A_3)$ satisfies $u_1(A_1) \ge \frac 12 u_1(K)$ and $u_3(A_1) \le u_3(K)$,
and thus is well-defined with permutation $(3, 1, 2)$.

In conclusion, the allocation $\mcA = (A_1, A_2, A_3)$ can be shuffled to become well-defined with permutation either $(1, 2, 3)$ or $(3, 1, 2)$.
It follows from Lemma~\ref{lemma13} that the returned allocation is complete and EF1.
\end{proof}

\begin{theorem}
\label{thm22}
Given the critical set $K$ for the item $g^*$ produced by the procedure {\sc Algo2} satisfying $\max_{g \in K} u_1(g) < \frac 12 u_1(K)$, $k = |K|$,
the new item order $S'$ in which $g'_k \ne g^*$, and $u_3(g'_k) \le u_3(M \setminus K)$,
{\sc Approx9} constructs a complete EF1 allocation with its total utility at least $\frac 12 OPT$.
\end{theorem}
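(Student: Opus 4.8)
The plan is to first invoke Lemma~\ref{lemma21}, which already guarantees that {\sc Approx9} returns a complete EF1 allocation; thus the only thing left to do is to lower bound the total utility $SOL$ of that allocation. Since $K$ is the critical set produced for $g^*$, Lemma~\ref{lemma08} gives $OPT \le u_1(K) + u_3(M \setminus K)$, so it suffices to prove $SOL \ge \frac 12 u_1(K) + \frac 12 u_3(M \setminus K)$, from which $SOL \ge \frac 12 OPT$ is immediate. In contrast with Case~2, I expect no finer argument (no appeal to $g'_k \notin A^*_1$ or to the factor $\frac 23$) to be needed here.

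I would then follow the case split already set up before Lemma~\ref{lemma21}, according to how {\sc Algo8} terminates. In the branch $k_1 > k_2$, {\sc Approx9} reshuffles $\mcA$ into a well-defined allocation with permutation $(1,2,3)$ or $(3,1,2)$ and runs {\sc Approx3}; here I would simply cite Lemma~\ref{lemma13}, which yields $SOL \ge \frac 12 u_1(K) + \frac 12 u_3(M \setminus A_1)$, and then use $u_3(A_1) \le u_3(K)$ from the second condition of a well-defined allocation (Definition~\ref{def12}) to conclude $SOL \ge \frac 12 u_1(K) + \frac 12 u_3(M \setminus K)$.

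In the remaining branch $k_3 = k$, the allocation fed to {\sc Algo6} has the form $\mcA = (A_1, A_2, M \setminus K)$ with $A_1 \cup A_2 = K$ and $A_3 = M \setminus K$ untouched (this is immediate from the descriptions of {\sc Algo8} and {\sc Algo5}, since {\sc Algo5} hands every item of $U$ to $a_1$ or $a_2$). If $u_1(A_1) \ge u_1(A_2)$, then {\sc Algo6} keeps $\mcA$, and $SOL = u_1(A_1) + u_3(A_2) + u_3(M \setminus K) \ge \frac 12 u_1(K) + u_3(M \setminus K)$ because $u_1(A_1) \ge \frac 12 u_1(A_1 \cup A_2) = \frac 12 u_1(K)$. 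If instead $u_1(A_1) < u_1(A_2)$, then {\sc Algo6} outputs whichever of $(A_1, A_2, M \setminus K)$ and $(A_2, A_1, M \setminus K)$ has larger total utility, which is at least the average of the two totals, namely $\frac 12 (u_1(K) + u_3(K)) + u_3(M \setminus K) \ge \frac 12 u_1(K) + u_3(M \setminus K)$. In either subcase $SOL \ge \frac 12 u_1(K) + u_3(M \setminus K) \ge \frac 12 u_1(K) + \frac 12 u_3(M \setminus K) \ge \frac 12 OPT$.

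The utility estimate is thus just bookkeeping once the case split is in place; the one genuine subtlety — maintaining EF1 when {\sc Algo6} swaps the bundles of $a_1$ and $a_2$, and when {\sc Approx3} acts on the shuffled allocation — is already discharged by Lemma~\ref{lemma21} (and, for the swap, by the argument of Lemma~\ref{lemma17}). So I would expect the main obstacle to have been the structural work already carried out in Lemmas~\ref{lemma20} and~\ref{lemma21}, leaving this theorem as a short wrap-up that combines Lemma~\ref{lemma08}, Lemma~\ref{lemma13}, and the elementary averaging above.
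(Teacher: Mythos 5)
Your proposal is correct and follows essentially the same route as the paper: invoke Lemma~\ref{lemma21} for completeness and EF1, split on the two termination conditions of {\sc Algo8}, bound $SOL$ by $\frac 12 u_1(K) + u_3(M\setminus K)$ (via the averaging/swap argument in {\sc Algo6}) in the $k_3 = k$ branch and by $\frac 12 u_1(K) + \frac 12 u_3(M\setminus K)$ via Lemma~\ref{lemma13} in the $k_1 > k_2$ branch, then finish with Lemma~\ref{lemma08}. Your explicit averaging computation merely spells out what the paper summarizes with ``it follows,'' so there is no substantive difference.
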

\begin{proof}
Let $\mcA = (A_1, A_2, A_3)$ be the partial EF1 allocation produced by the procedure~\ref{Algo8} and $U$ be the unassigned item set.
We see from Lemma~\ref{lemma21} that the final allocation returned by the algorithm {\sc Approx9} is complete and EF1.
We distinguish two cases for estimating the total utility.

When {\sc Algo8} terminates with $k_3 = k$, $A_3 = M \setminus K$ and
the procedure {\sc Algo5} continues to assign the items of $U$ to produce a complete EF1 allocation denoted as $\mcA' = (A'_1, A'_2, A_3)$
where $A_1 \subseteq A'_1$ and $A_2 \subseteq A'_2$.
The procedure {\sc Algo6} outputs $\mcA'$ if $u_1(A'_1) \ge u_1(A'_2)$, or outputs the one with a larger total utility between $\mcA'$ and $(A'_2, A'_1, A_3)$ otherwise.
It follows that the total utility of the final allocation is at least $\frac 12 u_1(K) + u_3(M \setminus K)$.

When {\sc Algo8} terminates with $k_1 > k_2$, then $A_1 \cup A_2 = K$, $A_3 \subset M \setminus K$,
and we obtain a well-defined allocation $\mcA' = (A'_1, A'_2, A'_3)$ with permutation $(i_1, i_2, i_3) \in \{(1, 2, 3), (3, 1, 2)\}$ converted from $\mcA$.
From here, Lemma~\ref{lemma13} guarantees that the total utility of the final allocation is at least $\frac 12 u_1(K) + \frac 12 u_3(M \setminus K)$.

In either case, we have $SOL \ge \frac 12 u_1(K) + \frac 12 u_3(M \setminus K) \ge \frac 12 OPT$, where the last inequality is by Lemma~\ref{lemma08}.
We thus prove the theorem.
\end{proof}

\subsection{An instance to show the tightness of ratio $2$}
We provide an instance below to show that the approximation ratio $2$ of the combination of the three algorithms, {\sc Approx3}, {\sc Approx7} and {\sc Approx9},
for the case of three agents with two unnormalized utility functions is tight.
In this instance there are only five items in the order $\rho(g_1) > \rho(g_2) > \rho(g_3) > 1 > \rho(g_4) = \rho(g_5)$, with their values to the three agents listed as follows,
where $\epsilon > 0$ is a small value:
\begin{center}
\begin{tabular}{r|ccccc}
					&$g_1$			&$g_2$			&$g_3$			&$g_4$			&$g_5$\\
\hline
$a_1$				&$\epsilon$	&$1$				&$1$				&$0$				&$0$\\
$a_2, a_3$		&$0$				&$\epsilon$	&$2\epsilon$	&$\epsilon$	&$\epsilon$
\end{tabular}
\end{center}
We continue to use the same notations introduced in Section 3.
One sees that $X = \{g_1, g_2, g_3\}$, and thus $\mcA^* = (\{g_1, g_2, g_3\}, \{g_4\}, \{g_5\})$ is an optimal EF1 allocation of total utility $2 + 3\epsilon$.

For each of $g_1$ and $g_2$, $u_3(X \setminus g_i) > \frac 13 u_3(M \setminus g_i) \ge \frac 43 \epsilon$;
the critical set produced by {\sc Algo2} is $K = \{g_1, g_2, g_3\}$.
One can verify that $\max_{g \in K} u_1(g) < \frac 12 u_1(K)$, and the new item order is the same as the original preference order.
Since $u_3(g_3) = u_3(M \setminus K)$, it falls into Case 3.
In this case, {\sc Algo8} starts with the allocation $(\{g_1\}, \{g_3\}, \emptyset)$:
In the first iteration, $a_1$ is not envied by $a_2$ or $a_3$, the allocation is updated to $(\{g_1, g_2\}, \{g_3\}, \emptyset)$ and the procedure terminates with $k_1 > k_2$.
Since $(\{g_1, g_2\}, \{g_3\}, \emptyset)$ is well-defined with permutation $(1, 2, 3)$,
{\sc Approx3} calls the RR algorithm that assigns $g_4$ and $g_5$ one to each of $a_3$ and $a_2$,
achieving the final allocation either $(\{g_1, g_2\}, \{g_3, g_5\}, \{g_4\})$ or $(\{g_1, g_2\}, \{g_3, g_4\}, \{g_5\})$ of total utility $1 + 5\epsilon$.

For $g_3$, $u_3(X \setminus g_3) = \frac 13 u_3(M \setminus g_3) = \epsilon$;
the critical set produced by {\sc Algo2} is also $K = \{g_1, g_2, g_3\}$, and the new item order is $S' = \langle g_1, g_3, g_2, g_4, g_5\rangle$.
Since $u_3(g_2) < u_3(M \setminus K)$, it falls into Case 3 too.
In this case, {\sc Algo8} starts with the allocation $(\{g_1\}, \{g_2\}, \emptyset)$:
In the first iteration, $a_1$ is not envied by $a_2$ or $a_3$, the allocation is updated to $(\{g_1, g_3\}, \{g_2\}, \emptyset)$ and the procedure terminates with $k_1 > k_2$.
Since $(\{g_1, g_3\}, \{g_2\}, \emptyset)$ is well-defined with permutation $(1, 2, 3)$,
{\sc Approx3} calls the RR algorithm that assigns $g_4$ and $g_5$ one to each of $a_3$ and $a_2$,
achieving the final allocation either $(\{g_1, g_3\}, \{g_2, g_5\}, \{g_4\})$ or $(\{g_1, g_3\}, \{g_2, g_4\}, \{g_5\})$ of total utility $1 + 4\epsilon$.

For $g_4$ ($g_5$ can be identically discussed), $u_3(X \setminus g_4) = 3\epsilon > \frac 13 u_3(M \setminus g_4) = \frac 43 \epsilon$;
the critical set produced by {\sc Algo2} is $K = \{g_1, g_2, g_3, g_4\}$, and the new item order is $S' = \langle g_1, g_2, g_4, g_3, g_5\rangle$.
Since $u_3(g_3) > u_3(M \setminus K)$, it falls into Case 2.
In this case, {\sc Algo4} starts with the allocation $(\{g_1\}, \emptyset, \{g_3\})$:
In the first iteration, $a_1$ is not envied by $a_2$, the allocation is updated to $(\{g_1, g_2\}, \emptyset, \{g_3\})$;
in the second iteration, $a_1$ is envied by $a_2$, the allocation is updated to $(\{g_1, g_2\}, \{g_5\}, \{g_3\})$, and the procedure terminates with $k_1 < k = k_2$.
Since in $(\{g_1, g_2\}, \{g_5\}, \{g_3\})$ $a_1$ is not envied, {\sc Algo5} assigned $g_4$ to $a_1$, achieving the complete allocation $(\{g_1, g_2, g_4\}, \{g_5\}, \{g_3\})$.
Lastly, {\sc Algo6} confirms the final allocation is $(\{g_1, g_2, g_4\}, \{g_5\}, \{g_3\})$, of total utility $1 + 4\epsilon$.

Therefore, we have $SOL \le \max\{1 + 4\epsilon, 1 + 5\epsilon\} = 1 + 5\epsilon$.
It follows that $OPT / SOL \ge (2 + 3\epsilon) / (1 + 5\epsilon) = 2 - 7\epsilon / (1 + 5\epsilon) \to 2$, when $\epsilon$ tends to $0$.

\section{A $\frac 53$-approximation for three agents with normalized functions}
Recall that the algorithm {\sc Approx1} is a $2$-approximation for $n$ agents with normalized functions.
In this section, we consider the special case where $n = 3$, and again assume w.l.o.g. that agent $a_1$ uses the utility function $u_1(\cdot)$ and $a_2$ and $a_3$ use $u_3(\cdot)$.
We continue to assume the items are given in the preference order $\rho(g_1) \ge \rho(g_2) \ge \ldots \ge \rho(g_m)$.

We also continue to use some notations and definitions introduced earlier,
such as the sets $X = \{g \in M: u_1(g) \ge u_3(g)\}$ and $Y = M \setminus X$ defined in Eq.~(\ref{eq01}),
a good allocation defined in Definition~\ref{def02}, and so on.
Additionally, for any subset $A \subseteq M$, we define the quantity
\begin{equation}
\label{eq05}
\Delta(A) = u_1(A) - u_3(A).
\end{equation}
Since now $u_1(M) = u_3(M) = 1$, we have $\Delta(X) + \Delta(Y) = 0$ and
\begin{equation}
\label{eq06}
OPT \le u_1(X) + u_3(Y) = u_1(X) - u_3(X) + u_3(M) = \Delta(X) + 1 = 1 - \Delta(Y).
\end{equation} 

In the rest of the section we distinguish two cases on $\max_{g \in X} u_1(g)$, the most value of a single item in $X$ to agent $a_1$,
and we design two different, but similar, algorithms, respectively.

\subsection{Case 1: $\max_{g \in X} u_1(g) \le \frac 13$.}
The design idea of our algorithm~{\sc Approx10} is similar to {\sc Approx1}, with a change that,
after all the items of $X$ have been allocated, i.e., both {\em to-be-allocated} items $g_{k_1}$ and $g_{k_2}$ are in $Y$,
agent $a_t$ has the priority to receive $g_{k_2}$ if it is not envied by $a_1$.
The detailed description of the algorithm {\sc Approx10} is presented in Algorithm~\ref{Approx10}.

\begin{algorithm}
\caption{{\sc Approx10} for three agents with normalized utility functions}
\label{Approx10}
{\bf Input:} Three agents of two types and a set of $m$ indivisible items $\rho(g_1) \ge \rho(g_2) \ge \ldots \ge \rho(g_m)$,
	where $\max_{g \in X} u_1(g) \le \frac 13$.

{\bf Output:} A complete EF1 allocation.
\begin{algorithmic}[1]
\State Initialize $k_1=1$, $k_2=m$, and $A_i = \emptyset$ for every agent $a_i$;

\While {($k_1 \le k_2$)}
\State find $t = \arg \min_{i = 2, 3} u_3(A_i)$;

\If {($k_1 \le |X|$)}
	\Comment $a_1$ has the priority.
	\If {($a_1$ is not envied by $a_t$)}
	\State $A_1 = A_1 \cup \{g_{k_1}\}$ and $k_1 = k_1 + 1$;
	\Else
	\State $A_t = A_t \cup \{g_{k_2}\}$ and $k_2 = k_2 - 1$;
	\EndIf
\Else
	\Comment Change of priority: Now $a_t$ has the priority.
	\If {($a_t$ is not envied by $a_1$)}
	\State $A_t = A_t \cup \{g_{k_2}\}$ and $k_2 = k_2 - 1$;
	\Else
	\State $A_1 = A_1 \cup \{g_{k_1}\}$ and $k_1 = k_1 + 1$;
	\EndIf
\EndIf
\EndWhile

\State return the final allocation.
\end{algorithmic}
\end{algorithm}

\begin{lemma}
\label{lemma23}
{\sc Approx10} produces a good, complete and EF1 allocation.
\end{lemma}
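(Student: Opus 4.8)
The plan is to establish the three properties — goodness, completeness, and EF1 — by an induction on the iterations of the while-loop, exactly mirroring the structure of Lemma~\ref{lemma04} but accounting for the priority switch that happens once $k_1 > |X|$. Completeness is immediate: each iteration either increments $k_1$ or decrements $k_2$, so after exactly $m$ iterations we have $k_1 > k_2$ and all items are assigned. For the inductive hypothesis I would carry the statement ``the current allocation $\mcA$ is good and EF1'' from one iteration to the next, with $\mcA^0 = (\emptyset,\emptyset,\emptyset)$ as the trivially true base case.

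First I would handle the goodness property. As long as we are in the ``$a_1$ has priority'' branch ($k_1 \le |X|$), every item $a_1$ receives comes from $\{g_1,\ldots,g_{k_1}\} \subseteq X$ (recall $X$ is a prefix of the preference order, so $g_{k_1} \in X$ when $k_1 \le |X|$), and every item $a_2,a_3$ receive comes from $\{g_{k_2+1},\ldots,g_m\}$; since $k_1 \le k_2$, Definition~\ref{def01} and~\ref{def02} give $A_1 \cup \{g_{k_1}\} \prec A_2, A_3$, so the allocation stays good — this is word-for-word the argument from Lemma~\ref{lemma04}. Once the priority switches ($k_1 > |X|$), both to-be-allocated items lie in $Y$, but crucially $A_1$ received \emph{only} items of $X$ during the first phase and receives nothing but items $g_{k_1} \in Y$ afterwards; I would argue $A_1 \prec A_t$ still holds because every item ever placed in $A_1$ precedes (in the preference order) every item placed in $A_2$ or $A_3$. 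The one subtlety: in the second phase $a_t$ may grab $g_{k_2}$ while $a_1$ later grabs $g_{k_1}$ with $k_1 < k_2$, so I must check that the indices never cross in a way that violates precedence — but since $k_1 \le k_2$ is the loop guard and items are only ever taken from the two ends inward, $A_1 \subseteq \{g_1,\ldots,g_{k_1-1}\}$ and $A_t \subseteq \{g_{k_2+1},\ldots,g_m\}$ at every moment, which forces $A_1 \prec A_t$. Goodness then implies, via Lemma~\ref{lemma03}, that $a_1$ and $a_t$ never mutually envy.

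Next the EF1 property. In the first branch, when $a_1$ is not envied by $a_t$ (hence $u_3(A_t) \ge u_3(A_1)$, and also $u_3(A_{t'}) \ge u_3(A_t) \ge u_3(A_1)$ for the other type-two agent $a_{t'}$ by choice of $t$), no one envies $a_1$ before the assignment, so no one strongly envies $a_1$ after receiving $g_{k_1}$ — again the Lemma~\ref{lemma04} argument. The symmetric sub-case assigns $g_{k_2}$ to $a_t$ with $a_t$ not envied by $a_1$; since $a_t$ has the minimum $u_3$-value among $\{a_2,a_3\}$ and, in this first phase, $a_1 \prec a_t$-type precedence plus $a_1$ envied-by-$a_t$ means (by Lemma~\ref{lemma03}) $a_1$ does not envy $a_t$, so nobody envies $a_t$, hence nobody strongly envies $a_t$ after it gets one item. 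In the second phase I would do the same bookkeeping with the roles adjusted: when $a_t$ is not envied by $a_1$ it also is not envied by $a_{t'}$ (minimum $u_3$-value) so it is envied by nobody; when instead $a_t$ is envied by $a_1$, Lemma~\ref{lemma03} (goodness) gives $a_1$ not envied by $a_t$, and $a_1$ not envied by $a_{t'}$ requires the extra observation that $u_3(A_1) \le \max_{g \in X} u_1(g) \cdot(\text{something})$ — here is where the case hypothesis $\max_{g \in X} u_1(g) \le \frac13$ may be needed, or more likely it is \emph{not} needed for EF1 and only enters the approximation-ratio analysis later; I would first try to prove EF1 without invoking it, using only goodness and the minimality of $u_3(A_t)$.

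The main obstacle I anticipate is the moment of the priority switch and, more delicately, the very last iteration, where (as flagged in the analogous Case-3 discussion around Lemma~\ref{lemma20}) assigning the final item to $a_1$ could in principle break goodness. I would need to verify that even if the last item breaks goodness, the allocation was good \emph{until} that point, so Lemma~\ref{lemma03} applied at the start of the last iteration still guarantees $a_1$ and $a_t$ don't mutually envy there, and then check directly that adding one final item to an already-EF1, already-non-envied agent preserves EF1. The second potential snag is confirming that when $k_1 > |X|$ the item $g_{k_1}$ is genuinely in $Y$ and $a_1$ receiving it cannot make some type-two agent strongly envy $a_1$ — this should follow because at that stage $u_3(A_t) \le u_3(A_{t'})$ and $a_t$'s being (or not being) envied by $a_1$ was exactly the branch condition, so I can read off the needed inequality $u_3(A_1) \le u_3(A_t) + u_3(g_{k_1})$ or its analogue. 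If these two points check out — and I expect they do, since the algorithm is deliberately engineered to keep $a_1$'s bundle a preference-prefix — the induction closes and the lemma follows.
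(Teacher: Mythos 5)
Your proposal is correct and follows essentially the same route as the paper: induction over the iterations maintaining ``good and EF1'' exactly as in Lemma~\ref{lemma04}, with goodness coming from the fact that $A_1$ is always a prefix $\{g_1,\ldots,g_{k_1-1}\}$ and $A_2,A_3$ always lie in the suffix $\{g_{k_2+1},\ldots,g_m\}$, and EF1 coming from the observation that whichever agent receives the new item was envied by nobody beforehand (via Lemma~\ref{lemma03} plus the minimality of $u_3(A_t)$). The two worries you flag dissolve just as you suspect: unlike {\sc Algo8} the item order here is never modified, so goodness cannot break at the last iteration, and the hypothesis $\max_{g\in X}u_1(g)\le\frac13$ is indeed not needed for EF1 --- in the second-phase ``else'' branch the chain $u_3(A_{t'})\ge u_3(A_t)\ge u_3(A_1)$ (minimality of $t$ together with Lemma~\ref{lemma03}) already shows no one envies $a_1$, which is precisely the paper's argument.
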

\begin{proof}
The returned allocation by {\sc Approx10} is clearly complete from the while-loop termination condition.
We prove that at the end of each iteration of the while-loop in the algorithm, the updated allocation is good and EF1, similar to Lemma~\ref{lemma04}.
Note that the initial empty allocation is trivially good and EF1.

Assume that at the beginning of the iteration the allocation denoted as $\mcA = (A_1, A_2, A_3)$ is good and EF1;
note that the to-be-allocated items are $g_{k_1}$ and $g_{k_2}$ with $k_1 \le k_2$.

Consider the case where $k_1 > |X|$ and $a_t$ is not envied by $a_1$ (we intentionally pick this case to prove, the other three cases are symmetric),
in which the algorithm updates $A_1$ to $A_t \cup \{ g_{k_2} \}$.
By the description of {\sc Approx10}, $A_1 = \{ g_1, \ldots, g_{k_1 - 1} \}$ and $A_i \subseteq \{ g_{k_2+1}, \ldots, g_m \}$ for $i = 2, 3$.
By Definitions~\ref{def01} and~\ref{def02} and $k_1 \le k_2$, $A_1 \prec A_i \cup \{ g_{k_2} \}$ for $i = 2, 3$, and thus the updated allocation is good.

Since at the beginning of the iteration $a_t$ is not envied by $a_1$, $u_1(A_1) \ge u_3(A_t)$.
Also, by the definition of $t$, $u_3(A_i) \ge u_3(A_t)$ for the third agent $a_i$.
That is, no agent envies $a_t$ in the allocation $\mcA$, and thus does not strongly envy $a_t$ in the updated allocation (by removing the item $g_{k_2}$, if necessary).
Note that in this iteration only $a_t$ gets the item $g_{k_2}$.
Therefore, the updated allocation is EF1.
\end{proof}

Let us examine one scenario of the final allocation returned by {\sc Approx10} in the next lemma, and leave the other to the main Theorem~\ref{thm25}.

\begin{lemma}
\label{lemma24}
In the final allocation $\mcA = (A_1, A_2, A_3)$ returned by {\sc Approx10}, if $X \subseteq A_1$, then $SOL \ge \frac 23 OPT$.
\end{lemma}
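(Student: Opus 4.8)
The plan is to exploit a structural feature of {\sc Approx10}: agent $a_1$ always receives the current \emph{front} item $g_{k_1}$, so its final bundle $A_1$ is a prefix $\{g_1,\dots,g_p\}$ of the item order, and $A_2\cup A_3=\{g_{p+1},\dots,g_m\}$. Since $X=\{g_1,\dots,g_{|X|}\}$ and $X\subseteq A_1$, we get $p\ge|X|$, so $A_1=X\cup Z$ with $Z:=\{g_{|X|+1},\dots,g_p\}\subseteq Y$ and $A_2\cup A_3=Y\setminus Z$. A direct computation then gives $SOL=u_1(A_1)+u_3(A_2\cup A_3)=1+\Delta(X)+\Delta(Z)$, while Eq.~(\ref{eq06}) gives $OPT\le 1+\Delta(X)$. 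As $-\Delta(Z)=u_3(Z)-u_1(Z)\ge 0$ (because $Z\subseteq Y$), the claim $SOL\ge\frac23 OPT$ follows once I show $-\Delta(Z)\le\frac13\bigl(1+\Delta(X)\bigr)$: then $SOL=\bigl(1+\Delta(X)\bigr)-\bigl(-\Delta(Z)\bigr)\ge\frac23\bigl(1+\Delta(X)\bigr)\ge\frac23 OPT$. So the whole argument reduces to bounding $u_3(Z)-u_1(Z)$ from above.

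I would establish two bounds on $-\Delta(Z)$. The first is combinatorial: $-\Delta(Z)=\sum_{g\in Z}\bigl(u_3(g)-u_1(g)\bigr)\le\sum_{g\in Y}\bigl(u_3(g)-u_1(g)\bigr)=u_3(Y)-u_1(Y)=\Delta(X)$, since each summand is nonnegative on $Y$ and $Z\subseteq Y$. The second uses the change-of-priority rule. Assume $Z\neq\emptyset$ (if $Z=\emptyset$ then $SOL=1+\Delta(X)\ge OPT$ and we are done) and consider the iteration in which $a_1$ receives its first item of $Y$, which is necessarily $g_{|X|+1}$. At the start of that iteration $A_1=X$ and $k_1=|X|+1>|X|$, so $a_1$ is in the ``$a_t$ has the priority'' branch and takes $g_{k_1}$ only because it envies $a_t$, i.e. $u_1(X)<u_1(A_t)$, where $A_t$ is the minimum-$u_3$ bundle among $a_2,a_3$ at that moment. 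Since $A_2\cup A_3\subseteq Y\setminus Z$ at all times, $A_t\subseteq Y\setminus Z$, whence $u_1(A_t)\le u_3(A_t)\le\frac12 u_3(A_2\cup A_3)\le\frac12 u_3(Y\setminus Z)=\frac12\bigl(u_3(Y)-u_3(Z)\bigr)$. Rearranging yields $-\Delta(Z)\le u_3(Z)<u_3(Y)-2u_1(X)=1-u_3(X)-2u_1(X)$.

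Finally I would combine the two bounds: $-\Delta(Z)\le\min\{\Delta(X),\,1-u_3(X)-2u_1(X)\}$. Writing $\alpha=u_1(X)$ and $\beta=u_3(X)$ (so $\alpha\ge\beta$ and $\Delta(X)=\alpha-\beta$): if $\alpha\le\frac13$, the minimum is $\alpha-\beta\le\frac13\le\frac13(1+\alpha-\beta)$; if $\alpha>\frac13$, the minimum is $1-\beta-2\alpha$, and $3(1-\beta-2\alpha)\le 1+\alpha-\beta$ is equivalent to $7\alpha+2\beta\ge 2$, which holds since $7\alpha>\frac73>2$. Either way $-\Delta(Z)\le\frac13(1+\Delta(X))$, which finishes the proof. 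I expect the main obstacle to be the second (algorithmic) bound: one has to identify precisely which iteration to inspect, verify that $A_1$ equals exactly $X$ at that point, and confirm that every bundle $a_2$ or $a_3$ can hold then lies in $Y\setminus Z$ so that both the ``$u_1\le u_3$ on $Y$'' step and the ``min $\le$ average'' step are legitimate; the case analysis in the last step is routine arithmetic. I note that, as with Lemma~\ref{lemma23}, this argument does not actually invoke the case hypothesis $\max_{g\in X}u_1(g)\le\frac13$; that hypothesis is presumably needed only for the complementary scenario treated in Theorem~\ref{thm25}.
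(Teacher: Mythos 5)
Your proof is correct, and it follows the same overall strategy as the paper's: write $SOL = 1+\Delta(X)+\Delta(Z)$ (your $Z$ is the paper's $Y_1$), bound $OPT \le 1+\Delta(X)$ via Eq.~(\ref{eq06}), and control the loss $-\Delta(Z)$ by inspecting an iteration in which $a_1$ takes a $Y$-item in the changed-priority phase, which forces $u_1$ of $a_1$'s current bundle to be below $u_1(A_t)$ with $A_t$ contained in $Y\setminus Z$. The execution differs in two respects. The paper argues by contradiction at the \emph{last} item $a_1$ receives, deriving $u_1(X) < u_3(Y) - u_3(Y_1)$ and concluding $\min\{\Delta(X), -\Delta(Y_1)\}\le\frac 12$, then splitting on $\Delta(X)$ versus $\frac 12$ (using $SOL>1$ and $OPT\le\frac 32$ in one branch). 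You instead inspect the \emph{first} $Y$-item $a_1$ receives, where its bundle is exactly $X$ (legitimate, since $a_1$ receives front items in order), and you insert the averaging step $u_3(A_t)\le\frac 12 u_3(A_2\cup A_3)$ coming from the choice of $t$, which the paper does not use; this yields the direct and slightly sharper bound $u_3(Z) < 1 - u_3(X) - 2u_1(X)$, and together with $-\Delta(Z)\le\Delta(X)$ and a case split on $u_1(X)$ versus $\frac 13$ it gives $-\Delta(Z)\le\frac 13\bigl(1+\Delta(X)\bigr)$ without any contradiction argument. The steps you flagged as potential obstacles are indeed fine: the bundles of $a_2,a_3$ only grow toward the final $Y\setminus Z$, so $A_t\subseteq Y\setminus Z$ at the inspected moment, and the branch condition alone (no appeal to goodness or Lemma~\ref{lemma03}) suffices there. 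Both routes deliver the same $\frac 23$ guarantee, and your closing observation that the hypothesis $\max_{g\in X}u_1(g)\le\frac 13$ is not used matches the paper, which needs it only in the complementary scenario handled inside Theorem~\ref{thm25}.
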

\begin{proof}
If $A_1 = X$, then $\mcA$ is optimal, i.e., $SOL = OPT$.

Below we consider the scenario where $a_1$ receives some items from $Y$.
This means when {\sc Approx10} terminates, $k_1 > |X| + 1$ and $A_1 = \{g_1, \ldots, g_{k_1 - 1}\}$.
Let $Y_1 = \{g_{|X|+1}, \ldots, g_{k_1-1}\}$, that is, $A_1 = X \cup Y_1$.

We claim $\min \{\Delta(X), -\Delta(Y_1)\} \le \frac 12$.
Assume to the contrary, then $u_1(X) \ge \Delta(X) > \frac 12$ and $u_3(Y_1) \ge -\Delta(Y_1) > \frac 12$.
Consider the iteration where $a_1$ is allocated with the last item $g_{k_1-1}$.
Since at the beginning of the iteration the allocation is good, by Lemma~\ref{lemma03}, $a_1$ and $a_2$ do not envy each other, so do $a_1$ and $a_3$.
From $g_{k_1-1} \in Y$ and Lines 9--13 in the algorithm description, $a_t$ is envied by $a_1$, and thus $u_1(A_1 \setminus g_{k_1-1}) < u_1(A_2 \cup A_3)$.
It follows from $A_2 \cup A_3 = Y \setminus Y_1$ and Eq.~(\ref{eq01}) that
\[
u_1(X) < u_1(A_2 \cup A_3) < u_3(A_2 \cup A_3) = u_3(Y) - u_3(Y_1) < 1 - \frac 12 = \frac 12, \mbox{ a contradiction.}
\]

Since $A_2 \cup A_3 \subset Y$, we have $u_1(A_2 \cup A_3) < u_3(A_2 \cup A_3)$,
and therefore $SOL = u_1(A_1) + u_3(A_2 \cup A_3) > u_1(A_1) + u_1(A_2 \cup A_3) = 1$.

Using the claimed $\min \{\Delta(X), -\Delta(Y_1)\} \le \frac 12$, if $\Delta(X) \le \frac 12$, then by Eq.~(\ref{eq06}), $OPT \le \frac 32 \le \frac 32 \cdot SOL$;
if $\Delta(X) > \frac 12$, then $-\Delta(Y_1) \le \frac 12 < \Delta(X)$, and thus
$SOL = u_1(A_1) - u_3(A_1) + u_3(M) = \Delta(X) + \Delta(Y_1) + 1 > \frac 23 (\Delta(X) + 1) \ge \frac 23 OPT$, where the last inequality is by Eq.~(\ref{eq06}).
This proves the lemma.
\end{proof}

\begin{theorem}
\label{thm25}
If $\max_{g \in X} u_1(g) \le \frac 13$, then {\sc Approx10} produces a complete EF1 allocation with its total utility $SOL \ge \frac 35 OPT$.
\end{theorem}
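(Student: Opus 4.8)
The plan is to read off the structure of the returned allocation and then run a short case analysis on item preferences. By Lemma~\ref{lemma23}, the allocation $\mcA=(A_1,A_2,A_3)$ produced by {\sc Approx10} is good, complete and EF1, so only the ratio needs bounding. By Lemma~\ref{lemma24}, if $X\subseteq A_1$ then $SOL\ge\frac 23 OPT\ge\frac 35 OPT$ and we are done; so assume $X\not\subseteq A_1$. In {\sc Approx10} the bundle $A_1$ is only ever extended by a prefix item $g_{k_1}$, and only while $k_1\le|X|$, so in this case $A_1=\{g_1,\dots,g_p\}$ is a proper prefix of $X$ and $A_2\cup A_3=M\setminus A_1$. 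Hence $SOL=u_1(A_1)+u_3(M\setminus A_1)=\Delta(A_1)+1$, while by Eq.~(\ref{eq06}) $OPT\le\Delta(X)+1=\Delta(A_1)+\Delta(Z)+1$ with $Z=X\setminus A_1=\{g_{p+1},\dots,g_{|X|}\}\ne\emptyset$. Since $A_1\subseteq X$ gives $\Delta(A_1)\ge 0$, a one-line computation shows that $SOL\ge\frac 35 OPT$ follows once we prove $\Delta(Z)\le\frac 23$.

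Next I would use that, the items being in non-increasing preference order, every item of $A_1$ precedes every item of $Z$, so $\rho(A_1)\ge\rho(g_p)\ge\rho(g_{p+1})\ge\rho(Z)\ge 1$ (using the comment in Definition~\ref{def02} and $Z\subseteq X$). If $\rho(Z)\le\frac 53$, then
\[
\Delta(Z)=u_1(Z)-u_3(Z)=\bigl(\rho(Z)-1\bigr)u_3(Z)\le\tfrac 23 u_3(Z)\le\tfrac 23 u_3(M)=\tfrac 23,
\]
as wanted. The harder case is $\rho(Z)>\frac 53$; then $\rho(A_1)>\frac 53$ as well, so every $g\in A_1$ has $u_3(g)<\frac 35 u_1(g)\le\frac 35\cdot\frac 13=\frac 15$, hence $\max_{g\in A_1}u_3(g)<\frac 15$. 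In this case the goal reduces to showing $u_1(A_1)\ge\frac 13$, because then $\Delta(Z)\le u_1(Z)=u_1(X)-u_1(A_1)\le 1-\frac 13=\frac 23$.

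To get $u_1(A_1)\ge\frac 13$ it suffices (via $u_1(A_1)>\frac 53 u_3(A_1)$, noting $u_3(A_1)>0$ since otherwise $a_1$ would never be envied and could not be stuck) to show $u_3(A_1)\ge\frac 15$. For this I would revisit the iterations of the while-loop that concern $A_1$: since $A_1$ is never extended past $g_p$ even though $k_1=p+1\le|X|$ recurs, at those iterations the agent $a_t$ picked in Line~3 envies $a_1$, i.e.\ $u_3(A_t)<u_3(A_1)$, and from then on {\sc Approx10} only appends items to $a_2$ and $a_3$, always to the current $u_3$-minimizer. Combining this greedy balancing (which keeps $u_3(A_2)$ and $u_3(A_3)$ within one item's value of each other), the inequality $u_3(A_t)<u_3(A_1)$, the fact that the $X$-items then handed out have $u_3$-value $<\frac 15$, and EF1 of the final allocation (neither $a_2$ nor $a_3$ strongly envies $a_1$), one should force $u_3(A_1)\ge\frac 15$.

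I expect this last step — deriving $u_3(A_1)\ge\frac 15$ from the priority rule, the $u_3$-balancing of $a_2,a_3$, and EF1 — to be the main obstacle. It will require separating the two ways {\sc Approx10} can terminate in this branch (the last assigned item going to $A_1$ versus to some $a_t$), and in particular handling the possibility that one of $a_2,a_3$ grabs a $Y$-item of large $u_3$-value early on, for which a bare EF1 bound on bundle imbalance is too weak and one must invoke the greedy filling order directly.
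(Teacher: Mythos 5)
Your reduction of the theorem to the single inequality $\Delta(Z)\le\frac 23$ (your $Z$ is the paper's $X_1=X\setminus A_1$) is correct and matches the paper's key claim, and your easy subcase $\rho(Z)\le\frac 53$ is fine. The genuine gap is the hard subcase $\rho(Z)>\frac 53$: not only do you leave the step $u_3(A_1)\ge\frac 15$ unproven (you flag it as the ``main obstacle''), but both intermediate facts you plan to establish there, $u_3(A_1)\ge\frac 15$ and hence $u_1(A_1)\ge\frac 13$, are actually false for {\sc Approx10}. Concretely, take four items with $(u_1,u_3)$ values $g_1:(0.3,\,0.01)$, $g_2:(0.29,\,0.0097)$, $g_3:(0.25,\,0.009)$, $g_4:(0.16,\,0.9713)$; both functions are normalized, the preference order is $g_1,g_2,g_3,g_4$, $X=\{g_1,g_2,g_3\}$ and $\max_{g\in X}u_1(g)=0.3\le\frac 13$. {\sc Approx10} gives $g_1$ to $a_1$, then $g_4$ to $a_2$ and $g_3$ to $a_3$ (each time the $u_3$-minimizer, with value $0<0.01$, envies $a_1$), and in the last iteration $a_3$ still envies $a_1$ ($0.009<u_3(A_1)=0.01$) and receives $g_2$. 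The run ends with $A_1=\{g_1\}$ and $Z=\{g_2,g_3\}\ne\emptyset$, where $\rho(Z)\approx 28.9>\frac 53$, yet $u_3(A_1)=0.01<\frac 15$ and $u_1(A_1)=0.3<\frac 13$ (while $\Delta(Z)\approx 0.52\le\frac 23$, as it must be). The heuristic behind your plan fails because the loop stalls $a_1$ precisely when one of $a_2,a_3$ has $u_3$-value below $u_3(A_1)$, and this can persist with $u_3(A_1)$ arbitrarily small when a single back item carries almost all of the $u_3$ mass; EF1 and the greedy balancing of $a_2,a_3$ never force $a_1$'s bundle to be $u_3$-large.

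The paper closes this case quite differently, by contradiction: assume $\Delta(X_1)>\frac 23$, which \emph{yields} $u_1(A_1)<\frac 13$ (so your target inequality is the negation of the hypothesis, not a provable lemma), and then anchor the argument on the last item each of $a_2,a_3$ received. At that moment the receiver was the $u_3$-minimizer and envied $a_1$, so its bundle minus that item has $u_3$-value $<u_3(A_1)\le u_1(A_1)<\frac 13$; if both agents receive items of $X_1$ this gives $\Delta(X_1)\le -\Delta(Y)<\frac 23$, a contradiction, and if $X_1$ goes entirely to one agent then $u_1(g_{k_1})\ge u_1(X_1)-u_1(A_1)>\frac 13$ contradicts $\max_{g\in X}u_1(g)\le\frac 13$. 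To repair your proof you would need to replace the ``lower-bound $a_1$'s bundle'' step by an argument of this type, based on the last items handed to $a_2$ and $a_3$ rather than on any absolute lower bound for $u_3(A_1)$ or $u_1(A_1)$.
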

\begin{proof}
Lemma~\ref{lemma23} states that the final allocation $\mcA = (A_1, A_2, A_3)$ returned by the algorithm is complete and EF1.
If $X \subseteq A_1$, then by Lemma~\ref{lemma24} the total utility of $\mcA$ is at least $\frac 23 OPT$.

We next consider the other scenario where $A_1 \subset X$, i.e., the algorithm {\sc Approx10} terminates with $k_1 \le |X|$.
Let $X_1 = \{g_{k_1}, \ldots, g_{|X|}\}$, then $A_1 = \{g_1, \ldots, g_{k_1-1}\} = X \setminus X_1$ and $A_2 \cup A_3 = X_1 \cup Y$.

We claim that $\Delta(X_1) \le \frac 23$, and prove it by contradiction to assume $\Delta(X_1) > \frac 23$.
It follows that $u_1(X_1) > \frac 23$ and thus $u_1(A_1) \le u_1(M \setminus X_1) < \frac 13$.

From Lines 4--8 in the description of the algorithm, the item $g_{k_1}$ is assigned to agent $a_t$ in the last iteration of the while-loop
(in which $k_2 = k_1$ at the beginning of the iteration and $k_2$ is decremented afterwards leading to the termination condition $k_2 < k_1$).
That is, $g_{k_1}$ is the last item received by $a_t$.
Denote the other agent as $a_i$, i.e., $\{t, i\} = \{2, 3\}$.
Since $a_1$ is envied at the beginning of the iteration, $a_1$ is envied by $a_t$;
further because the allocation is good, $a_1$ does not envy $a_t$.
To summarize,
\begin{equation}
\label{eq07}
u_3(A_t \setminus g_{k_1}) \le u_3(A_i), \mbox{ }
u_3(A_t \setminus g_{k_1}) < u_3(A_1), \mbox{ and }
u_1(A_1) \ge u_1(A_t \setminus g_{k_1}).
\end{equation}
It follows from $A_1 \subseteq X$ that 
\begin{equation}
\label{eq08}
u_3(A_t \setminus g_{k_1}) < u_3(A_1) \le u_1(A_1) < \frac 13.
\end{equation}
One sees that if agent $a_i$ also receives some items from $X_1$, then the above argument applies too for the last item $a_i$ receives, denoted as $g_\ell$,
such that Eq.~(\ref{eq08}) holds, i.e., $u_3(A_i \setminus g_\ell) < \frac 13$.

Subsequently, if $A_2 \cap X_1 \ne \emptyset$ and $A_3 \cap X_1 \ne \emptyset$, then
\[
\Delta(X_1) \le \Delta(X) = - \Delta(Y) \le u_3(A_t \setminus g_{k_1}) + u_3(A_i \setminus g_\ell) < \frac 23, \mbox{ a contradiction.}
\]
If $X_1 \subseteq A_t$, then by Eq.~(\ref{eq07}),
\[
u_1(g_{k_1}) \ge u_1(A_t) - u_1(A_1) \ge u_1(X_1) - u_1(A_1) > \frac 13, \mbox{ contradicting to $\max_{g \in X} u_1(g) \le \frac 13$.}
\]
We have thus proved the claim that $\Delta(X_1) \le \frac 23$.

We linearly combine $\Delta(X_1) \le \frac 23$ and $\Delta(X_1) \le \Delta(X)$ to have
$\Delta(X_1) \le \frac 35 \cdot \frac 23 + \frac 25 \cdot \Delta(X) = \frac 25 + \frac 25 \Delta(X)$, and thus by Eq.~(\ref{eq06})
\[
SOL = u_1(X \setminus X_1) + u_3(X_1 \cup Y) = 1 + \Delta(X) - \Delta(X_1) \ge \frac 35 (1 + \Delta(X)) \ge \frac 35 OPT.
\]
This proves the theorem.
\end{proof}

\subsection{Case 2: $\max_{g \in X} u_1(g) > \frac 13$.}
In this case, we let $g^* = \arg \max_{g \in X} u_1(g)$;
thus $u_1(g^*) > \frac 13$, which is so big that it is assigned to agent $a_1$ immediately.
On the other hand, we can use $g^*$ to bound $OPT$ as follows:
\begin{equation}
\label{eq09}
OPT \le u_1(X) + u_3(Y) \le u_1(X) + u_3(M) - u_3(g^*) \le 2 - u_3(g^*) < \frac 53 + \Delta(g^*).
\end{equation}
Our algorithm for this case is very similar to {\sc Approx10}, with two changes:
One is the starting allocation set to be $(\{g^*\}, \emptyset, \emptyset)$,
and the other is, when $k_2 = |X|$, the while-loop terminates and the algorithm switches to the Envy-Cycle Elimination (ECE) algorithm to assign the rest of items.
The detailed description of the algorithm, denoted as {\sc Approx11}, is presented in Algorithm~\ref{Approx11}.

\begin{algorithm}
\caption{{\sc Approx11} for three agents with normalized utility functions}
\label{Approx11}
{\bf Input:} Three agents of two types and a set of $m$ indivisible items $\rho(g_1) \ge \rho(g_2) \ge \ldots \ge \rho(g_m)$,
	where $g^* = \arg\max_{g \in X} u_1(g)$ and $u_1(g^*) > \frac 13$.

{\bf Output:} A complete EF1 allocation.
\begin{algorithmic}[1]
\State Initialize $k_1=1$, $k_2=m$, and $\mcA = (\{g^*\}, \emptyset, \emptyset)$;
\State if $g_{k_1} = g^*$, then $k_1 = k_1 + 1$;

\While {($k_1 \le k_2$ and $k_2 > |X|$)}
\State find $t = \arg \min_{i = 2, 3} u_3(A_i)$;

\If {($k_1 \le |X|$)}
	\If {($a_1$ is not envied by $a_t$)}
	\State $A_1 = A_1 \cup \{g_{k_1}\}$, $k_1 = k_1 + 1$, if $g_{k_1} = g^*$ then $k_1 = k_1 + 1$; 
	\Else
	\State $A_t = A_t \cup \{g_{k_2}\}$ and $k_2 = k_2 - 1$;
	\EndIf
\Else
	\If {($a_t$ is not envied by $a_1$)}
	\State $A_t = A_t \cup \{g_{k_2}\}$ and $k_2 = k_2 - 1$;
	\Else
	\State $A_1 = A_1 \cup \{g_{k_1}\}$, $k_1 = k_1 + 1$, if $g_{k_1} = g^*$ then $k_1 = k_1 + 1$; 
	\EndIf
\EndIf
\EndWhile

\If {($k_2 = |X|$)}
	\State call the ECE algorithm on $\mcA$ to continue to assign the items in $\{g_{k_1}, \ldots, g_{k_2}\} \setminus \{g^*\}$;
\EndIf
\State return the final allocation.
\end{algorithmic}
\end{algorithm}

\begin{theorem}
\label{thm26}
If $\max_{g \in X} u_1(g) > \frac 13$, then {\sc Approx11} produces a complete EF1 allocation with its total utility $SOL \ge \frac 35 OPT$,
and the approximation ratio $\frac 35$ is tight.
\end{theorem}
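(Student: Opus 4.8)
The plan is to prove the statement in three stages: (i) {\sc Approx11} always outputs a complete EF1 allocation; (ii) its total utility obeys $SOL\ge\frac35 OPT$; and (iii) a matching instance. For (i) I would mimic the induction behind Lemma~\ref{lemma23}. The starting allocation $(\{g^*\},\emptyset,\emptyset)$ is trivially EF1, and it is good because every item of $X$ has preference $\ge1$ while every item of $Y$ has preference $<1$, so $A_1\prec A_i$ for $i=2,3$ holds automatically as long as $A_1\subseteq X$ and $A_i\subseteq Y$. One then checks that a single iteration of the while-loop — in the $a_1$-priority branch or the change-of-priority branch, with the $g^*$-skipping — preserves \emph{good} and \emph{EF1} by exactly the reasoning of Lemma~\ref{lemma23} (when $A_1$ acquires $Y$-items it has already absorbed all of $X$, so it remains a genuine prefix of the order). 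If the loop exits with $k_2>|X|$ (forcing $k_1>k_2$) the output is already complete and EF1; if it exits with $k_2=|X|$ the partial allocation is EF1, so the ECE step completes it to an EF1 allocation by~\cite{LMM04}.

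For (ii) I would split on the exit regime. If the loop exits with $k_2>|X|$, then since it runs only while $k_2>|X|$ all of $g_1,\dots,g_{|X|}$ were handed to $a_1$, so $A_1=X\cup Y_1$ with $\emptyset\ne Y_1\subseteq Y$; this is precisely the configuration treated in Lemma~\ref{lemma24}, whose proof uses neither the hypothesis $\max_{g\in X}u_1(g)\le\frac13$ nor the exact line numbers and transfers (reading the change-of-priority branch of {\sc Approx11} in place of Lines~9--13 of {\sc Approx10}), giving $SOL\ge\frac23 OPT\ge\frac35 OPT$. If the loop exits with $k_2=|X|$, then at that moment $a_2\cup a_3$ hold all of $Y$ and $a_1$ holds a bundle $A_1^0\subseteq X$ with $g^*\in A_1^0$; if $A_1^0=X$ then $SOL=u_1(X)+u_3(Y)=1+\Delta(X)\ge OPT$, so assume the ECE step still has to place $R:=\{g_{k_1},\dots,g_{|X|}\}\setminus\{g^*\}=X\setminus A_1^0\subseteq X$. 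The decisive observation is that ECE can only help: each rotation strictly raises $SOL$, and each item of $R$ lies in $X$ so it contributes at least its $u_3$-value wherever it lands; hence $SOL\ge u_1(A_1^0)+u_3(Y)+u_3(R)=1+\Delta(A_1^0)\ge 1+\Delta(g^*)$. Since $g^*\in X$ gives $u_3(g^*)\le u_1(g^*)$ and we are in the case $u_1(g^*)>\frac13$, we get $5u_1(g^*)-2u_3(g^*)\ge 3u_1(g^*)>1$, which is exactly the inequality equivalent to $\tfrac{2-u_3(g^*)}{1+\Delta(g^*)}\le\tfrac53$; combining with $OPT\le u_1(X)+u_3(M)-u_3(g^*)\le 2-u_3(g^*)$ from Eq.~(\ref{eq09}) yields $SOL\ge\frac35 OPT$.

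For (iii) I would use a normalized three-agent instance with $\max_{g\in X}u_1(g)$ just above $\frac13$: one item $g^*$ with $u_1(g^*)=\frac13+\epsilon$, $u_3(g^*)=\frac13$; two items $h_1,h_2$ with $u_1(h_i)=\frac13-\frac\epsilon2$, $u_3(h_i)=\epsilon'$ (so $u_1(X)=1$, i.e.\ $Y$ is worthless to $a_1$); and many tiny items forming $Y$ with $u_3(Y)=\frac23-4\epsilon'$. The allocation $(X,Y_2,Y_3)$ with $Y=Y_2\sqcup Y_3$ split evenly is EF1 (removing $g^*$ from $X$ leaves $u_3$-value $2\epsilon'$) and has value $\frac53-4\epsilon'$, so $OPT=\frac53-4\epsilon'$. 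In {\sc Approx11}, $a_1$ begins with $g^*$; since $\min(u_3(A_2),u_3(A_3))$ never reaches $\frac13=u_3(g^*)$ (the two type-$2$ bundles always sum to less than $\frac23$), agent $a_1$ is envied in every iteration, so all of $Y$ is distributed to $a_2,a_3$ and the loop exits with $A_1=\{g^*\}$, $k_2=|X|$; ECE (no cycle forms, since $a_1$ is envied while $u_1(A_2)=u_1(A_3)=0$) then hands $h_1,h_2$ to the two type-$2$ agents, leaving $SOL=1+\epsilon$. Hence $OPT/SOL\to\frac53$ as $\epsilon,\epsilon'\to0$, so the ratio $\frac35$ is tight.

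The main obstacle I anticipate is not the arithmetic but the bookkeeping around the pre-assigned item $g^*$ and the hand-off to ECE: one must verify that \emph{good} and \emph{EF1} genuinely survive the $g^*$-skipping inside the while-loop, that at the moment ECE is invoked the bundle $A_1^0$ is a subset of $X$ containing $g^*$ with $X\setminus A_1^0$ equal to the set ECE must place, and that the bound $SOL\ge 1+\Delta(A_1^0)$ is robust to ECE possibly rotating $g^*$ out of $a_1$'s bundle (it is, since rotations only increase $SOL$). Confirming the ``transfers'' claim for Lemma~\ref{lemma24} — in particular that the bundles $A_2,A_3$ seen mid-run are subsets of their final versions, so the displayed chain of inequalities still closes — also needs a careful line.
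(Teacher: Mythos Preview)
Your proof is correct and follows essentially the same architecture as the paper's: Lemma~\ref{lemma23}/\ref{lemma24} for the $k_2>|X|$ exit, then an ECE analysis for the $k_2=|X|$ exit yielding $SOL\ge 1+\Delta(g^*)$, combined with Eq.~(\ref{eq09}). The only substantive differences are cosmetic: in the ECE phase the paper tracks the invariant ``$\Delta(A_1)$ never decreases'' (arguing that any envy-cycle rotation through $a_1$ replaces $A_1$ by a bundle with both larger $u_1$-value and smaller $u_3$-value), whereas you track $SOL$ directly via ``rotations raise $SOL$, and each remaining $X$-item contributes at least its $u_3$-value''; and your tight instance (three $X$-items plus many tiny $Y$-items) differs from the paper's explicit five-item example, though both work --- note that normalization forces $u_3(Y)=\tfrac23-2\epsilon'$, not $\tfrac23-4\epsilon'$, a typo that does not affect your limit.
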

\begin{proof}
We distinguish the two termination condition of the while-loop in {\sc Approx11}.

In the first scenario, the while-loop terminates at $k_1 > k_2$, that is, all the items are assigned and
$k_2 \ge |X|$ implying $X \subseteq A_1$ in the returned allocation $\mcA = (A_1, A_2, A_3)$.
Note that the while-loop body is the same as the while-loop body in {\sc Approx10}.
It follows from Lemma~\ref{lemma23} that $\mcA$ is good, complete and EF1, and from Lemma~\ref{lemma24} that its total utility is $SOL \ge \frac 23 OPT$.

In the other scenario, the while-loop terminates at $k_1 \le k_2 = |X|$, that is, all the items of $Y$ have been assigned to $a_2$ and $a_3$ (i.e., $Y \subseteq A_2 \cup A_3$),
and the unassigned items are $g_{k_1}, \ldots, g_{k_2}$ excluding $g^*$.
By Lemma~\ref{lemma23}, the achieved allocation $\mcA$ is good and EF1.
Since $g^* \in A_1$, we have $\Delta(A_1) \ge \Delta(g^*)$.

We claim that the ECE algorithm does not decrease $\Delta(A_1)$.
To prove the claim, we assume in an iteration of the ECE algorithm, with the allocation $\mcA = (A_1, A_2, A_3)$ at the beginning,
agent $a_1$ gets the bundle $A_2$ or $A_3$ due to the existence of an envy cycle.
Further assume w.l.o.g. that $a_1$ gets bundle $A_2$, which means $a_1$ envies $a_2$ and the envy cycle is either $(1 \to 2 \to 3 \to 1)$ or $(1 \to 2 \to 1)$.
Either way, $u_1(A_1) < u_1(A_2)$ and $u_3(A_2) (< u_3(A_3)) < u_3(A_1)$.
Therefore, $\Delta(A_1) < \Delta(A_2)$.
This proves the claim.

The final allocation $\mcA = (A_1, A_2, A_3)$ returned from the ECE algorithm has its total utility
\[
SOL = u_1(A_1) + u_3(A_2 \cup A_3) \ge \Delta(A_1) + u_3(M) \ge \Delta(g^*) + 1 > \frac 35 OPT,
\]
where the last inequality is by Eq.~(\ref{eq09}).
We thus prove that the approximation ratio of {\sc Approx11} is at least $\frac 35$.

We next provide an instance below to show that this approximation ratio is tight.
In this instance, there are five items given in the order $\rho(g_1) > \rho(g_2) > \rho(g_3) > 1 > \rho(g_4) = \rho(g_5)$,
with their values to the three agents listed as follows, where $\epsilon> 0$ is a small value:
\begin{center}
\begin{tabular}{r|ccccc}
					&$g_1$			&$g_2$							&$g_3$							&$g_4$							&$g_5$\\
\hline
$a_1$				&$\frac 13$	&$\frac 13 - \epsilon$	&$\frac 13 + \epsilon$	&$0$								&$0$\\
$a_2, a_3$		&$\epsilon$	&$\epsilon$					&$\frac 13$					&$\frac 13 - \epsilon$	&$\frac 13 - \epsilon$
\end{tabular}
\end{center}
One sees that the instance is normalized, $X = \{g_1, g_2, g_3\}$, and $A^* = (\{g_1, g_2\}, \{g_3, g_4\}, \{g_5\})$ is an optimal EF1 allocation of total utility $\frac 53 - 3 \epsilon$.

Since $u_1(g_3) > \frac 13$, {\sc Approx11} is executed which starts with the allocation $(\{g_3\}, \emptyset , \emptyset)$:
In the first iteration, $a_1$ 	is envied by $a_2$ and $a_3$, and the allocation is updated to $(\{g_3\}, \emptyset, \{g_5\})$ or $(\{g_3\}, \{g_5\}, \emptyset)$;
in the second iteration, $a_1$ and $a_3$ (resp., $a_1$ and $a_2$) are envied by $a_2$ (resp., $a_3$), and the allocation is updated to $\mcA = (\{g_3\}, \{g_4\}, \{g_5\})$;
the procedure terminates with $k_2 = |X| = 3$.

Next, the ECE algorithm is called on $\mcA$ to continue to assign the items $g_1$ and $g_2$:
In the first iteration, since in $(\{g_3\}, \{g_4\}, \{g_5\})$ only $a_1$ is still envied by $a_2$ and $a_3$, $a_2$ and $a_3$ are not envied,
the ECE algorithm assigns $g_1$ to $a_2$ (or to $a_3$, symmetrically) resulting in the updated allocation $(\{g_3\}, \{g_1, g_4\}, \{g_5\})$;
in the second iteration, the ECE algorithm assigns the last item $g_2$ to $a_3$, ending at the complete allocation $\mcA = (\{g_3\}, \{g_1, g_4\}, \{g_2, g_5\})$.
This final allocation $\mcA = (\{g_3\}, \{g_1, g_4\}, \{g_2, g_5\})$ has total utility $1$.
Therefore, we have $OPT / SOL = (\frac 53 - 3 \epsilon) / 1 \to \frac 53$, when $\epsilon$ tends to $0$.
\end{proof}

\section{Conclusion}
Fair division of indivisible goods is an interesting problem which has received a lot of studies,
from multiple research communities including artificial intelligence and theoretical computer science.
Finding an EF1 allocation to maximize the utilitarian social welfare from the perspective of approximation algorithms emerges most recently.
Despite EF1 being one of the most simplest fairness criteria, the design and analysis of approximation algorithms for this problem in the general case seems challenging~\cite{BBS20,BLL25}.
In this paper, we focused on the special case where agents are of only two types,
and we hope our work may shed lights on and inspire more studies for the general case.
For this special case, we presented a $2$-approximation algorithm for any number of agents with normalized utility functions;
by the lower bound of $\frac {4n}{3n+1}$ from \cite{BLL25}, our result shows the problem in this special case is APX-complete.
When there are only three agents, we presented an improved $\frac 53$-approximation algorithm.
When there are only three agents but the utility functions are unnormalized,
we presented a tight $2$-approximation algorithm which is the best possible by the lower bound of $\frac {1 + \sqrt{4n-3}}2$ from \cite{BLL25}.
In all three algorithms, we demonstrate the use of the item {\em preference} (Definition~\ref{def01}) order, which can be explored further for improved algorithms.

We remark that the lower bound of $\frac {4n}{3n+1}$ on the approximability in~\cite{BLL25} is proven for a more restricted case
where the two utility functions are normalized, agent $a_1$ uses a utility function and all the other agents use the other function.
It would be interesting to narrow the gap for either the special case we study 
or this more restricted case;
we expect some improved lower bounds or some approximation ratios as functions in $n_1$ and $n_2$,
where $n_1$ agents use a common utility function and the other $n_2 = n - n_1$ use the second utility function.



\end{document}